\newcommand{\ccF}{{\mathscr F}}
\newcommand{\ccG}{{\mathscr G}}
\renewcommand{\F}{\ccF}
\renewcommand{\G}{\ccG}
\newcommand{\bbF}{\mathbb{F}}
\newcommand{\cI}{\mathcal{I}}
\newcommand{\bphi}{\boldsymbol \phi}
\newcommand{\bpsi}{\boldsymbol \psi}
\newcommand{\barA}{\overline{\text{SA}}}
\newcommand{\Ind}{\mathds 1}
\newcommand{\bx}{{\mathbf x}}
\newcommand{\bgamma}{{\boldsymbol \gamma}}
\title[Generalized statistical arbitrage concepts]{Generalized statistical arbitrage concepts and related gain strategies}
\author{Christian Rein}
\author{Ludger R\"uschendorf}
\address{Freiburg University, Dep. of Mathematics, Ernst-Zermelo Str. 1, 79104 Freiburg, Germany.}
	\email{ch.rein@gmx.net, ruschen@stochastik.uni-freiburg.de}
\author{Thorsten Schmidt}
\address{Freiburg Institute of Advanced Studies (FRIAS), Germany. 
 University of Strasbourg Institute for Advanced Study (USIAS), France. 
 University of Freiburg, Department of Mathematical Stochastics, Ernst-Zermelo-Str. 1, 79104 Freiburg, Germany}
 	\email{thorsten.schmidt@stochastik.uni-freiburg.de}
\date{\today}
\begin{document}

\maketitle

\begin{abstract}
Generalized statistical arbitrage concepts are introduced corresponding  to  trading strategies which yield positive gains \emph{on average} in a class of scenarios rather than almost surely. The relevant scenarios or market states are specified via an information system given by a   $\sigma$-algebra and so this notion contains classical arbitrage as a special case. It also covers the notion of \emph{statistical arbitrage} introduced in \cite{bondarenko2003statistical}. 

Relaxing these notions further we introduce generalized profitable strategies which include also static  or semi-static strategies. 
Under standard no-arbitrage  there may exist generalized gain strategies yielding positive gains on average under the specified scenarios. 

In the first part of the paper we characterize these generalized statistical no-arbitrage notions. In the second part of the paper we construct several profitable generalized strategies with respect to various choices of the information system. In particular, we consider several forms of embedded binomial strategies and follow-the-trend strategies as well as partition-type strategies. We study and compare their behaviour on simulated data. Additionally, we find good performance on market data of these simple strategies which makes them profitable candidates for real applications.
\end{abstract}

\section{Introduction}

Since the mid-1980s trading strategies which offer profits on average in comparison to little remaining risk have been implemented and analyzed. The starting point were  pairs trading strategies, see \cite{Gatev2006} for an historic account and further details. In this strategy  one trades two stocks whose prices have a high historic correlation and whose spread widened recently by buying the looser and shorting the winner. Many variants of this simple strategy followed, see \cite{Krauss2017} for a survey and a guide to the literature. This raised interest in a deeper theoretical understanding of these approaches. %

In this paper, we elaborate on the notion of \emph{statistical arbitrage} (SA)  introduced in \cite{bondarenko2003statistical}. 
The author considers a finite horizon market in order to restrict the class of admissible pricing rules. A trading strategy with zero initial cost is called statistical arbitrage  if 
\begin{enumerate}[(i)]
\item 
the expected payoff is positive and,
\item 
the \emph{conditional} expected payoff is non-negative in each final state of the economy.
\end{enumerate}

Unlike pure arbitrage strategies a statistical arbitrage can have negative payoffs provided the average payoff in each final state is non-negative. This concept supplements previous forms of restrictions like `good deals'  or opportunities with high Sharpe ratios or with high utility (see \cite{HansenJagannathan91},   \cite{CochraneSaaRequejo2000} and \cite{CernyHodges2002}) or `approximate arbitrage opportunities' and investment opportunities with a high gain-loss ratio (see \cite{BernardoLedoit2000}). All these restrictions lead to essential reductions of the pricing intervals.  

\cite{bondarenko2003statistical} discusses the concept of statistical arbitrage in connection with various forms of risk preferences, w.r.t. the solution of the joint hypothesis problem, for tests of the efficient market hypothesis (EMH) and the efficient learning market (ELM). The main economic assumption introduced by Bondarenko is the assumption that the pricing kernel is path independent, i.e.~it is a function depending only on the final state of the underlying price model but \emph{not} depending on the whole history.
This assumption implies that the payoff process deflated by the conditional risk neutral density of the final state is a martingale, i.e. has no systematic trend. 
The main result in \cite[Proposition 1]{bondarenko2003statistical} states that the existence of a path-independent pricing kernel is equivalent to the absence of SA strategies.

Following \cite{HoganJarrow2004}, another  strand of literature considers trading strategies which achieve positive gains on average together with vanishing risk in an asymptotic sense, see for example \cite{Elliott2005pairs,Avellaneda2010statistical}.

In Section 2 we generalize the concept of statistical arbitrage: starting from a general information system given by a $\sigma$-field $\ccG$, a \emph{statistical $\ccG$-arbitrage} is a trading strategy with positive expected gain conditionally on $\ccG$. The existence of a pricing measure with $\ccG$-measurable density implies absence of $\ccG$-arbitrage. 
Investigating in Section 3 in detail a class of trinomial models we find that the converse direction in Bondarenko's equivalence theorem is not valid in general. For two-period binomial models we fully characterize SA and construct statistical arbitrage strategies.  
In Section 4 we introduce generalized trading strategies including also static or semi-static strategies and derive various characterizations of the corresponding SA concepts; in particular we give conditions which imply equivalence results with the existence of $\ccG$-measurable pricing densities. 
In Section 5 we construct for discrete and continuous time models various SA-strategies, test them in several examples and give an application to market data.  
A basic class of strategies is obtained by embedding binomial trading strategies into the continuous time models using first-hitting times. Further classes are strategies induced by partitioning the path space and strategies which follow some trend in the data.

Several of theses strategies are examined and compared. As a result we obtain some  useful gain strategies and suggestions relevant for practical applications.

\section{Generalized gain strategies}
Consider a filtered probability space  $(\Omega, \ccF, P)$ with a filtration $\bbF=(\F_t)_{0 \leq t \leq T}$. The filtration is assumed to satisfy the usual conditions, \ie it is right continuous and $\F_0$ contains all null sets of $\F$: if $B \subset A \in \F$ and $P(A) = 0$ then $B \in \F_0$. We also suppose that  $\F=\F_T$. 

Following the classical approach to financial markets as for example in \cite{delbaen2006mathematics}, we consider a finite time horizon $T \in \N$.  The market itself is given by a $\R^{d+1}$-valued locally bounded non-negative semi-martingale $S=(S^0,\dots,S^d)$. The num\'eraire $S^0$ is set equal to one, such that the prices are considered as already discounted.

A \emph{dynamic trading strategy }$\phi$ is an $S$-integrable and predictable process such that the associated value process  $V=V(\phi)$ is given by
\begin{equation}
 V_t(\phi) = \int_0^t \phi_s \dS_s, \quad 0 \le t \le T.
\end{equation}
The trading strategy $\phi$ is called $a$-admissible if $\phi_0=0$ and $V_t(\phi) \geq -a$ for all $t \geq 0$. $\phi$ is called \emph{admissible} if it is admissible for some $a>0$.
We further assume that the market is free of arbitrage in the sense of \emph{no free lunch with vanishing risk} (NFLVR), which is equivalent to the existence of an equivalent local martingale measure $Q$, see \cite{delbaen2006mathematics}. Here, a measure $Q$ which is equivalent to $P$,  $Q\sim P$, such that $S$ is a $\bbF$-(local) martingale with respect to $Q$ is called equivalent (local) martingale measure, EMM (ELMM). Let $\Me$ denote the set of all equivalent local martingale measures.

A statistical arbitrage is a dynamic trading strategy which is \emph{on average} profitable, 
conditional on the final state of the economy $S_T$. More generally, we consider a general information system represented by a $\sigma$-field  $\G \subset \F_T$ and consider strategies which are on average profitable conditional on $\G$. For example, $\ccG$ could be generated by the event $\{S_T>K\}$, or the events $S_T \in K_i$, where $(K_i)_{i\in\cI}$ is a partition of $\R^d$, or by  $\{\max_{0 \le t \le T} S_t > K\}$. We call such strategies $\ccG$-arbitrage strategies. 
Sometimes we call a statistical $\G$-arbitrage strategy also a $\G$-profitable strategy or $\G$-arbitrage, for short. By $E$ we denote expectation with respect to the reference measure $P$.

\begin{definition}%
\label{Def.I.1}
  Let $\G \subseteq \F_T$ be a $\sigma$-algebra.
 An admissible dynamic trading strategy $\phi$ is called a \emph{statistical $\G$-arbitrage strategy}, if $V_T(\phi)\in L^1(P)$ and
 \begin{enumerate}[i)]
  \item $E[V_T(\phi) | \G] \geq 0, \quad P\text{-a.s.},$
  \item $E[V_T(\phi)] > 0$.
 \end{enumerate}
 Let 
 $$ \AG := \{\phi: \phi \text{ is a } \G\text{-arbitrage} \} $$ denote the set of all statistical $\G$-arbitrage strategies.
  The market model satisfies the condition of \emph{no statistical $\G$-arbitrage $\NAG$} if $$\AG = \emptyset. $$ 
\end{definition}

For $\G = \F_T$,  $\NAG$ is equivalent to the classical no-arbitrage condition (NA) since then $E[V_T(\phi)|\G]=V_T(\phi)$. Recall that NA is implied by NFLVR.
If $\G=\sigma(S_T)$, one recovers the notion of statistical arbitrage introduced in  \cite{bondarenko2003statistical}. 

A further interesting type of  examples is the case where 
  $\G=\sigma(\{ S_T \in K_i, i \in \cI\})$,  $\{K_i\}_{i \in \cI}$ being a partition of the state space, such that a statistical arbitrage offers a gain  in any  $\{ S_T \in K_i \}$ \emph{on average}, i.e.~$E[V_T(\phi)|S_T \in K_i] \ge  0$ for all $i \in \cI$.
Similarly one can also consider path-dependent strategies, like for example
$\G=\sigma(\{ \max_{0 \le t \le T}S_t \in K_i, i \in \cI\})$.

\begin{remark}[Relation to good-deal bounds]
The general approach to good-deal bounds  in  \cite{CernyHodges2002} allows to consider statistical arbitrages as a special case: indeed, if we define
$$ A= \{Z: E[Z|\ccG] \ge 0 \text{ and } E[Z]  >0 \}$$
as set of good deals then a statistical $\ccG$-arbitrage $\phi$ is a good deal strategy if $V_T(\phi) \in A$. 
The corresponding good-deal pricing bound is given by
$$ \pi(X) = \inf\{ x: \exists \phi \text{ admissible s.t. } X + x + V_T(\phi) \in A\}. $$
   \end{remark}
   
\begin{remark} We note some easy consequences of  Definition \ref{Def.I.1}.
 \begin{enumerate}[(i)]
  \item The tower property of conditional expectations  immediately yields  that  larger information systems $\G$ \,allow for less profitable $\G$-arbitrage strategies \ie  $\G_1 \subset \G_2$ implies that $ \text{SA}(\G_2) \subset \text{SA}(\G_1)$.
   As a consequence we get that in this case
  \begin{equation}\label{Eq.I.2}
   \text{NSA}(\G_1) \hspace*{5ex} \Rightarrow \hspace*{5ex} \text{NSA}(\G_2).
  \end{equation}
  \item  If $\G = \{\emptyset, \Omega\}$, then $\phi \in \AG$ iff $E_P [V_T(\phi)] > 0$.
 \end{enumerate}
\end{remark}

\section{On the statistical no-arbitrage notion}

  The notion of no statistical arbitrage is motivated by the question  whether it is possible to construct a trading strategy $\phi$ such that in any final state of the price process $S_T$ the trader gets a gain on average (\ie conditional on $\sigma(S_T)$). 

Proposition 1 in \cite{bondarenko2003statistical} states that (in discrete time),  NSA  is equivalent to the existence of an equivalent martingale measure $Q$ with path independent density $Z$, \ie 
\begin{equation}\label{Eq.I.3}
 \frac{dQ}{dP} = Z \in \sigma(S_T),
\end{equation}
where we use the notation $Z \in \sigma(S_T)$ for $Z$ being $\sigma(S_T)$-measurable.
We show in Section \ref{sec:counterexample}, that this equivalence needs additional assumptions which is one motivation of our work. In Section \ref{sec:statarbitrages} we explicitly construct statistical arbitrages whose study is the second motivation of our work.

On the other side, existence of an equivalent martingale measure with path independent density $Z$ implies that NSA holds without further assumptions. This also holds true for  the generalized notion $\NAG$, as we now show.

\begin{proposition}\label{Prop.I.1}
 If there exists $Q\in\Me$ such that $\frac{dQ}{dP}$ is $\G$-measurable, then $\NAG$ holds. 
\end{proposition}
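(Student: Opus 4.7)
The plan is to argue by contradiction: assume $\phi$ is a statistical $\G$-arbitrage, and derive a contradiction to condition (ii) of Definition \ref{Def.I.1} using the martingale measure $Q$ with $\G$-measurable density $Z := dQ/dP$.

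First I would invoke the standard fact that, under NFLVR, if $\phi$ is admissible (so $V_t(\phi) \geq -a$) and $Q \in \Me$, then the value process $V(\phi)$ is a $Q$-local martingale which is uniformly bounded from below, hence a $Q$-supermartingale by the Ansel--Stricker lemma. Since $V_0(\phi) = 0$, this yields
\begin{equation*}
 E_Q[V_T(\phi)] \leq 0.
\end{equation*}
(Integrability of $V_T(\phi)$ under $Q$ is automatic for supermartingales bounded from below.)

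Next I would rewrite $E_Q[V_T(\phi)]$ in terms of the reference measure $P$ using the density $Z$, and then exploit the $\G$-measurability to pull $Z$ inside a conditional expectation:
\begin{equation*}
 E_Q[V_T(\phi)] = E[Z\,V_T(\phi)] = E\bigl[E[Z\,V_T(\phi)\mid \G]\bigr] = E\bigl[Z\, E[V_T(\phi)\mid \G]\bigr].
\end{equation*}
Since $Q \sim P$, we have $Z > 0$ $P$-a.s., and condition (i) gives $E[V_T(\phi)\mid\G] \geq 0$ $P$-a.s. Hence the right-hand side is nonnegative, so $E_Q[V_T(\phi)] \geq 0$. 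Combined with the supermartingale bound, $E_Q[V_T(\phi)] = 0$, which forces $Z\, E[V_T(\phi)\mid \G] = 0$ $P$-a.s., and thus $E[V_T(\phi)\mid \G] = 0$ $P$-a.s. Taking unconditional expectations contradicts condition (ii).

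I do not anticipate a genuinely hard step. The only subtlety is justifying that $V(\phi)$ is a $Q$-supermartingale (rather than just a local martingale) with $V_T(\phi) \in L^1(Q)$; this is a standard consequence of admissibility together with Fatou's lemma and is routinely used in the Delbaen--Schachermayer framework cited in the paper. The rest of the argument is just the tower property together with strict positivity of $Z$.
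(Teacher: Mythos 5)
Your proof is correct and follows essentially the same route as the paper: both arguments combine the $Q$-supermartingale property of the admissible value process (Fatou/Ansel--Stricker, giving $E_Q[V_T(\phi)]\le 0$) with the $\G$-measurability of $Z$ to compare conditional expectations under $P$ and $Q$, and derive $E_P[V_T(\phi)\mid\G]=0$, contradicting condition (ii). Your "pull out the known factor" computation is just the paper's Bayes-formula step written out explicitly, so there is nothing substantively different to flag.
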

\begin{proof}
 The proof follows from the Bayes-formula for conditional expectations. If  $Z = \frac{dQ}{dP} \in \G$, then for any $X \in L^1( P)$ it  holds that
 \begin{equation}\label{Eq.I.4}
  E_P[X \sd \G] = \frac{E_Q[XZ \sd \G]}{E_Q[Z \sd \G]} = E_Q[X \sd \G].
 \end{equation}
 If there would be a statistical arbitrage strategy $\phi$ with 
  $E_P[X \sd \G] \geq 0$ and $E_P[X] > 0$, where $X=V_T(\phi)\in L^1(P)$, then, by (\ref{Eq.I.4}),
  \begin{equation*}
   E_Q[X \sd \G] \geq 0, \hspace*{5ex} \quad Q\text{-a.s.}
  \end{equation*}

 Moreover, since $\phi$ is admissible, $V(\phi)$ is a $Q$-supermartingale by Fatou's lemma, and we obtain 
 that 
  \begin{equation}\label{supermartingaleproperty}
   E_Q[X] =  E_Q[V_T(\phi)] \le V_0(\phi)=0.
  \end{equation}
Hence, 
  \begin{equation*}
   0 = E_Q[X \sd \G] = E_P[X \sd \G]
  \end{equation*}
  in contradiction to $E_P[X] > 0$.
\end{proof}
\begin{remark}[Alternative admissible strategies]
An inspection of the proof, in particular Equation \eqref{supermartingaleproperty}, shows that the claim also holds when we consider as admissible such strategies $\phi$ for which $V(\phi)$ is a $Q$-martingale.\end{remark}

In the following we discuss whether also the converse direction in the Bondarenko result is true, \ie the question if no statistical arbitrage implies the existence of an equivalent martingale measure with path-independent density. Moreover we study the question  how statistical $\G$-arbitrage strategies can be constructed.

\subsection{Statistical arbitrage in trinomial models}\label{sec:trinomial}
In this section we consider a special one-dimensional trinomial model of the following type which we will call \emph{the trinomial model}. 
While the first step is binomial, the second time-step is trinomial. In this regard,
assume that $d=1$, $\Omega = \{\omega_1, \dots, \omega_6\}$ and $T=2$. Let $S_0 = s_0 \in \R_{\ge 0}$ and $S_1$ take the \emph{two} values $s_1^+$ and $s_1^-$ such that 
\begin{equation*}
 S_1(\omega_1) = S_1(\omega_2) = S_1(\omega_3)=s_1^+, \quad S_1(\omega_4) = S_1(\omega_5) = S_1(\omega_6)=s_1^-. 
\end{equation*}
The existence of an equivalent martingale measure $Q \sim P$ is equivalent to  $\Delta S_i = S_i - S_{i-1}$ taking positive as well as negative values in each sub-tree. 
For the first time step we assume without loss of generality that $s_1^+-s_0 >0 $ and $s_1^--s_0<0$.

For the second step we assume that the model takes the four values $s_2^{++}$, $s_2^{+-}$, $s_2^{--}$ and the top state $s_2^{\circ}$ with $s_2^{\circ}>s_2^{++}>s_2^{+-}>s_2^{--}>0$. While the $+/-$ states are reached by following a standard binomial, recombining two-period model, i.e.
\begin{align*}
S_2(\omega_2)=s_2^{++},\quad S_2(\omega_3)=S_2(\omega_5)=s_2^{+-}, \quad S_2(\omega_6)=s_2^{--}, 
\end{align*} 
the top state is reached by
$$ S_2(\omega_1)=S_2(\omega_4)=s_2^{\circ}. $$
We illustrate the scheme in Figure \ref{Fig.II.7}.

\begin{figure}[t]
  \centering
  \begin{tikzpicture}[>=stealth, sloped]
    \matrix (tree) [%
      matrix of nodes,
      minimum size=0.5cm,
      column sep=1.5cm,
      row sep=0.5cm,
    ] 
    {
      &     & $S_2(\omega_1) = S_2(\omega_4)$ &\\  
      &     &         &\\
      &         & $S_2(\omega_2)$     &\\
      & $S_1(\omega_1)$ &             &\\
        $S_0(\omega)$ &         & $S_2(\omega_3) = S_2(\omega_5)$   &\\
      & $S_1(\omega_4)$ &             &\\
      &         & $S_2(\omega_6)$       &\\
    };
    \draw[-] (tree-5-1)   -- (tree-4-2);%
    \draw[-] (tree-5-1)   -- (tree-6-2);
    
    \draw[-] (tree-4-2)   -- (tree-3-3);%
    \draw[-] (tree-4-2)   -- (tree-1-3);%
    \draw[-] (tree-4-2)   -- (tree-5-3);
    \draw[-] (tree-6-2)   -- (tree-5-3);%
    \draw[-] (tree-6-2)   -- (tree-1-3);    
    \draw[-] (tree-6-2)   -- (tree-7-3);

  \end{tikzpicture}
  \caption{The considered  trinomial model with $T=2$ time steps. The first step is binomial, the second step is also (recombining) binomial with an additional top state $\{\omega_1,\omega_4\}$. }\label{Fig.II.7}
\end{figure}
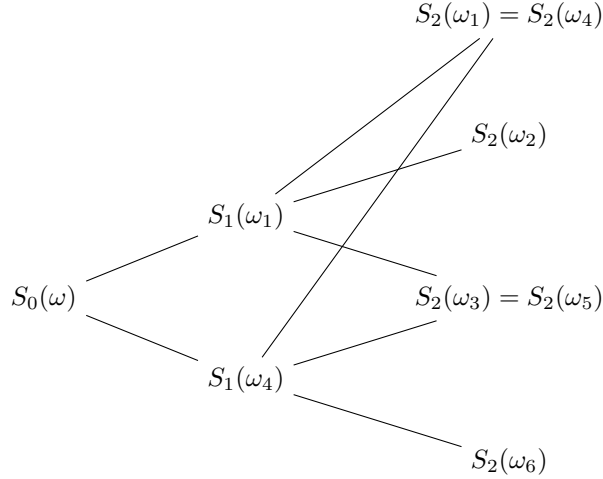

To ensure absence of arbitrage we assume that $s_2^{++}-s_1^+>0$, $s_1^-<s_2^{+-}<s_1^+$, $s_2^{--}-s_1^-<0$.
The gains from trading at time $2$ with a self-financing strategy $\phi$ are given by 
\begin{equation}\label{eq:vphi}
 V_2(\phi) = \phi_1 \Delta S_1 + \phi_2 \Delta S_2.
\end{equation}
While $\phi_1$ is constant since $\ccF_0=\{\emptyset,\Omega\}$, $\phi_2$ can take two different values which we denote by $ \phi^+_2$ and $\phi^-_2$ (taken in the states $\{\omega_1,\omega_2,\omega_3\}$ and $\{\omega_4,\omega_5,\omega_6\},$ respectively).  

Since $\G= \F_2 = \sigma(\{\omega_1, \omega_4\}, \{\omega_3, \omega_5\}, \{\omega_2\}, \{\omega_6\})$ the strategy $\phi$ is a statistical arbitrage if and only if
\begin{align}
 \phi_1 \Delta S_1(\omega_2) + \phi_2^+ \Delta S_2(\omega_2) &\geq 0 \label{CEeq1}, \\[1ex]
 \phi_1 \Delta S_1(\omega_6) + \phi_2^- \Delta S_2(\omega_6) &\geq 0 \label{CEeq2}, \\[1ex]
 \begin{split}\phi_1 \Delta S_1(\omega_1)P(\omega_1) + \phi_2^+ \Delta S_2(\omega_1) P(\omega_1) & \\
 + \phi_1 \Delta S_1(\omega_4)P(\omega_4) + \phi_2^- \Delta S_2(\omega_4) P(\omega_4) &\geq 0, \end{split} \label{CEeq3} \\[1ex]
 \begin{split}\phi_1 \Delta S_1(\omega_3)P(\omega_3) + \phi_2^+ \Delta S_2(\omega_3) P(\omega_3) & \\
 + \phi_1 \Delta S_1(\omega_5)P(\omega_5) + \phi_2^- \Delta S_2(\omega_5) P(\omega_5) &\geq 0, \end{split} \label{CEeq4}
\end{align}
and, in addition, at least one of the inequalities is strict. \\

Moreover, if we consider an equivalent martingale measure $Q$ then the density $Z$ is path-independent if and only if $Z(\omega_1) = Z(\omega_4)$ and $Z(\omega_3) = Z(\omega_5)$. As a next step we  establish a criterion for our model to be free of statistical arbitrage. Denote 
  \begin{eqnarray*}
   \Gamma_1 &=& \frac{-\Delta S_1(\omega_5) + \Delta S_2(\omega_5) \frac{\Delta S_1(\omega_6)}{\Delta S_2(\omega_6)}}{\Delta S_1(\omega_3) - \Delta S_2(\omega_3)\frac{\Delta S_1(\omega_2)}{\Delta S_2(\omega_2)}},\\
   \Gamma_2 &=& \frac{\frac{\Delta S_1(\omega_6)}{\Delta S_2(\omega_6)} (\Delta S_2(\omega_4) + \Delta S_2(\omega_5)) - \Delta S_1(\omega_4) - \Delta S_1(\omega_5)}{\Delta S_1(\omega_3) - \Delta S_1(\omega_1)\frac{\Delta S_2(\omega_3)}{\Delta S_2(\omega_1)}}.
  \end{eqnarray*}

\begin{lemma}\label{CElemma1}
 Let $\nu_1 := \frac{P(\omega_1)}{P(\omega_4)}$ and $\nu_2 := \frac{P(\omega_3)}{P(\omega_5)}$. In the trinomial model there is no statistical arbitrage if $\nu_1 = -\frac{\Delta S_2(\omega_3)}{\Delta S_2(\omega_1)} \nu_2$ and if it holds that
 \begin{equation}\label{CEeq8}
  \Gamma_1 < \nu_2 \leq \Gamma_2.
 \end{equation}
\end{lemma}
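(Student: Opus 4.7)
My plan is to prove the lemma by contradiction: assume that a statistical arbitrage $\phi=(\phi_1,\phi_2^+,\phi_2^-)$ exists and show that the inequalities (\ref{CEeq1})--(\ref{CEeq4}), together with the hypotheses on $\nu_1$ and $\nu_2$, force $V_2(\phi)\equiv 0$ on all paths, which precludes the strict inequality required of a statistical arbitrage. The whole argument reduces to linear manipulations in three unknowns $\phi_1,\phi_2^+,\phi_2^-$ subject to four linear constraints, so the target is to see that under the stated conditions the feasibility cone degenerates to the trivial ray.

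The first step is to use the signs $\Delta S_2(\omega_2)>0$ and $\Delta S_2(\omega_6)<0$ to rewrite (\ref{CEeq1}) and (\ref{CEeq2}) as
\[
\phi_2^+ \;=\; -\phi_1\,\frac{\Delta S_1(\omega_2)}{\Delta S_2(\omega_2)} + \alpha,\qquad
\phi_2^- \;=\; -\phi_1\,\frac{\Delta S_1(\omega_6)}{\Delta S_2(\omega_6)} - \beta,
\]
with $\alpha,\beta\ge 0$, and strict inequality in (\ref{CEeq1}) (resp.\ (\ref{CEeq2})) encoded by $\alpha>0$ (resp.\ $\beta>0$). Next I substitute these expressions into (\ref{CEeq3}) and (\ref{CEeq4}) after dividing them by $P(\omega_4)$ and $P(\omega_5)$ respectively, so that $\nu_1,\nu_2$ appear explicitly. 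The hypothesis $\nu_1 = -\frac{\Delta S_2(\omega_3)}{\Delta S_2(\omega_1)}\nu_2$ is engineered so that the coefficient of $\alpha$ in the sum of the two substituted inequalities vanishes, eliminating $\phi_2^+$ entirely from the summed constraint.

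The summed constraint, after grouping using $\nu_1+\nu_2 = \nu_2\,(\Delta S_2(\omega_1)-\Delta S_2(\omega_3))/\Delta S_2(\omega_1)$, rearranges (up to a positive factor) to
\[
(\nu_2 - \Gamma_2)\,\phi_1 \;\geq\; c_2\,\beta,
\]
for some positive constant $c_2$ built from the $\Delta S_i$'s. Since $\nu_2 \leq \Gamma_2$, the left-hand side is non-positive while the right-hand side is non-negative, forcing both to be zero; in particular $\beta=0$, and $\phi_1=0$ whenever $\nu_2<\Gamma_2$. A parallel reduction of (\ref{CEeq4}) alone (plugging in $\beta=0$ and using the sign $\Delta S_2(\omega_3)<0$ to handle the $\alpha$-term) produces
\[
(\nu_2-\Gamma_1)\,\phi_1 \;\geq\; c_1\,\alpha
\]
with $c_1>0$; combined with $\nu_2>\Gamma_1$ this yields $\phi_1\ge 0$ and an upper bound on $\alpha$ proportional to $\phi_1$. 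When $\nu_2<\Gamma_2$ the earlier step already gave $\phi_1=0$, hence $\alpha=0$, hence all four inequalities are equalities and no strict inequality remains.

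The main obstacle I expect is the boundary case $\nu_2=\Gamma_2$: there the summed inequality is degenerate ($\phi_1$-coefficient vanishes) and does not directly rule out $\phi_1>0$. To close this case I would use (\ref{CEeq3}) \emph{separately} (not in the sum): with $\beta=0$ already imposed and $\nu_2=\Gamma_2$, the summed inequality must be an equality, so (\ref{CEeq3}) and (\ref{CEeq4}) individually are both equalities. The two resulting linear relations determine $\alpha$ in terms of $\phi_1$ in two different ways, and the identity $\nu_1\Delta S_2(\omega_1)+\nu_2\Delta S_2(\omega_3)=0$ shows these two expressions are incompatible unless $\phi_1=\alpha=0$. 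The other delicate point throughout is keeping the signs of the various $\Delta S_j(\omega_i)$ straight — in particular that $\Delta S_2(\omega_3)<0$ while the other second-step increments inside each subtree are positive — so that the directions of the inequalities compose correctly.
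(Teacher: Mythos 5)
Your reduction is essentially the paper's own argument in different coordinates (your slack variables $\alpha,\beta$ play the role of the column operations in the appendix), and the interior case is on the right track, but the pivotal sign step is circular as written. From $(\nu_2-\Gamma_2)\phi_1\ge c_2\beta$ you assert that the left-hand side is non-positive; that presupposes $\phi_1\ge 0$, which you only obtain afterwards from \eqref{CEeq4} --- and there you explicitly plug in $\beta=0$, which was the output of the step in question. If $\phi_1<0$, the summed constraint alone is perfectly compatible with $\beta>0$, so at that stage nothing is forced (the correct conclusion from the sum alone is $\phi_1\le 0$, which is what the paper derives). The repair is immediate and uses only your own ingredients: treat \eqref{CEeq4} first and \emph{without} $\beta=0$. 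After substituting your parametrization it reads
\[
\Big(\Delta S_1(\omega_3)-\Delta S_2(\omega_3)\tfrac{\Delta S_1(\omega_2)}{\Delta S_2(\omega_2)}\Big)(\nu_2-\Gamma_1)\,\phi_1\;\ge\;\nu_2\,|\Delta S_2(\omega_3)|\,\alpha+\Delta S_2(\omega_5)\,\beta\;\ge\;0,
\]
with a positive prefactor, because the $\beta$-term enters with a favourable sign; $\nu_2>\Gamma_1$ then gives $\phi_1\ge 0$ outright. With that in hand your summed constraint does yield $\beta=0$ and, when $\nu_2<\Gamma_2$, $\phi_1=0$, hence $\alpha=0$ and all of \eqref{CEeq1}--\eqref{CEeq4} are equalities. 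In this corrected order the interior case is sound and coincides in substance with the paper's proof.

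The genuine breakdown is the boundary case $\nu_2=\Gamma_2$, which you correctly identify as the main obstacle but do not close. There the summed constraint forces $\beta=0$ and both \eqref{CEeq3} and \eqref{CEeq4} to be equalities, but the two resulting linear relations are \emph{not} incompatible: the hypothesis $\nu_1\Delta S_2(\omega_1)+\nu_2\Delta S_2(\omega_3)=0$ makes their $\alpha$-coefficients negatives of each other, while $\nu_2=\Gamma_2$ makes their $\phi_1$-coefficients negatives of each other, so (with $\beta=0$) the second relation is exactly $(-1)$ times the first. They determine the \emph{same} ratio $\alpha/\phi_1>0$ and leave a one-parameter ray of feasible strategies on which \eqref{CEeq2}--\eqref{CEeq4} are equalities and \eqref{CEeq1} is strictly positive. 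Concretely, the model of Section~\ref{sec:counterexample} sits exactly at this boundary ($\nu_1=\nu_2=3=\Gamma_2$, $\Gamma_1=\tfrac23$), and the strategy $(\phi_1,\phi_2^+,\phi_2^-)=(6,2,-6)$ yields terminal gains $(16,14,8,-48,-24,0)$ on $(\omega_1,\dots,\omega_6)$: the left-hand sides of \eqref{CEeq1}--\eqref{CEeq4} are $14,0,0,0$ and the expected gain is $2.8>0$, i.e.\ it is a statistical arbitrage under the hypotheses with $\nu_2=\Gamma_2$. So the incompatibility you propose does not exist, and no argument of this type can close the boundary case; your reduction establishes the conclusion only under the strict inequality $\nu_2<\Gamma_2$ (note that the paper's own proof likewise leaves $\nu_2=\Gamma_2$ to ``analogous calculations'', and this is precisely where the difficulty lies).
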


The proof is relegated to the appendix.

\subsection{A counter example}\label{sec:counterexample}

In the following we use Lemma \ref{CElemma1} to show that Proposition 1 in \cite{bondarenko2003statistical} is not valid without additional conditions. 
Consider the  (incomplete) trinomial model  specified in Figure \ref{Fig.1}. 
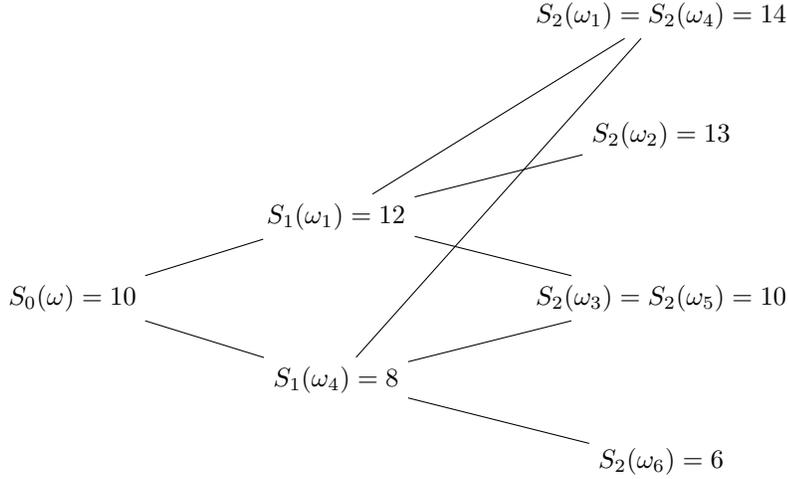
\begin{figure}[t]
  \centering
  \begin{tikzpicture}[>=stealth, sloped]
    \matrix (tree) [%
      matrix of nodes,
      minimum size=0.5cm,
      column sep=1.5cm,
      row sep=0.5cm,
    ] 
    {
        &     & $S_2(\omega_1) = S_2(\omega_4) = 14$  &\\  
        &     &         &\\
        &         & $S_2(\omega_2) = 13$      &\\
        & $S_1(\omega_1) = 12$  &             &\\
        $S_0(\omega) = 10$  &         & $S_2(\omega_3) = S_2(\omega_5) = 10$  &\\
        & $S_1(\omega_4) = 8$ &             &\\
        &         & $S_2(\omega_6) = 6$       &\\
    };
    \draw[-] (tree-5-1)   -- (tree-4-2);%
    \draw[-] (tree-5-1)   -- (tree-6-2);
    
    \draw[-] (tree-4-2)   -- (tree-3-3);%
    \draw[-] (tree-4-2)   -- (tree-1-3);%
    \draw[-] (tree-4-2)   -- (tree-5-3);
    \draw[-] (tree-6-2)   -- (tree-5-3);%
    \draw[-] (tree-6-2)   -- (tree-1-3);    
    \draw[-] (tree-6-2)   -- (tree-7-3);

  \end{tikzpicture}
  \caption{An explicit trinomial model with $T=2$ time steps}\label{Fig.1}
\end{figure}

It is easy to check that the equivalent martingale measures $Q$ specified by $ q=(Q(\omega_1),\dots,Q(\omega_6))$ are given by the set
\begin{align*}
 \mathcal{Q} = \Big\{ q \in \R^6 \, \Big| \, & q_1 = -\frac{3}{4} q_2 + \frac{1}{4}, q_3 = -\frac{1}{4}q_2 + \frac{1}{4}, q_4 = q_6 - \frac{1}{4}, q_5 = -2 q_6 + \frac{3}{4}, \\ 
 &\text{ where } q_2 \in \Big(\frac{1}{3},1\Big), q_6 \in \Big(\frac{1}{4}, \frac{3}{8}\Big) \Big\}.
\end{align*}
Furthermore consider the underlying measure $P$  uniquely specified by the vector $p=(P(\omega_1),\dots,P(\omega_6))$ given by
\begin{align*}
  p = (0.15, 0.2, 0.3, 0.05, 0.1, 0.2).
\end{align*}
We compute $\nu_1 = \frac{p_1}{p_4} = 3$ and $\nu_2 = \frac{p_3}{p_5} = 3$. Then
\begin{align*}
 \Gamma_2 &= \frac{\frac{\Delta S_1(\omega_6)}{\Delta S_2(\omega_6)} (\Delta S_2(\omega_4) + \Delta S_2(\omega_5)) - \Delta S_1(\omega_4) - \Delta S_1(\omega_5)}{\Delta S_1(\omega_3) - \Delta S_1(\omega_1)\frac{\Delta S_2(\omega_3)}{\Delta S_2(\omega_1)}} = 3 = \nu_2, \\
 \Gamma_1 &= \frac{-\Delta S_1(\omega_5) + \Delta S_2(\omega_5) \frac{\Delta S_1(\omega_6)}{\Delta S_2(\omega_6)}}{\Delta S_1(\omega_3) - \Delta S_2(\omega_3)\frac{\Delta S_1(\omega_2)}{\Delta S_2(\omega_2)}} = \frac{2}{3} < \nu_2
\end{align*}
and
\begin{equation*}
 \nu_1 = -\frac{\Delta S_2(\omega_3)}{\Delta S_2(\omega_1)} \nu_2 = \nu_2 = 3 = \frac{p_1}{p_4}.
\end{equation*}
According to Lemma \ref{CElemma1} there is no statistical arbitrage in the stated example. But, on the other hand, there is no path independent density in this case because if there would be a path independent density, \ie a density $Z$ with $Z(\omega_1) = Z(\omega_4)$ and $Z(\omega_3) = Z(\omega_5)$, there would exist an equivalent martingale measure $Q$ fulfilling the conditions
\begin{equation}\label{CEeq7}
 \frac{q_1}{q_4} = \frac{p_1}{p_4} = 3 \hspace*{5ex} \text{ and } \hspace*{5ex} \frac{q_3}{q_5} = \frac{p_3}{p_5} = 3.
\end{equation}
But the only $q \geq 0$ fulfilling (\ref{CEeq7}) is $q = (\frac{1}{4}, 0, \frac{1}{4}, \frac{1}{12}, \frac{1}{12}, \frac{1}{3})$ which is not an element of $\mathcal Q$. 

This example shows that Proposition 1 in \cite{bondarenko2003statistical} needs additional assumptions: indeed, we have shown that there does not exist a statistical arbitrage and at the same time there is no path-independent density. In Section  \ref{Section3} we study this topic in more detail.

\subsection{Statistical arbitrage strategies in binomial models}\label{sec:statarbitrages}
In this section we propose a method to construct statistical arbitrage strategies in binomial models.

Consider the following recombining two-period binomial model:  assume that $\Omega = \{\omega_1, \dots, \omega_4\}$ and $T=2$. Let $S_0 = s_0 >0$  and let $  S_1(\omega_1) = S_1(\omega_2)=s^+$, and $S_1(\omega_3) = S_1(\omega_4)=s^-$ as well as $s^{++}=S_2(\omega_1)$, $s^{+-}=S_2(\omega_2)=S_2(\omega_3)$, and $s^{--}=S_2(\omega_4)$. This model is illustrated in Figure \ref{Fig.II.8}.

Absence of arbitrage is equivalent to  $\Delta S_i$, $i=1,2$ taking positive as well as negative values. We assume without loss of generality that $s^+>s_0,$ $s^-<s_0$, and $s^{++}>s^+$, $s^-<s^{+-}<s^+$, and $s^{--}<s^-$ \ie we consider binomial models as presented in Figure \ref{Fig.II.8}.
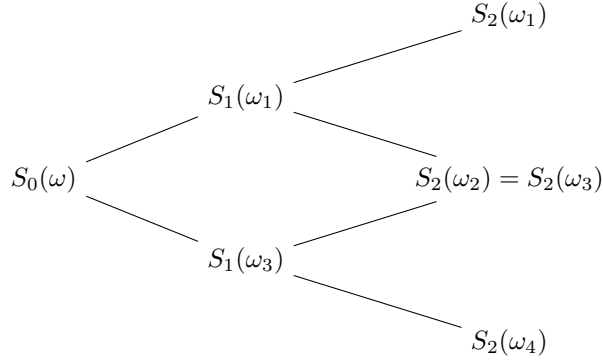
\begin{figure}[t]
  \centering
  \begin{tikzpicture}[>=stealth, sloped]
    \matrix (tree) [%
      matrix of nodes,
      minimum size=0.5cm,
      column sep=1.5cm,
      row sep=0.5cm,
    ] 
    {
      &         & $S_2(\omega_1)$     &\\
      & $S_1(\omega_1)$ &             &\\
        $S_0(\omega)$ &         & $S_2(\omega_2) = S_2(\omega_3)$   &\\
      & $S_1(\omega_3)$ &             &\\
      &         & $S_2(\omega_4)$       &\\
    };
    \draw[-] (tree-3-1)   -- (tree-2-2);%
    \draw[-] (tree-3-1)   -- (tree-4-2);
    
    \draw[-] (tree-2-2)   -- (tree-1-3);%
    \draw[-] (tree-2-2)   -- (tree-3-3);%
    \draw[-] (tree-4-2)   -- (tree-3-3);
    \draw[-] (tree-4-2)   -- (tree-5-3);%
  \end{tikzpicture}
  \caption{The considered recombining binomial model with two periods.}\label{Fig.II.8}
\end{figure}
Gains from trading are again given by \eqref{eq:vphi}. Also $\phi_1$ is constant and $\phi_2$ can take the two values $\{\phi_2^+,\phi_2^-\}$. 
As in Equations \eqref{CEeq1} - \eqref{CEeq4}, $\phi$ is a statistical arbitrage, iff 
\begin{align}
 \phi_1 \Delta S_1(\omega_1) + \phi_2^+ \Delta S_2(\omega_1) &\geq 0 \label{Eq.II.3.1} \\[1ex]
 \phi_1 \Delta S_1(\omega_4) + \phi_2^- \Delta S_2(\omega_4) &\geq 0 \label{Eq.II.3.2} \\[1ex]
 \begin{split}\phi_1 \Delta S_1(\omega_2)P(\omega_2) + \phi_2^+ \Delta S_2(\omega_2) P(\omega_2) & \\
 + \phi_1 \Delta S_1(\omega_3)P(\omega_3) + \phi_2^- \Delta S_2(\omega_3) P(\omega_3) &\geq 0 \end{split}\label{Eq.II.3.3}  
\end{align}
and at least one of the inequalities is strict. 
Moreover, the density $Z$ is path-independent if and only if $Z(\omega_2) = Z(\omega_3)$.
Equations (\ref{Eq.II.3.1}) - (\ref{Eq.II.3.3}) are equivalent to $A \bphi \geq 0 $, $\bphi=(\phi_1,\phi_2^+,\phi_2^-)^\top$ with 
  \begin{equation}\label{Eq.II.3.7}
  A = 
  \begin{pmatrix}
  \Delta S_1(\omega_1)        & \Delta S_2(\omega_1)    & 0 \\
  \Delta S_1(\omega_4)        & 0       & \Delta S_2(\omega_4)\\
  q \Delta S_1(\omega_2) + \Delta S_1(\omega_3) & q \Delta S_2(\omega_2)  & \Delta S_2(\omega_3)
  \end{pmatrix},
 \end{equation}
where $q = \frac{P(\omega_2)}{P(\omega_3)}$.

\begin{proposition}\label{Lem.II.1}
 In the recombining two-period binomial model NSA holds if and only if $\det(A) = 0$.
\end{proposition}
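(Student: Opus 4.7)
The plan is to turn the question into a linear-algebra statement about the image of $A$ and then exploit the sign pattern imposed by the binomial no-arbitrage condition. Since the model is finite and discrete, admissibility and integrability are automatic, so $\phi\in\AG$ if and only if $A\bphi\ge 0$ componentwise and $A\bphi\neq 0$. Hence $\NAG$ is equivalent to $\operatorname{Image}(A)\cap\R^3_{\ge 0}=\{0\}$.

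The direction $\NAG\Rightarrow\det A=0$ is immediate by contraposition: if $\det A\neq 0$, then $A$ is invertible, so $\bphi:=A^{-1}(1,1,1)^\top$ satisfies $A\bphi=(1,1,1)^\top$ with all entries strictly positive, giving a statistical arbitrage.

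For the reverse direction I would prove the sharper statement that $\det A=0$ forces $A^\top$ to possess a strictly positive null vector $y\in\R^3_{>0}$. Granted such a $y$, for any $\bphi$ with $A\bphi\ge 0$ one has
\begin{equation*}
0 = \langle A^\top y,\bphi\rangle = \langle y, A\bphi\rangle,
\end{equation*}
and positivity of $y$ together with $A\bphi\ge 0$ componentwise forces $A\bphi=0$, so $\phi\notin\AG$. To construct $y$, I would use the second and third columns of $A$, namely $(\Delta S_2(\omega_1),0,q\Delta S_2(\omega_2))^\top$ and $(0,\Delta S_2(\omega_4),\Delta S_2(\omega_3))^\top$. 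These are linearly independent (the $(1,1)$ and $(2,2)$ entries of the relevant $2\times 2$ subblock are nonzero), so the second and third equations of $A^\top y=0$ determine $y_1$ and $y_2$ uniquely in terms of $y_3$. The no-arbitrage sign relations $\Delta S_2(\omega_1)>0$, $\Delta S_2(\omega_2)<0$, $\Delta S_2(\omega_3)>0$, $\Delta S_2(\omega_4)<0$ yield $y_1/y_3>0$ and $y_2/y_3>0$, and the remaining first equation of $A^\top y=0$ is consistent precisely when $\det A=0$. Setting $y_3=1$ then produces the desired strictly positive null vector.

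The principal obstacle is this last step: one must verify not merely that $\ker A^\top$ is nontrivial (immediate from $\det A=0$), but that it meets the strictly positive orthant. That positivity is where the no-arbitrage sign pattern of the binomial tree is used in an essential way, and it is what makes the Farkas-type conclusion available.
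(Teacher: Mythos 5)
Your proof is correct, and while the forward direction (if $\det A \neq 0$, take $\bphi = A^{-1}(1,1,1)^\top$) coincides with the paper's, your converse takes a genuinely different, dual route. The paper argues on the primal side: it performs a column substitution turning the first column into $(0,0,B)^\top$, checks that $\det A = 0$ is equivalent to $B=0$, and then uses the sign pattern $\Delta S_2(\omega_1)>0$, $\Delta S_2(\omega_4)<0$, $\Delta S_2(\omega_2)<0$, $\Delta S_2(\omega_3)>0$ to show that any vector $\alpha\,(\Delta S_2(\omega_1),0,q\Delta S_2(\omega_2))^\top+\beta\,(0,\Delta S_2(\omega_4),\Delta S_2(\omega_3))^\top$ whose first two coordinates are nonnegative has third coordinate $\le 0$, so the image of $A$ cannot contain a nonzero nonnegative vector. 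You instead exhibit a strictly positive left null vector $y$ and conclude from $\langle y, A\bphi\rangle = 0$; the positivity of $y$ comes from exactly the same sign pattern, and the one point you assert rather than prove --- that orthogonality to columns $2$ and $3$ already yields $A^\top y = 0$ once $\det A = 0$ --- is a one-line rank argument (columns $2,3$ are independent because of the diagonal $2\times 2$ block in rows $1$--$2$, so $\det A=0$ forces column $1$ into their span), so this is not a gap. The two arguments are Farkas duals of each other and rely on the same structural input; your separating-vector version is slightly more conceptual and makes explicit that $\mathrm{Im}(A)\cap\R^3_{\ge 0}=\{0\}$, i.e.\ it visibly excludes the weak arbitrages $A\bphi\ge 0$, $A\bphi\neq 0$ demanded by the definition, whereas the paper states its conclusion for the open orthant $\R^3_{>0}$ even though its sign analysis in fact delivers the same stronger statement.
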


The proof is relegated to the appendix.

\begin{remark}\label{Rem.II.1}
 It turns out that in the binomial model above NSA is equivalent to existence of a path-independent density: indeed, 
  the unique equivalent martingale measure is given by the vector $B^{-1} (q_1, \dots ,q_4)$ with 
 \begin{align}
  q_1 &= \Delta S_2(\omega_2)\big(\Delta S_1(\omega_3)\Delta S_2(\omega_4) - \Delta S_1(\omega_4)\Delta S_2(\omega_3)\big), \nonumber \\
  q_2 &= - \Delta S_2(\omega_1)\big(\Delta S_1(\omega_3)\Delta S_2(\omega_4) - \Delta S_1(\omega_4)\Delta S_2(\omega_3)\big), \label{Eq.II.3.5} \\
  q_3 &= - \Delta S_2(\omega_4)\big(\Delta S_1(\omega_1)\Delta S_2(\omega_2) - \Delta S_1(\omega_2)\Delta S_2(\omega_1)\big), \label{Eq.II.3.6} \\
  q_4 &= \Delta S_2(\omega_3)\big(\Delta S_1(\omega_1)\Delta S_2(\omega_2) - \Delta S_1(\omega_2)\Delta S_2(\omega_1)\big) \nonumber
 \end{align}
 and 
 \begin{align*}
 B = & \Delta S_2(\omega_2)\Big(\big(\Delta S_1(\omega_3) - \Delta S_1(\omega_1)\big)\Delta S_2(\omega_4) + \big(\Delta S_1(\omega_1)-\Delta S_1(\omega_4)\big)\Delta S_2(\omega_3)\Big) 
 \\&+ \Delta S_2(\omega_1)\Big(\big(\Delta S_1(\omega_2) - \Delta S_1(\omega_3)\big)\Delta S_2(\omega_4) + \big(\Delta S_1(\omega_4) - \Delta S_1(\omega_2)\big)\Delta S_2(\omega_3)\Big).
 \end{align*}
 Proposition \ref{Lem.II.1} yields that NSA holds iff $\det(A) = 0$, which is   according to Equation (\ref{Eq.II.3.9}) equivalent to
 \begin{equation}\label{def:qtilde}
  \frac{P(\omega_2)}{P(\omega_3)} = \frac{\Delta S_2(\omega_1)(\Delta S_1(\omega_3)\Delta S_2(\omega_4) - \Delta S_1(\omega_4)\Delta S_2(\omega_3))}{\Delta S_2(\omega_4)(\Delta S_1(\omega_1)\Delta S_2(\omega_2) - \Delta S_1(\omega_2)\Delta S_2(\omega_1))} =: \tilde q.
 \end{equation}
 Using (\ref{Eq.II.3.5}) and (\ref{Eq.II.3.6}) we obtain from $\det(A) = 0$ that
 \begin{equation*}
  \frac{dQ(\omega_2)}{dQ(\omega_3)} = \tilde q = \frac{dP(\omega_2)}{dP(\omega_3)},
 \end{equation*}
 which means that NSA is equivalent to the existence of a path-independent density.
\end{remark}
The question now is what path properties imply absence of statistical arbitrage opportunities. 
\begin{lemma}\label{Lem.II.2}
 In the recombining two-period binomial model there exists a statistical arbitrage if and only if  
 \begin{equation}\label{NSArecbinom}
 \frac{P(\omega_2)}{P(\omega_3)} \neq \tilde q.
 \end{equation}
\end{lemma}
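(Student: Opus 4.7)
The plan is to combine Proposition \ref{Lem.II.1} with the determinant computation already outlined in Remark \ref{Rem.II.1}. By Proposition \ref{Lem.II.1}, a statistical arbitrage exists in the recombining two-period binomial model if and only if $\det(A) \neq 0$, with $A$ as in \eqref{Eq.II.3.7}. It therefore suffices to show that $\det(A) = 0$ is equivalent to $q := P(\omega_2)/P(\omega_3) = \tilde q$, where $\tilde q$ is defined in \eqref{def:qtilde}.

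To establish this equivalence I would expand $\det(A)$ along the first row, using that the $(1,3)$-entry of $A$ vanishes. Since $q$ enters only through the bottom row of $A$ and does so linearly, $\det(A)$ is an affine function of $q$. Collecting terms, the coefficient of $q$ is a multiple of
$\Delta S_2(\omega_4)\bigl(\Delta S_1(\omega_1)\Delta S_2(\omega_2) - \Delta S_1(\omega_2)\Delta S_2(\omega_1)\bigr)$
and the constant term is a multiple of
$\Delta S_2(\omega_1)\bigl(\Delta S_1(\omega_3)\Delta S_2(\omega_4) - \Delta S_1(\omega_4)\Delta S_2(\omega_3)\bigr)$.
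Solving the affine equation $\det(A)=0$ for $q$ reproduces precisely the expression $\tilde q$ from \eqref{def:qtilde}, so that $\det(A) = 0$ if and only if $q = \tilde q$. Combined with Proposition \ref{Lem.II.1}, this yields the claim.

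The main point to verify is that the coefficient of $q$ does not itself vanish, so that $\det(A) = 0$ genuinely pins $q$ down. This is where the arbitrage-free and recombining structure enters. Because $S_1(\omega_1) = S_1(\omega_2) = s^+$, the factor $\Delta S_1(\omega_1)\Delta S_2(\omega_2) - \Delta S_1(\omega_2)\Delta S_2(\omega_1)$ collapses to $(s^+ - s_0)(s^{+-} - s^{++})$, which is nonzero by the strict inequalities $s^+ > s_0$ and $s^{+-} < s^{++}$ built into the model, while $\Delta S_2(\omega_4) = s^{--} - s^- < 0$ by the no-arbitrage assumption on the lower subtree. I expect this mild non-degeneracy check to be the only genuine obstacle; once it is in place, the equivalence $\det(A)=0 \Leftrightarrow q = \tilde q$ is immediate and the lemma follows.
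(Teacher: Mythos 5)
Your proposal is correct and follows essentially the same route as the paper: invoke Proposition \ref{Lem.II.1} to reduce existence of statistical arbitrage to $\det(A)\neq 0$, then observe that $\det(A)$ is affine in $q$ (this is exactly Equation \eqref{Eq.II.3.9} used in Remark \ref{Rem.II.1}) and that $\det(A)=0$ solves to $q=\tilde q$. Your explicit check that the coefficient of $q$, namely $-\Delta S_2(\omega_4)\bigl(\Delta S_1(\omega_1)\Delta S_2(\omega_2)-\Delta S_1(\omega_2)\Delta S_2(\omega_1)\bigr)=-\Delta S_2(\omega_4)(s^+-s_0)(s^{+-}-s^{++})$, is nonzero is a small but welcome addition that the paper leaves implicit.
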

\begin{proof}
 To have the possibility of statistical arbitrage we know from Proposition \ref{Lem.II.1} that we need $\det(A) \neq 0$ which is, according to Remark \ref{Rem.II.1}, equivalent to $\frac{P(\omega_2)}{P(\omega_3)} \neq \tilde q$.
\end{proof}

The following Lemma explicitly describes the statistical arbitrages in terms of the vector $\bphi=(\phi_1,\phi_2^+,\phi_2^-)$
\begin{lemma}\label{Lem.II.3}
In the recombining two-period binomial model with statistical arbitrage, 
$\bphi = \tfrac 1 D  (\xi^1, \xi^2, \xi^3)$ with
 \small{\begin{align*}
  \xi^1 &= \big(q \Delta S_2(\omega_2) - \Delta S_2(\omega_1)\big)\Delta S_2(\omega_4) + \Delta S_2(\omega_1) \Delta S_2(\omega_3),\\
  \xi^2 &= -\big(\Delta S_1(\omega_3) + q \Delta S_1(\omega_2) - \Delta S_1(\omega_1)\big)\Delta S_2(\omega_4) - \big(\Delta S_1(\omega_1) - \Delta S_1(\omega_4)\big)\Delta S_2(\omega_3),\\
  \xi^3 &= -\big(q \Delta S_1(\omega_4) - q \Delta S_1(\omega_1)\big)\Delta S_2(\omega_2) - \big(-\Delta S_1(\omega_4) + \Delta S_1(\omega_3) + q \Delta S_1(\omega_2)\big)\Delta S_2(\omega_1),
 \intertext{$q = \frac{P(\omega_2)}{P(\omega_3)}$, and}
  D & = \Big(q\Delta S_1(\omega_1)\Delta S_2(\omega_2) + \big(-\Delta S_1(\omega_3) - q \Delta S_1(\omega_2)\big)\Delta S_2(\omega_1)\Big)\Delta S_2(\omega_4)\\ 
  &+ \Delta S_1(\omega_4) \Delta S_2(\omega_1) \Delta S_2(\omega_3)
 \end{align*}}
 is a statistical arbitrage.
\end{lemma}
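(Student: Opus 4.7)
The plan is to verify by direct matrix multiplication that the proposed vector $\bphi$ satisfies $A\bphi = (1,1,1)^{\top}$, where $A$ is the matrix from \eqref{Eq.II.3.7}. Since the statistical arbitrage conditions \eqref{Eq.II.3.1}--\eqref{Eq.II.3.3} read exactly $A\bphi \geq 0$ with at least one strict inequality, the identity $A\bphi = (1,1,1)^{\top}$ proves the claim. This approach avoids having to invoke Cramer's rule or interpret the formulas for $\xi^1,\xi^2,\xi^3,D$ abstractly; we simply plug in and compute.

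Before carrying out the verification, I would check that $D\neq 0$, so that $\bphi$ is well defined. Expanding $\det(A)$ along the third column of \eqref{Eq.II.3.7} (which has only two nonzero entries) produces, up to sign, exactly the expression $D$; in fact $\det(A)=-D$. By Lemma \ref{Lem.II.2} the existence of a statistical arbitrage is equivalent to $P(\omega_2)/P(\omega_3)\neq \tilde q$, which by Remark \ref{Rem.II.1} is equivalent to $\det(A)\neq 0$, so $D\neq 0$ under the standing assumption of the lemma.

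The heart of the argument is the identity $A(\xi^1,\xi^2,\xi^3)^{\top}=(D,D,D)^{\top}$. Writing $a_i:=\Delta S_1(\omega_i)$ and $b_i:=\Delta S_2(\omega_i)$, I would expand each of the three rows separately. In row one, $a_1\xi^1+b_1\xi^2$, the cross-terms containing $a_1 b_1 b_4$ and $a_1 b_1 b_3$ cancel in pairs, leaving $q b_4(a_1 b_2-a_2 b_1)+b_1(a_4 b_3-a_3 b_4)$, which is exactly $D$. Row two, $a_4\xi^1+b_4\xi^3$, behaves analogously: the $q a_4 b_2 b_4$ and $a_4 b_1 b_4$ contributions cancel, again leaving $D$. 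Row three, $(q a_2+a_3)\xi^1+q b_2\xi^2+b_3\xi^3$, requires a somewhat longer expansion in which, most notably, the $q^{2}a_2 b_2 b_4$ terms cancel, and so do the mixed terms proportional to $a_1 b_1 b_3$ and $a_3 b_1 b_3$ together with several other pairs; the surviving terms again assemble into $D$. Dividing by $D$ gives $A\bphi=(1,1,1)^{\top}>0$ componentwise, proving that $\bphi$ is a statistical arbitrage.

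The only real obstacle is the bookkeeping in these three expansions. The structural observation that makes the lemma natural rather than mysterious is that all three rows produce precisely the same value $D$; equivalently, $(\xi^1,\xi^2,\xi^3)^{\top}$ is (up to scaling) the pre-image of $(1,1,1)^{\top}$ under $A$, and the denominator $D$ precisely compensates this normalization. Once this symmetry is noticed, writing down the cancellations is mechanical.
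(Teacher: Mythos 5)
Your proposal is correct and follows essentially the same route as the paper: the paper's proof notes that existence of statistical arbitrage forces $\det(A)\neq 0$ (via Lemma \ref{Lem.II.2} and Proposition \ref{Lem.II.1}), takes $\bphi=A^{-1}\mathds 1$, and states that the closed form $\tfrac1D(\xi^1,\xi^2,\xi^3)$ is ``easily verified''. Your argument simply carries out that verification in the forward direction, checking $A(\xi^1,\xi^2,\xi^3)^\top=(D,D,D)^\top$ and $D=-\det(A)\neq 0$, which is a correct and complete filling-in of the same idea.
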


\begin{proof}
 If $\frac{P(\omega_2)}{P(\omega_3)} \neq \tilde q$ we have statistical arbitrage according to Lemma \ref{Lem.II.2} and the determinant of the matrix $A$ in (\ref{Eq.II.3.7}) is not equal to zero according to Proposition \ref{Lem.II.1}. In this case the matrix $A$ is invertible. Hence, $\bphi =  A^{-1}\mathds{1}$ is  a statistical arbitrage and it is easily verified that $\bphi = \tfrac 1 D (\xi^1,\xi^2,\xi^3)$.
\end{proof}

In Section \ref{StatArbForDiffProcesses} we will use this information and propose a dynamic trading strategy exploiting statistical arbitrages with the results of this section.

\subsection{Risk of statistical arbitrages}\label{sec:3.4}
The word \emph{arbitrage} might be misleading on the riskiness of statistical arbitrages, because in the classical sense, an arbitrage is a strategy without risk. This is of course \emph{not} the case for statistical arbitrages (or the following generalizations of this concept). Since we consider arbitrage-free markets, all gains come with a certain risk and, higher profits are associated with higher risk.
This is confirmed by our simulation results in Section \ref{StatArbForDiffProcesses}.

As a simple example consider the case where $\Delta_iS(\omega_j)\in \{5,-5\}$, i.e.~the stock either rises by 5 or falls by 5. In addition, assume that $q=P(\omega_2)/P(\omega_3)=1.2$. Then, using Equation \eqref{Eq.II.3.7} it is not difficult to compute $\bphi =  A^{-1}\mathbf{1} = (1.6, -1.4, -1.8)^\top$. From this strategy we obtain that the gains at time 2, given by
$$ G_2(\omega) = \phi_1(\omega) \Delta S_1(\omega) + \phi_2(\omega) \Delta S_2(\omega), $$
yield $G_2(\omega_1)=G_2(\omega_4)=1$, corresponding to \eqref{Eq.II.3.1} and \eqref{Eq.II.3.2}. 
In addition, we obtain that $G_2(\omega_2)= 15$ and $G_2(\omega_3)= -17$. If we assume that $P(\omega_2)=0.3$ we obtain that the average expected gain on $\{\omega_2,\omega_3\}$ computes to
\begin{align}
  P(\omega_2) G_2(\omega_2) + P(\omega_3) G_3(\omega_3) &=
  0.3 \cdot 15 + 0.25 \cdot (-17) = 0.25 \ge 0,
\end{align}
such that the strategy is indeed a statistical arbitrage. While the (average) gains in the three relevant scenarios are $1,0.25,1$, the possible loss in scenario $\omega_3$ is equal to $-17$, which is attained with probability $0.25$, clearly pointing out the riskiness of the strategy. 

To exploit the averaging property of \emph{statistical} arbitrage, we repeat this strategy in the following until we first record a positive P\&L. 
These considerations show clearly, that a risk analysis of the implemented strategy is very important.

\section{Generalized $\G$-arbitrage strategies}\label{Section3}
In connection with improvement procedures for payoffs we consider any static or semi-static payoff $X \in L^1(P)$  as a generalized strategy. 
This leads to the following notion of generalized statistical $\G$-arbitrage strategies and the corresponding notion of generalized statistical $\G$-arbitrage. 
This concept was used in several papers dealing with improvement procedures of financial contracts, see for example \cite{kassberger2017additive}.
We denote by $L^1(P,Q):=L^1(P)\cap L^1(Q)$ the set of random variables which are integrable with respect to $P$ and $Q$.

\begin{definition}%
\label{Def.III.1}
 Let $\G \subseteq \F$ be a $\sigma$-algebra.
The set of generalized statistical $\G$-arbitrage-strategies with respect to $Q\in \Me $ is defined as
  \begin{equation*}
   \barA(Q,\G) := \{X \in L^1(P,Q): E_Q[X] = 0, ~ E_P[X | \G] \geq 0 ~ P\text{-a.s. and } E_P[X] > 0\}
  \end{equation*}
 The market satisfies  $\NAQq$, the condition of \emph{no generalized statistical $\G$-arbitrage} with respect to $Q$,  if 
     $$ \barA(Q, \G) = \emptyset. $$
\end{definition}

 We aim at studying under which conditions there exist generalized statistical $\G$-arbitrages and  to describe connections between  $\NAQq$ and $\NAG$. The following  result in \cite{kassberger2017additive}, Proposition 6, characterizes the generalized \NAQq-condition by showing that in fact this notion is equivalent to  $\G$-measurability of $dZ=\frac{dQ}{dP}$.

\begin{proposition}%
\label{Prop.III.1}
 Let $Q \in \Me$. Then  \NAQq is equivalent to the existence of a $\G$-measurable version of the Radon-Nikodym derivative $Z = \frac{dQ}{dP}$.
\end{proposition}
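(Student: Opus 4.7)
The plan is to prove both directions of the equivalence. The forward implication extends the Bayes argument from Proposition \ref{Prop.I.1} from admissible value processes to arbitrary $X \in L^1(P,Q)$. Assume $Z = dQ/dP$ has a $\G$-measurable version and take $X$ with $E_Q[X] = 0$ and $E_P[X | \G] \ge 0$ $P$-a.s. Bayes' formula, as used in (\ref{Eq.I.4}), yields $E_Q[X|\G] = E_P[XZ|\G]/E_P[Z|\G] = E_P[X|\G]$ because $Z \in \G$. Taking $E_Q$-expectation then gives $0 = E_Q[X] = E_Q[E_P[X|\G]]$, and since $E_P[X|\G] \ge 0$ $Q$-a.s. by equivalence, we conclude $E_P[X|\G] = 0$ $P$-a.s., hence $E_P[X] = 0$, contradicting condition (iii) of Definition \ref{Def.III.1}. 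So $\barA(Q,\G) = \emptyset$.

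For the converse I would argue by contrapositive: assuming no $\G$-measurable version of $Z$ exists, I construct an explicit element of $\barA(Q,\G)$. Let $\tilde Z := E_P[Z|\G]$, which is $\G$-measurable and strictly positive. By hypothesis $P(Z \neq \tilde Z) > 0$, and since $E_P[Z - \tilde Z|\G] = 0$ with $Z - \tilde Z \not\equiv 0$, neither $\{Z < \tilde Z\}$ nor $\{Z > \tilde Z\}$ can have zero $P$-measure: otherwise $Z - \tilde Z$ would be a nonzero nonnegative (resp.\ nonpositive) random variable with zero mean.

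The key device is a bounded test function centered against its $Q$-conditional expectation. Set $Y := \Ind_{\{Z < \tilde Z\}}$ and
\begin{equation*}
X := Y - E_Q[Y|\G].
\end{equation*}
Boundedness of $Y$ places $X$ in $L^1(P,Q)$, and $E_Q[X|\G] = 0$ by construction, so $E_Q[X] = 0$. Applying Bayes in the form $E_Q[Y|\G] = E_P[YZ|\G]/\tilde Z$ gives
\begin{equation*}
E_P[X | \G] = \frac{\tilde Z \, E_P[Y|\G] - E_P[YZ|\G]}{\tilde Z} = \frac{E_P[Y(\tilde Z - Z)|\G]}{\tilde Z} = \frac{E_P[(\tilde Z - Z)^+|\G]}{\tilde Z} \ge 0,
\end{equation*}
where the last identity uses $Y(\tilde Z - Z) = (\tilde Z - Z)^+$. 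Finally, $E_P[X] = E_P[(\tilde Z - Z)^+/\tilde Z] > 0$ because $(\tilde Z - Z)^+$ is strictly positive on a set of positive $P$-measure and $\tilde Z > 0$. Thus $X \in \barA(Q,\G)$, contradicting $\NAQq$.

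The main conceptual obstacle is identifying a suitable test function. Natural first attempts such as $X = \tilde Z - Z$ or $X = \tilde Z/Z - 1$ either annihilate $E_P[\,\cdot\, |\G]$ (killing strict positivity, since $E_P[\tilde Z - Z|\G] = 0$) or raise integrability concerns when $Z$ is unbounded. Centering a bounded indicator by its $Q$-conditional expectation sidesteps both: integrability is automatic, the constraint $E_Q[X] = 0$ is built in, and the sign condition reduces to the elementary fact $\Ind_{\{Z < \tilde Z\}}(\tilde Z - Z) = (\tilde Z - Z)^+ \ge 0$.
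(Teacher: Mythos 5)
Your proof is correct, but it takes a genuinely different route from the paper in the delicate direction. The paper does not argue from scratch: it invokes Proposition 6 of \cite{kassberger2017additive}, whose proof rests on conditional Jensen's inequality applied to the explicit candidate $X=\frac{E[Z\sd\G]}{Z}-1\ge -1$ (see \eqref{Eq.III.1}); non-$\G$-measurability of $Z$ makes Jensen strict on a set of positive measure, yielding $E_P[X]>0$, while $E_Q[X]=E_P[E_P[Z\sd\G]-Z]=0$. Your construction replaces this by the bounded, $Q$-centered indicator $X=\Ind_{\{Z<\tilde Z\}}-E_Q\big[\Ind_{\{Z<\tilde Z\}}\sd\G\big]$ with $\tilde Z=E_P[Z\sd\G]$, and the sign conditions reduce to the identity $\Ind_{\{Z<\tilde Z\}}(\tilde Z-Z)=(\tilde Z-Z)^+$ together with $\tilde Z>0$ and $P(Z<\tilde Z)>0$; no convexity argument is needed. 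What each buys: your candidate is bounded (so membership in $L^1(P,Q)$ is immediate, whereas $P$-integrability of $\tilde Z/Z$ requires some care or truncation in the cited approach), it still satisfies $X\ge -1$, so the paper's subsequent use of boundedness from below (in the proof of Proposition \ref{Prop.III.3} via Theorem 16 of \cite{DelbaenSchachermayer1995b}) goes through unchanged; the paper's candidate, on the other hand, is a single closed-form density ratio that makes the link to Jensen-type improvement results in \cite{kassberger2017additive} transparent. Your forward direction is the same Bayes argument underlying Proposition \ref{Prop.I.1} and is fine; only the trivial blemish that Definition \ref{Def.III.1} has no numbered condition ``(iii)'' should be reworded.
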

The proof of this result is achieved by Jensen's inequality and using as  candidate of a generalized $\G$-arbitrage 
\begin{equation}\label{Eq.III.1}
 X = \frac{E[Z \sd \G]}{Z} - 1 \ge -1.
\end{equation}
Equation \eqref{Eq.III.1} also shows that the statistical arbitrage, if it exists, may be chosen bounded from below. 

One consequence of this characterization result is the characterization of  $\NAG$  for the case of complete market models. Recall that the Radon-Nikodym derivative $Z=\frac{dQ}{dP}$ is path-independent, iff $Z$ is $\sigma(S_T)$-measurable.

 A financial market is called \emph{complete}, if every contingent claim is attainable, i.e.~for every $\F$-measurable random variable $X$ bounded from below, we find an admissible self-financing trading strategy $\phi$, such that $x+V_T(\phi)=X$. 
 This is implied by the assumption that $\Me=\{Q\}$: indeed, under this assumption, Theorem 16 in \cite{DelbaenSchachermayer1995b} yields that any $X\in L^1(Q)$, bounded from below, is hedgeable and hence attainable.

\begin{proposition}\label{Prop.III.3}
  Assume that $\Me=\{Q\}$. Then  NSA$(\ccG)$ holds if and only  if $\frac{dQ}{dP}$ is $\ccG$-measurable. 
\end{proposition}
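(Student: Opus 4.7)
The plan is to prove the equivalence by combining the completeness hypothesis with the characterization of generalized $\G$-arbitrage from Proposition \ref{Prop.III.1}.

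The implication ``$\frac{dQ}{dP} \in \G$ $\Rightarrow$ NSA$(\G)$'' is immediate from Proposition \ref{Prop.I.1}, since every dynamic strategy produces a random variable $V_T(\phi)$ that falls into the class treated there; no use of completeness is needed.

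For the converse I would argue by contraposition: assume that $Z=\frac{dQ}{dP}$ admits no $\G$-measurable version. Proposition \ref{Prop.III.1} then guarantees the existence of a generalized $\G$-arbitrage $X \in \barA(Q,\G)$, and the explicit construction in \eqref{Eq.III.1} allows us to choose $X \ge -1$, hence bounded from below and belonging to $L^1(P,Q)$. The goal is to upgrade this static payoff $X$ into a genuine dynamic statistical $\G$-arbitrage strategy $\phi \in \AG$.

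This is where the completeness assumption $\Me = \{Q\}$ enters: by Theorem 16 of \cite{DelbaenSchachermayer1995b}, every $X \in L^1(Q)$ bounded from below is attainable by some admissible self-financing strategy, so there exist $x \in \R$ and admissible $\phi$ with $V_0(\phi)=0$ and $x + V_T(\phi) = X$. Taking the minimal replicating cost (or equivalently using that in the complete case the value process $E_Q[X \mid \F_t]$ is a true $Q$-martingale), we have $x = E_Q[X] = 0$, so that $V_T(\phi) = X$ exactly. Then $V_T(\phi) \in L^1(P)$, $E_P[V_T(\phi)\mid \G] = E_P[X\mid\G] \ge 0$ $P$-a.s., and $E_P[V_T(\phi)] = E_P[X] > 0$, which says that $\phi \in \AG$ and contradicts NSA$(\G)$.

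The main obstacle I foresee is the bookkeeping around the replication step: I must verify that the attained strategy genuinely satisfies the admissibility constraints in the paper's Definition \ref{Def.I.1} ($\phi_0 = 0$, $V_t(\phi) \ge -a$, $V_T(\phi) \in L^1(P)$) and that the initial cost is exactly zero. The bound $X \ge -1$ together with the $Q$-martingale property of $V(\phi)$ in the complete case handles admissibility and fixes $x=0$; the $L^1(P)$-condition is inherited from $X \in L^1(P,Q)$. Everything else is straightforward once Proposition \ref{Prop.III.1} and the attainability result are invoked.
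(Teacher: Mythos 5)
Your proposal is correct and follows essentially the same route as the paper: easy direction via Proposition \ref{Prop.I.1}, and for the converse, Proposition \ref{Prop.III.1} plus the bounded-from-below candidate \eqref{Eq.III.1}, attainability via Theorem 16 of \cite{DelbaenSchachermayer1995b}, and the identification $x=E_Q[X]=0$ (which the paper justifies by the superhedging duality, Theorem 9 there, matching your ``minimal replicating cost / $Q$-martingale'' argument).
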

\begin{proof}
    We first show that existence of a $\ccG$-measurable $Q\in \Me$  implies NSA$(\ccG)$: choose $Q \in \Me$, such that $Z =\frac{dQ}{dP}$ is $\ccG$-measurable. Then  NSA$(\ccG)$  follows as in the proof of Proposition \ref{Prop.I.1}.

  For the converse direction assume   that $Z$ is not $\G$-measurable.  By Proposition \ref{Prop.III.1} it follows that there exists a generalized $\ccG$-arbitrage, i.e.~an $X\in L^1(P,Q)$ with $E_Q[X]=0$, $E_P[X \sd \G] \geq 0$ and $E_P[X] >0$.  
      As remarked above, $X$ can be chosen bounded from below. Hence, Theorem 16 in  \cite{DelbaenSchachermayer1995b} yields existence of an admissible self-financing trading strategy $\phi$, such that $x+V_T(\phi)=X$. Moreover, the superhedging duality, i.e. Theorem 9 in  \cite{DelbaenSchachermayer1995b} implies that $x=E_Q[X]=0$, and hence $\phi$ is a $\ccG$-arbitrage.
  This is a contradiction and the claim follows.
\end{proof}

In particular this result implies that Proposition 1 in \cite{bondarenko2003statistical} gives a correct characterization of NSA for complete markets.

\begin{example}[Statistical arbitrage for diffusions]\label{Ex.1}
  This example discusses the consequences of  Proposition \ref{Prop.III.1} and Proposition \ref{Prop.III.3} in the case of a diffusion model. Let  $S$ be a one-dimensional diffusion process  satisfying 
 \begin{equation}\label{Eq.III.3}
  dS_t = a_t \dt + b_t \dB_t, \quad 0 \leq t \leq T,
 \end{equation}
  where $B_t$ is a $P$-Brownian motion, $a$ and $b$ are progressively measurable such that $P(\int_0^T |a_s|ds < \infty )=1$ and $P(\int_0^T b_s^2 ds < \infty) = 1$. Assume further that $b>0$ $dt$-almost surely that the Novikov-condition is satisfied, \ie
  \begin{equation*}
   E\left[\exp \Big(\frac{1}{2} \int_0^T \frac{a^2_s}{b^2_s} \ds \Big)\right] < \infty.
  \end{equation*}
  Then this model is complete and by Girsanov's theorem has a unique equivalent local martingale measure $Q$ with Radon-Nikodym derivative
  \begin{equation}\label{Eq.III.4}
   Z_T = \exp\left(-\int_0^T \frac{a_t}{b_t} \dB_t - \frac{1}{2}\int_0^T \frac{a_t^2}{b_t^2} \dt \right).
  \end{equation}
  If $a_t/b_t = c$ $dt$-almost surely, then  we obtain from Proposition \ref{Prop.III.3} that there are no statistical arbitrage opportunities.
  This holds in particular when $a_t=a_0$ and $b_t=b_0$, $0 \le t \le T,$ \ie in the case of constant drift and volatility (the Black-Scholes model). On the other side, the diffusion model allows for statistical arbitrage  except for the case that $ (a_t / b_t)$ is constant $dt$-almost surely. A comparable result was obtained in \cite{Goncu2015} when studying the concept of statistical arbitrage introduced in \cite{HoganJarrow2004} in the Black-Scholes model.
\end{example}
The following definition introduces the generalized $\G$-no-arbitrage condition without dependence  on a specific pricing measure $Q$.

\begin{definition}%
\label{Def.III.2}
 Let $\G \subseteq \F$ be a $\sigma$-algebra. The  set of \emph{generalized statistical $\G$-arbitrage-strategies} is defined as
  \begin{equation*}
   \barA(\G) := \{X \in L^1(P): \sup_{Q \in \Me} E_Q[X] \le 0, ~ E_P[X | \G] \geq 0 ~ P\text{-a.s. and } E_P[X] > 0\}.
  \end{equation*}
  The market satisfies  $\NAq$, i.e.~\emph{no  generalized statistical $\G$-arbitrage}, if 
       $$ \barA(\G) = \emptyset. $$
 \end{definition}
  Note that the definition defines a \emph{generalized statistical $\G$-arbitrage} as a random variable $X \in L^1(P)$, such that $\sup_{Q \in \Me} E_Q[X]\le 0$,  $E_P[X | \G] \ge 0$, $P$-almost surely, and
  $E_P[X]>0$.
  In this sense, the strategies in $\Aq$ are generalized statistical $\G$-arbitrage-strategies under any choice of the pricing measure $Q$. Our next step is to establish a relation between $\G$-arbitrages and generalized $\G$-arbitrages. Note that the connection to trading strategies in a continuous-time setting requires, as usual, to allow that $\sup_{Q \in \Me} E_Q[X]$ may be negative, while for the definition of $\barA(Q,\G)$ we were able to consider $E_Q[X]=0$. The precise reasoning for this is becoming clear in the proof of the next proposition.
  
  We use the concept of \textit{No Free Lunch with Vanishing Risk} (NFLVR), which is a mild strengthening of the no-arbitrage concept, and refer to \cite{delbaen1994general} for definition and further reading. According to the results in this article we require in the following that $S$ is locally bounded, i.e.~there exists a sequence of stopping times $(T_n)_{n \ge 1}$ tending to $\infty$ a.s.~and a sequence $(K_n)_{n \ge 1}$  of positive constants, such that $|S \Ind_{\llbracket 0, T_n \rrbracket}| < K_n$, $ n \ge 1$.

The set of generalized $\G$-arbitrage strategies restricted to  claims  bounded from below is denoted by
\begin{align*}
    \overline{\text{SA}}_b(\G) :=\barA(\G) \cap  \{X \in L^1(P):& \exists \, a \in \R \text{ such that }X \geq - a\} .
\end{align*}

\begin{proposition}\label{Prop.III.2}
 Assume that $S$ satisfies (NFLVR). Then
 $$ \overline{\text{NSA}}_b(\G)  \Leftrightarrow \NAG. $$
\end{proposition}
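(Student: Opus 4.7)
I would prove the biconditional by two contrapositive arguments.

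\emph{Easy direction} ($\overline{\text{NSA}}_b(\G) \Rightarrow \NAG$): Given $\phi \in \text{SA}(\G)$, set $X := V_T(\phi)$. Then $X \in L^1(P)$ by definition of $\text{SA}(\G)$, $X \ge -a$ for some $a > 0$ by admissibility, and the two conditions of Definition~\ref{Def.I.1} pass to $X$ verbatim. Repeating the argument leading to \eqref{supermartingaleproperty} in the proof of Proposition~\ref{Prop.I.1}, under NFLVR the value process $V(\phi)$ is a $Q$-supermartingale for every $Q \in \Me$ (Fatou from admissibility), giving $E_Q[X] \le V_0(\phi) = 0$, and thus $\sup_{Q \in \Me} E_Q[X] \le 0$. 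Hence $X \in \overline{\text{SA}}_b(\G)$, producing the desired contradiction.

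\emph{Reverse direction} ($\NAG \Rightarrow \overline{\text{NSA}}_b(\G)$): Given $X \in \overline{\text{SA}}_b(\G)$, my plan is to invoke the superhedging / optional decomposition theorem of Kramkov and F\"ollmer--Kabanov, valid under NFLVR for locally bounded $S$. Since $X$ is bounded from below and $\sup_{Q \in \Me} E_Q[X] \le 0$, this produces an admissible strategy $\phi$ together with a non-decreasing process $C$ with $C_0 = 0$, such that $V_T(\phi) = X - x_0 + C_T \ge X$, where $x_0 := \sup_{Q \in \Me} E_Q[X] \le 0$. Monotonicity of conditional expectation then gives $E_P[V_T(\phi) \mid \G] \ge E_P[X \mid \G] \ge 0$ and $E_P[V_T(\phi)] \ge E_P[X] > 0$. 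So $\phi$ would realize an element of $\text{SA}(\G)$ (contradicting $\NAG$) \emph{provided} $V_T(\phi) \in L^1(P)$.

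\emph{Main obstacle:} establishing this $L^1(P)$-integrability is the technical heart of the argument. One knows only that $V_T(\phi) \ge -a$ and $E_Q[V_T(\phi)] \le 0$ for every $Q \in \Me$; since the densities $dQ/dP$ need not be bounded, the correction $C_T$ may lie in $L^1(Q)$ for every equivalent martingale measure without lying in $L^1(P)$. I would address this by localization: with $\tau_n := \inf\{t \le T : V_t(\phi) \ge n\}$ and $\phi^n := \phi \, \Ind_{\llbracket 0, \tau_n \rrbracket}$, the local boundedness of $S$ forces $V_T(\phi^n)$ to be bounded, hence in $L^1(P)$, and $V_T(\phi^n) \ge X \wedge n$. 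As $n \to \infty$, dominated convergence (using $X \in L^1(P)$ and $X \ge -a$) yields $X \wedge n \to X$ in $L^1(P)$, so $E_P[X \wedge n] > 0$ for $n$ large. The trickiest step, which I expect to be the real hurdle, is to preserve the conditional inequality $E_P[V_T(\phi^n) \mid \G] \ge 0$ \emph{strictly} (rather than merely asymptotically), which will likely require either a careful modification of $\phi^n$ or selecting the superhedging strategy with $V_T(\phi) \in L^1(P)$ from the outset, exploiting additional structure from the optional decomposition.
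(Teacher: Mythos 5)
Your proposal follows essentially the same route as the paper's proof. For the first direction the paper also reduces matters to $\sup_{Q\in\Me}E_Q[V_T(\phi)]\le 0$, but gets this from the superreplication duality (Theorem 9 in \cite{delbaen1995no}) applied with $\tilde\phi=\phi$ and $x=0$, rather than from the $Q$-supermartingale property of admissible value processes as in the proof of Proposition \ref{Prop.I.1}; the two arguments are interchangeable. For the reverse direction the paper does exactly what you do: since $\sup_{Q\in\Me}E_Q[X]\le 0$ is the superreplication price and the infimum is attained (Theorem 9 in \cite{delbaen1995no} again), one may take $x=0$ and obtain an admissible $\phi$ with $V_T(\phi)\ge X$, and monotonicity of (conditional) expectation transfers $E_P[\,\cdot\mid\G]\ge 0$ and $E_P[\,\cdot\,]>0$ from $X$ to $V_T(\phi)$.

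The point at which you stop --- ``provided $V_T(\phi)\in L^1(P)$'' --- is precisely where the paper's proof simply concludes $\phi\in\AG$ without comment: the $L^1(P)$-requirement in Definition \ref{Def.I.1} is not verified there either, and indeed $V_T(\phi)\ge X$ together with $E_Q[V_T(\phi)]\le 0$ for all $Q\in\Me$ gives no control of $E_P[V_T(\phi)^+]$. So your concern is legitimate, but your proposed repair does not close it: $V_T(\phi^n)=V_{\tau_n\wedge T}(\phi)$ need not be bounded (local boundedness of $S$ does not control the overshoot at $\tau_n$, which depends on the unbounded integrand $\phi$), and, as you yourself note, $E_P[V_T(\phi^n)\mid\G]\ge E_P[X\wedge n\mid\G]$ may be negative on a set of positive probability for every finite $n$, since only $E_P[X\mid\G]\ge 0$ is known; truncation destroys the conditional inequality and there is no obvious way to restore it. The natural way to finish is not localization but the extended-sense reading: since $V_T(\phi)\ge -a$, both $E_P[V_T(\phi)\mid\G]$ and $E_P[V_T(\phi)]$ are well defined with values in $(-a,\infty]$, and conditions i) and ii) of Definition \ref{Def.I.1} hold in this sense, which is evidently how the paper intends its conclusion. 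If one insists on literal $L^1(P)$-membership of $V_T(\phi)$, your argument is incomplete --- but so is the paper's; modulo that common (and standard) gloss, your proof is the paper's proof.
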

\begin{proof}
 We first show that every $\G$-arbitrage strategy is a generalized $\G$-arbitrage strategy: consider $\phi \in \AG$, i.\,e. $E[V_T(\phi) \sd \G] \geq 0$ and $E[V_T(\phi)]>0$. . By the superreplication duality, Theorem 9 in \cite{delbaen1995no}, it holds that
  \begin{equation*}
  \sup_{Q \in \Me} E_Q [V_T(\phi)] = \inf\{x \sd \exists \text{ admissible }\tilde{\phi}, x + V_T(\tilde{\phi}) \geq V_T(\phi) \}.
 \end{equation*}
 Choosing $\tilde{\phi} = \phi$ it follows $\sup_{Q \in \Me} E_Q V_T(\phi) \leq 0$. Note that in addition, admissibility of $\phi$ implies that  $V_T(\phi) $ is bounded from below and so $V_T(\phi)\in \overline{A}_b(\G)$.
 
 For the reverse implication we have, again by the superreplication duality, for $X \in \overline{A}_b(\G)$ that 
   \begin{equation*}
  0 \ge \hspace{-1ex} \sup_{Q \in \Me} E_Q X = \inf\{x \in \R \sd \exists \ \text{admissible } \phi, \ x + V_T(\phi) \geq X \}.
 \end{equation*}
 Since the infimum is finite,  Theorem 9 in \cite{delbaen1995no} yields that it is indeed a minimum. Without loss of generality, we may chose $x=0$ and obtain the existence of an admissible dynamic trading strategy $\phi$ with $X \leq V_T(\phi)$. As $X \in \overline{A}_b(\G)$ it holds further that $E_P[X \sd \G] \geq 0, \ P$-a.s., which leads us to
 \begin{equation*}
  E_P[V_T(\phi) \sd \G] \geq E_P[X \sd \G] \geq 0 \quad P\text{-a.s.}
 \end{equation*}
 Then, $E_P [V_T(\phi)] \geq E_P [X] > 0$, such that $V_T(\phi) \in \AG$. So the existence of generalized $\G$-arbitrage strategies is equivalent to the existence of $\G$-arbitrage strategies $V_T(\phi)$ in \AG and the claim follows.
\end{proof}

\section{Some classes of profitable strategies}\label{StatArbForDiffProcesses}
In Section \ref{Section3} we saw  conditions and examples of statistical arbitrages in a variety of models. 
Here we are considering several classes of simple statistical arbitrage strategies for several classes of information systems $\G$. While these strategies are easy to apply for general stochastic models 
we investigate them on the Black-Scholes model which will allow for analytic properties of the trading strategies. 
We will see in the following section that similar results can be expected in more general market models.

The Black-Scholes model is, according to  Example \ref{Ex.1}, free of statistical arbitrage, and we show in the following how to construct  dynamic trading strategies allowing statistical $\G$-arbitrage for various choices of $\G$.
 To this end, assume that $S$ is a geometric Brownian motion, i.e.~the unique strong solution of the stochastic differential equation
\begin{equation}\label{Eq.II.3.8}
 dS_t = \mu S_t \dt + \sigma S_t \dB_t, \qquad 0 \le t \le T
\end{equation}
where $B$ is a $P$-Brownian motion and $\sigma>0$. %
In the simulation we will first chose $\mu = 0.1241$, $\sigma = 0.0837$, $S_0 = 2186$ according to estimated drift and volatility from the S\&P 500  (September 2016 to August 2017), and later consider small perturbations.

Motivated by our findings in Section \ref{sec:trinomial}, we begin by embedding binomial trading strategies into the diffusion setting by considering two limits (up / down) and taking actions at the first times these limits are reached. In Section \ref{Sec.ExtBinMod} we will introduce some related follow-the-trend strategies.

\subsection{Embedded binomial trading strategies}\label{Section.II.4.1}

We introduce a recombination of several two-step binomial models
embedded in the continuous-time model as long as the final time $T$ is reached.
As information system we consider the $\sigma$-field $\G$ generated by the stopping times when the final states of each of the binomial model are reached (or the trivial $\sigma$-field otherwise).

As we repeatedly consider embedded binomial models it makes much sense to talk on the outcome of the trading strategy \emph{on average} conditional on the final states of each binomial model, i.e.~by averaging the outcome over many repeated applications of the trading strategy and hence we may apply the concept of \emph{statistical arbitrage} here.

Let $i$ denote the current step of our iteration and consider a multiplicative step size $c>0$. We initialize at time $t_0^0=0$. Otherwise consider the initial time of our next iteration given by the time where finished the last repetition and denote this time by $t_0^i$ and the according level by $s_0^i=S_{t_0^i}$. 
 Then we define the following two stopping times denoting the first and second period of our binomial model by
\begin{equation}\label{StopTime_t1}
 t_1^i = \inf \big\{t \in [t_0^i,T] \sd S_t \in \{ s_0^i (1-c), s_0^i(1+c) \} \big\}
\end{equation}
and
\begin{equation}\label{StopTime_t2}
 t_2^i  = \inf \big\{t \in ( t_1^i,T] \sd S_t \in \{s_0^i(1-2c), s_0^i ,  s_0^i(1+2c)\} \big\},
\end{equation}
with the convention that $\inf \emptyset = T$. This induces a sequence of $\sigma$-fields 
$$ \G^i :=  \sigma(S_{t_2^i}). $$

Since $S$ is continuous, this scheme allows to embed repeated binomial models $S_{t_0^i}, \, S_{t_1^i}, \, S_{t_2^i}$, $i=1,2,\dots$ into continuous time. The considered trading strategy is to execute the statistical arbitrage strategy for binomial models computed in Lemma \ref{Lem.II.3} at the stopping times $t_0^i, t_1^i, \, t_2^i$. 
At $t_2^i$ the position will be cleared and we start the procedure afresh by letting $t_0^{i+1}=t_2^i$. Generally, we assume that the time horizon $T$ is sufficiently large such that the (typically small) levels $s_0^i(1-2c),\dots,s_0^i(1+2c)$ are reached at least once.

\begin{example}
Figure \ref{Fig.II.1} illustrates the embedding of the binomial model: the  boundary $s_0^0(1-c)$ is hit at stopping time $t_1=t_1^0$ and the boundary $s_0^0(1-2c)$ at stopping time $t_2=t_2^0$. 
The trading strategy $\bphi$ from Lemma \ref{Lem.II.3} then implies trading buying (selling)  $\phi_1$ entities of the underlying at time $t=0$ and  $\phi_2^-$ entities at $t = t_1$. At time $t=t_2$ we will close the position 
and start this procedure again with $t^1_0 = t_2$ and with the new starting point $s_0^1 = S_{t_2}$. 
This leads to a recombination of several 2-period binomial models, as illustrated in Figure \ref{Fig.II.2}.
\end{example}

\begin{figure}[t]
  \centering
  \includegraphics[width=9cm]{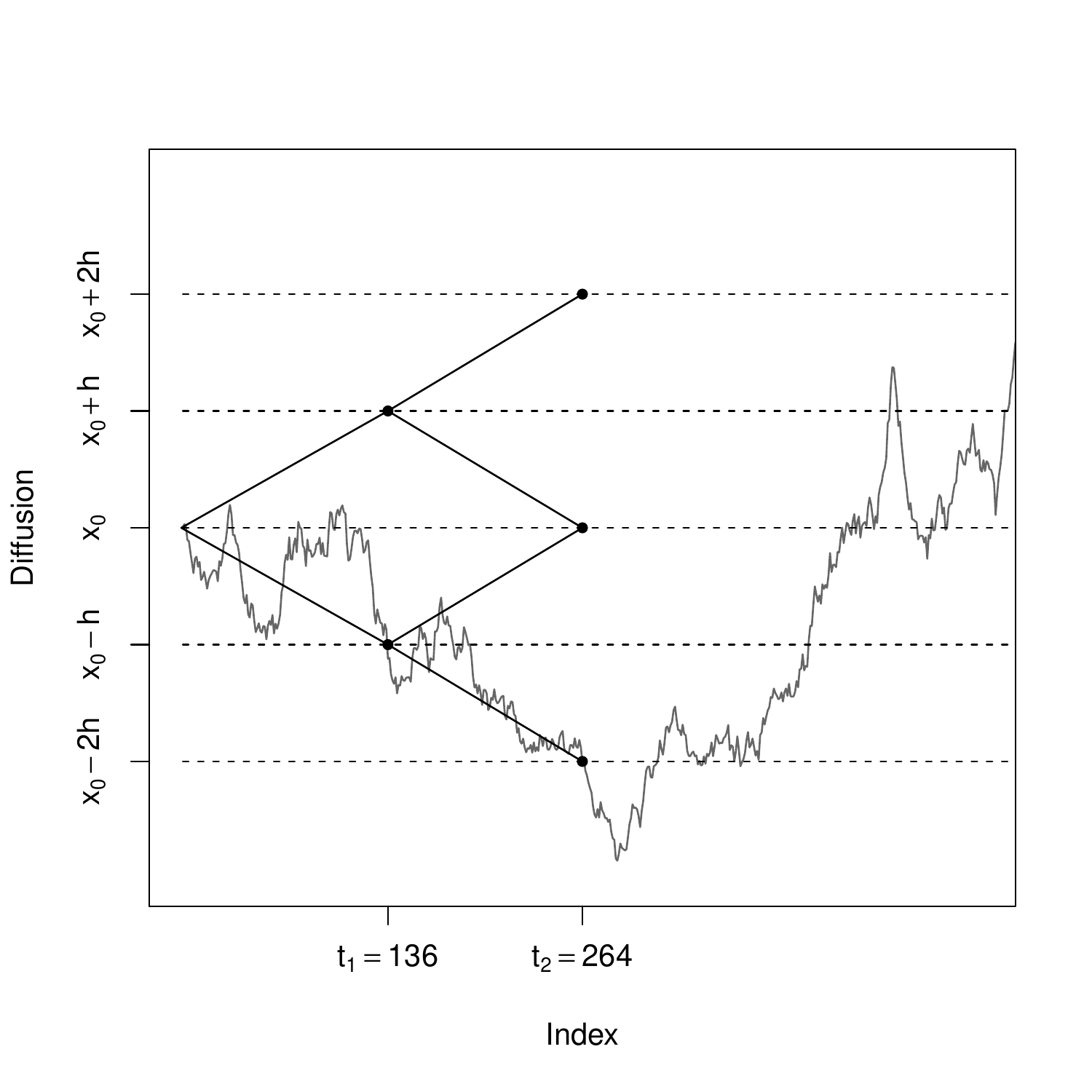}
  \caption{The embedding of a binomial model: at the hitting times $t_1$ and $t_2$ of the diffusion the steps of the embedded binomial model take place. The hitting levels are given by  $ s_0 (1 \pm 0.15)$.
  }\label{Fig.II.1} 
\end{figure}

\begin{figure}[t]
  \centering
  \includegraphics[width=8.5cm]{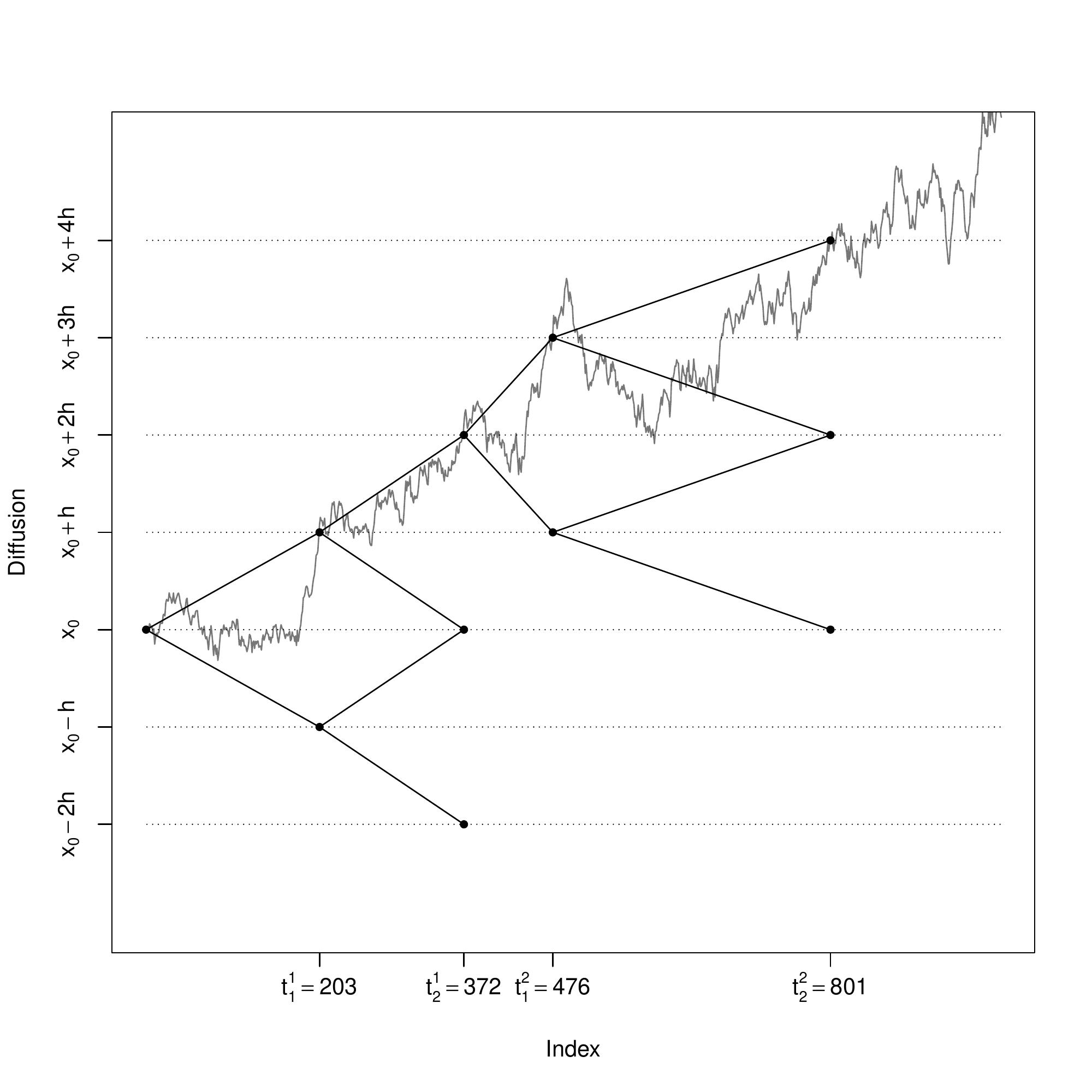}
  \caption{The embedded multi-period binomial trading model with trading points $t_1^1$, $t_2^1$, $t_1^2$ and $t_2^2$. 
  The statistical arbitrage in this case corresponds to repeated trading strategies from Lemma \ref{Lem.II.3}: we buy $\phi_1$ entities at $t_0^1 = 0$, change the position  to $\phi_2^+$ at  $t_1^1$ and equalize the position at $t_2^1$. 
  With the new starting time $t_0^2 = t_2^1$ the strategy will be started again and adjusted at the stopping times $t_1^2$ and $t_2^2$}\label{Fig.II.2}
\end{figure}

The constant $c$ and with it the barriers for the hitting times will be chosen in dependence of $\mu$ and $\sigma$ to ensure that we do not loose the statistical arbitrage opportunity. To be more precise we use 
  $$ c = 0.01  \cdot \frac{\mu}{\sigma}$$ 
which showed a good performance in our simulations.
According to Lemma \ref{Lem.II.2} there is a statistical arbitrage opportunity if $\frac{P(\omega_2)}{P(\omega_3)} \neq \tilde  q$. It is easy to check from Equation \eqref{def:qtilde} that $\tilde q = 1$ in the case considered here.

To guarantee existence of a statistical arbitrage we calculate the path probabilities $P(\omega_2), \, P(\omega_3)$. The first exit time $\tau = \inf \{t \ge 0 \sd S_t \notin (a,b)\}$ from the interval $(a,b)$ satisfies
\begin{equation}\label{Eq.II.4.1.1}
 P(S_\tau = a) = \left(\frac{a}{s_0}\right)^\nu \frac{\big(\frac{b}{s_0}\big)^{|\nu|}-\big(\frac{s_0}{b}\big)^{|\nu|}}{\big(\frac{b}{a}\big)^{|\nu|}-\big(\frac{a}{b}\big)^{|\nu|}}, \hspace{3ex} a<b,
\end{equation}
where   $\nu = \frac{\mu}{\sigma^2} - \frac{1}{2}$,  see  \cite{borodin2012handbook}, formula 3.0.4 in Section 9 of Part II. This in turn yields that
\begin{align}
q &= \frac{P(\omega_2)}{P(\omega_3)} = \frac{P\big(S_{t_1} = s_0(1+c)\big) P\big(S_{t_2} = s_0\big)}{P\big(S_{t_1} = s_0(1-c)\big) P\big(S_{t_2} = s_0\big)}  \notag\\
&= \frac{\bigg(1 - \left(1-c\right)^\nu \frac{(1+c)^{|\nu|}-(1+c)^{-|\nu|}}{\big(\frac{1+c}{1-c}\big)^{|\nu|}-\big(\frac{1-c}{1+c}\big)^{|\nu|}} \bigg) 
 \left(1+c\right)^{-\nu} \frac{\big(\frac{1+2c}{1+c}\big)^{|\nu|}-\big(\frac{1+c}{1+2c}\big)^{|\nu|}}{(1+2c)^{|\nu|}-(1+2c)^{-|\nu|}} }
{\bigg(\left(1-c\right)^\nu \frac{(1+c)^{|\nu|}-(1+c)^{-|\nu|}}{\big(\frac{1+c}{1-c}\big)^{|\nu|}-\big(\frac{1-c}{1+c}\big)^{|\nu|}} \bigg) 
\bigg(1 -  \left(\frac{1-2c}{1-c}\right)^\nu \frac{(1-c)^{-|\nu|}-(1-c)^{|\nu|}}{(1-2c)^{-|\nu|}-(1-2c)^{|\nu|}} \bigg)} .
\end{align}
Clearly, in general $q \neq  1$, such that in these cases statistical arbitrage exists, which we exploit in the following.

From Lemma \ref{Lem.II.3} we obtain with $D = 2 (q  - 2 )  (c\, s_0^i)^3$ that the trading strategy $\bphi=(\phi_1,\phi_2^+,\phi_2^-)$ is given by
\begin{align}
 \phi_1   &= ( 2+q ) (c \, s_0^i)^2 D^{-1} ,\label{Eq.1} \\
 \phi_2^+ &= ( q-4 ) (c \, s_0^i)^2 D^{-1} ,\label{Eq.2} \\
 \phi_2^- &= -  3q (c \, s_0^i)^2   D^{-1}  \label{Eq.3}.
\end{align} 
We call the trading strategy which results by repeated application of $\phi$ at the respective hitting times the \emph{embedded binomial trading strategy}.

\subsection*{Simulation results}
As already mentioned, we simulate a geometric Brownian motion according to Equation \eqref{Eq.II.3.8} with  $\mu = 0.1241$, $\sigma = 0.0837$, $S_0 = 2186$, $T=1$ (year), discretize by 1000 steps and embed the according binomial models repeatedly in this time interval. In this case we have $q=1.00189$ (rounded to five digits) which is not equal to one and therefore $q \neq \tilde{q}$, \ie the embedded binomial strategy in this case is a $\G$-arbitrage strategy. We denote by $N$ the (random) number of binomial models that are necessary for each simulated diffusion to gain either a profit from trading or to reach $T$ and by  $G^i$ the gain or loss of the $i$-th binomial model. Hence either $\sum_{i=1}^N G^i > 0$ or we record a loss at time $N = T$. 

\begin{table}[t]
\begin{center}
{\footnotesize
\begin{tabular}{cccccccc}\toprule 
   gain p.a. & median   & VaR(0.95) & gain/trade  & losses& (mean)  & $\varnothing$ $N$ & max. $N$\\\midrule 
  33.4  &   206   &  5,320  &  8.74  &  0.133  &  -628  &  3.82  &   24  \\ \bottomrule \\
\end{tabular}}
\caption{Simulation results for the \emph{embedded binomial trading strategy}  for 1 mio runs. This example serves as benchmark. Gain p.a.~denotes the overall average \emph{gain} in the time period of one year, $[0,1]$; we also show its median and the associated estimated VaR at level 95\%. 
    \emph{Gain/trade} denotes the average gain per trade, \emph{losses} denotes the fraction of simulations where the outcome of the trading strategy was negative, and we also show the average of the losses titled \emph{mean}. Finally, we also state the average number and maximal number of embedded binomial models.}\label{Tab.II.1}
\end{center}
\end{table}

\begin{figure}[!tb]{%
   \begin{overpic}[height=8.0cm, width = 8.5cm]{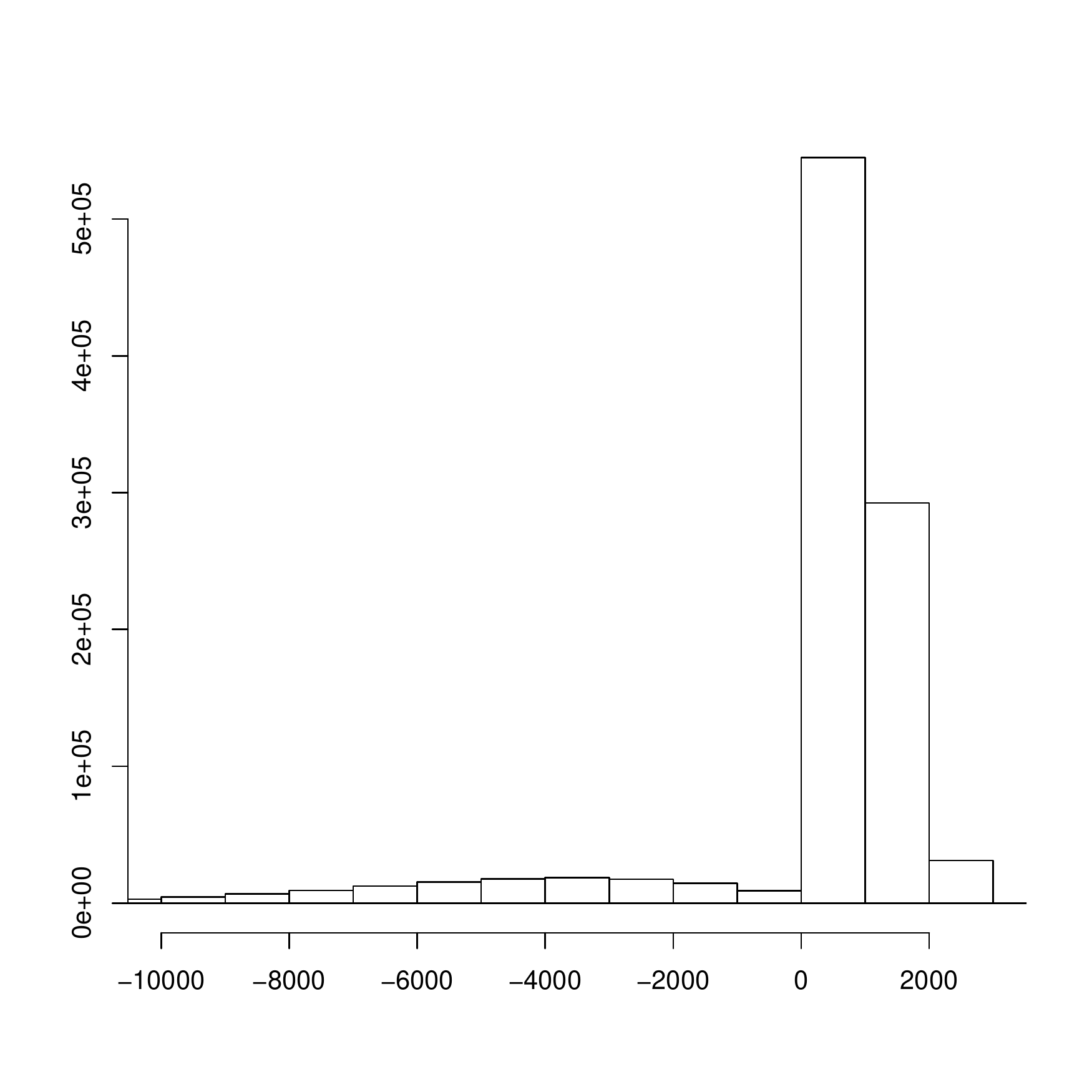}
   \end{overpic}
 }
 \caption{Histogram of the profits and losses from the embedded binomial trading strategy used in Table \ref{Tab.II.1}.}\label{fig:histogram}
\end{figure}

For 1 million runs, we obtain the results presented in Table \ref{Tab.II.1}. 
 For each run we record either a gain or a loss from trading. The average gain per simulation run is shown in column one, its median in column two. The distribution of the P\&L is skewed to the left with potential large losses with small probability which is reflected by a median of 206 in comparison to  an average gain of 33. In column 3 we depict the 95\% Value-at-Risk which is of size 5,320. Column 4 denotes the average gain per trade which is obtained by dividing the average gain by the average number of trades (i.e.~repeated binomial models).  In column 5 we show the (fraction of) \emph{losses}, i.e.~the fraction of simulated  processes exhibiting no gain from trading before reaching the final time $T$, followed by their mean. The average number of trading repeats $\varnothing N$ is followed by  the maximal number of trading repeats over all runs (max $N$).

As becomes clear from  Table \ref{Tab.II.1} we can record an overall profit for many cases.  We have a negative outcome in $13.3$ percent in average of all simulations with an average size of -628. The median of the profits is about 200, with a smaller average of about 30. The risk measured by the Value-at-Risk at 95\% is 5,320 pointing to the fact that the average gain by the statistical arbitrage is (of course) not without risk. For clarification, we plot the associated histogram of the P\&L in Figure \ref{fig:histogram}.

Although the actual amount of the profit depends on many parameters  we can confirm the possibility of statistical arbitrage. 
Besides, we see that on average our multi-period binomial model has a small number of periods and the number of periods does not explode, which is important with a view on  trading costs.

\begin{table}[t]
\begin{center}
{\footnotesize
\begin{tabular}{lrrrrcrrr}\toprule 
  \ \ \ c   & gain pa & median &  VaR${}_{0.95}$  & gain pt & losses  & (mean)  & $\varnothing$ $N$ & (max)\\\midrule 
0.0025 &  8,890  &  48,700    &   -373  &   743  &  0.045  &  -57,900  &    12  &  150  \\ 
0.005  &   465  &  3,810    &  58,400  &  66  &  0.077  &  -6,210  &   7  &   63  \\ 
0.01  &  41  &   206   &  5,250  &  11  &  0.132  &  -621  &  4  &   24  \\ 
0.02  &     9  &   10   &  371  &  5  &  0.185  &  -50  &  2  &    9  \\ 
0.04  &  3  &  2   &  24  &  3  &  0.109  &  -2  &   1  &    4  \\ \bottomrule \\
\end{tabular}}
\caption{
Simulations for the embedded binomial trading strategy with varying boundary levels; gain p.a.~denotes the gain per year, gain p.t.~denotes gain per trade. In the simulations for Table \ref{Tab.II.1} we used $c=0.01 \nicefrac{\mu}{\sigma}$.}\label{Tab.II.2}
\end{center}
\end{table}

\subsubsection*{Varying barrier levels}
The most interesting parameter turns out to be the parameter $c$. It decodes the varying the barrier level and the results may be found in  Table \ref{Tab.II.2}. It turns out that this parameter allows to balance gains and risk very well.

First, the smaller the parameter $c$ is chosen, the higher are the  gains in general. The additional gain does imply an increase of risk: most prominently, the mean of the losses decreases with $c$. On the other side, we observe a decrease in the probability for losses to occur. The Value-at-Risk confirms the increase of risk with decreasing $c$, except for the lowest $c=0.0025$. In this case, the probability of having large losses is below 5\%, such that the Value-at-Risk at level $0.95\%$ does no longer see this risk (while it is of course still present).

A high value of $c$ corresponds intuitively to a larger step sizes, which leads to less trades on average. The largest value of $c$ gives a statistical arbitrage with small gain and smallest risk.

\subsubsection*{The role of drift and volatility}
For the investor it is of interest which drift and which volatility of an asset promises a good profit. To investigate this question we define the fraction 
$$ \eta := \frac{\mu}{\sigma} $$ 
and show simulation results for different values of $\eta$. 
In Table \ref{Tab.II.3} we fix the volatility $\sigma$ and consider varying drift, while in Table \ref{Tab.II.4} we fix the drift $\mu$ and consider varying volatility.

Larger values of $\eta$ point to a high drift relative to volatility situations which we would expect to be very well exploitable. In fact, our simulations show quite the contrary: we observe large gains when  $\eta$ is actually small, while for larger $\eta$ we observe only minor gains. More precisely, for fixed $\sigma$ we obtain decreasing gains for increasing drift, while for fixed $\mu$ we observe increasing gains for increasing volatility. This effect is much more pronounced for the latter case (increasing $\sigma$).
Already from the results with varying step sizes in  Table \ref{Tab.II.2} such an effect was to be expected, as higher values of $\eta$ lead to larger step sizes here and to lower gains. 
Intuitively, larger volatility implies more repetitions and therefore a higher likelihood for the statistical arbitrage to end up with gains. This is also reflected by increasing values of $N$ in Table \ref{Tab.II.4}.

\begin{table}[t]
\begin{center}
{\footnotesize
\begin{tabular}{crrrcccccc}\toprule 
  $\eta$    & gain pa & median &  VaR${}_{0.95}$ & gain pt  & losses & (mean) & $\varnothing$ $N$ & (max)\\\midrule 
0.33  &   211  &  11,600  &  252,000  &  45 &  0.13  &  -29,400  &  5  &   30  \\ 
0.50  &   170  &  4,360  &  94,500  &  36  &  0.13  &  -11,000  &  5  &   30  \\ 
0.75  &   109  &  1,730  &  38,100  &  23  &  0.13  &  \phantom{0}-4,400  &  5  &   30  \\ 
1.00  &    64  &   913  &   20,400  &  14  &  0.12  &  \phantom{0}-2,340  &  5  &   30  \\ 
1.25  &    77  &   561  &   12,400  &    17  &  0.12  &  \hspace{1mm} -1,400  &  5  &   30  \\ 
2.00  &    42  &   197  &   4,430  &  \ 9  &  0.11  &  \phantom{00}\,\,-490  &   4  &   31  \\ 
3.00  &  34  &  81  &       1,680  &  \ 8  &  0.10  &  \phantom{00}\,-182  &  4  &   31  \\ 
 \bottomrule \\[2mm]
\end{tabular}}
\caption{Simulations for the embedded binomial trading strategy with different values of the drift $\mu$ (and hence $\eta$),  fixed $\sigma = 0.1$ and $n = 250,000$ runs; gain p.a.~denotes the gain per year, gain p.t.~denotes gain per trade.}\label{Tab.II.3}
\end{center}
\end{table}

\begin{table}[t]
\begin{center}
{\footnotesize 
\begin{tabular}{rrrrrrrrr}\toprule 
  $\eta$    & gain pa & median  & VaR${}_{0.95}$ & gain pt  & losses & (mean)   & $\varnothing$ $N$ & (max) \\ \midrule 
 0.50  &  74,500  &  222,000  &    -48,400  &  4,340  &  0.036  &  -2,770,000  &  17  &  270  \\ 
0.75  &  6,020  &  59,900  &   480,000  &   582  &  0.056  &  -79,400  &  10  &  120  \\ 
   1.00  &   241  &  4,710  &   80,500  &  37 &  0.090  &  -8,520  &  7  &   51  \\ 
1.25  &  67  &   541  &  12,700  &  16  &  0.124  &  -1,460  &  4  &   28  \\ 
   2.00  &  8  &  6  &  165  &  5 &  0.144  &   -22  &  2  &    9  \\ \bottomrule \\[2mm]
\end{tabular}}
\caption{Simulations for the embedded binomial trading strategy with different values of the volatility (and hence $\eta$), fixed $\mu = 0.1$; gain pa~denotes the gain per year, gain pt~denotes gain per trade.}\label{Tab.II.4}
\end{center}
\end{table}

\subsection{Follow-the-trend strategy}\label{Sec.ExtBinMod} 
As we have seen in the previous section, embedding a binomial model into continuous time 
is not able to exploit a large drift.
This motivates the introduction of a further step into the embedded model in order to
\emph{exploit existing trends} in the underlying. We focus on an \emph{upward trend}, while the strategy is easily adopted to the case for a downward trend.  We consider two-step binomial embedding: first, we specify barriers (up/down) as previously. If we twice observed up movements, we expect an upward  trend and exploit this in a further step. Consequently, here we will consider four stopping times (for iteration $i$): initial time $\tau_0^i$, and stopping times $\tau_1^i$, $\tau_2^i$ as previously and, in addition $\tau_3^i$.  
Most notably, this modelling implies a different choice of the filtration $\G$, see Equation \eqref{eqn:tildeG}.

The associated strategy is to trade in the following way: the first trading occurs as previously at the first time when the barriers $s(1+c)$ or $s(1-c)$ are hit. The next trading takes place when the neighbouring barriers are hit, in the first case $s$ or $s(1+2c)$ and in the second case $s$ or $s(1-2c)$, respectively. If a trend was detected (i.e.~the upper barrier $s(1+2c)$ was hit, as we consider the case of a positive drift), trading continues until a suitable stopping time.

\begin{figure}[!tb]{%
   \begin{overpic}[height=8.0cm, width = 8.5cm]{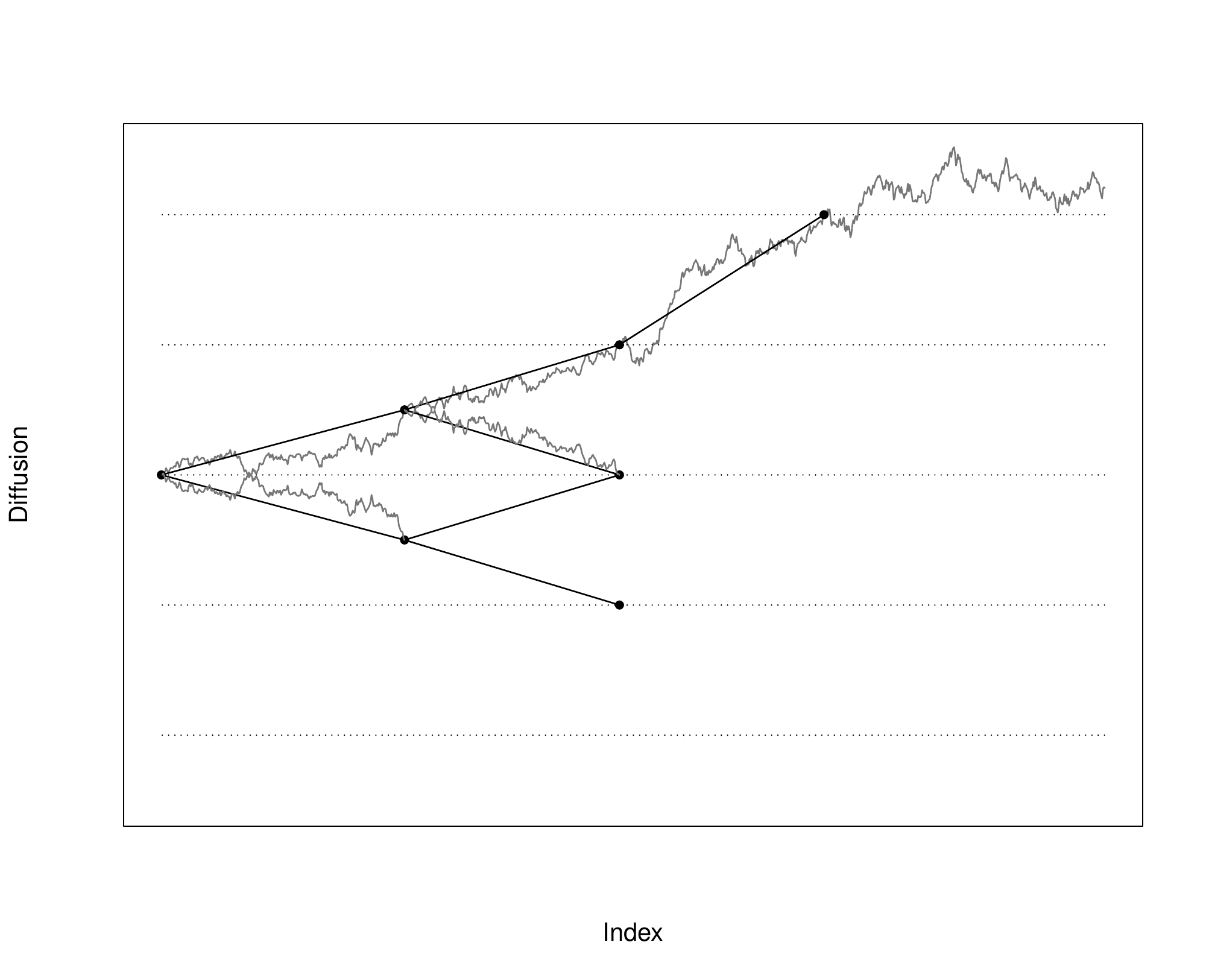}%

    \put(81,30){\line(0,1){5}} %
    \put(77,20){{\small
    $\tau_1^i$
    }}
    \put(114,155){{\small 
    $\sigma_1^i$
    }}
    \put(114,105){{\small 
    $\sigma_2^i$
    }}
    \put(114,75){{\small 
    $\sigma_3^i$
    }}
    \put(124,30){\line(0,1){5}} %
    \put(122,20){{\small
    $\tau_2^i$}}
    \put(169,30){\line(0,1){5}}
    \put(166,20){{\small
    $\tau_3^i$}}
        \put(10,117){{\tiny
    $s_0$}
    }
    \put(-15,86){{\tiny
    $s_0(1-2c)$}
    }
    \put(-15,57){{\tiny
    $s_0(1-4c)$}
    }
    \put(-15,147){{\tiny
    $s_0(1+2c)$}
    }
    \put(-15,177){{\tiny
    $s_0(1+4c)$}
    }
   \end{overpic}
 }
 \caption{Illustration of the stopping times defined in (\ref{eq:PartStopTime1}), (\ref{eq:PartStopTime2}) resp. (\ref{eq:PartStopTime3}). The first stopping takes place when the process reaches either the first upper or lower boundary $s_0^i(1 \pm c)$. Starting from the upper boundary the next stopping takes place if the process increases to the level $s_0^i(1+2c)$, decreases to the level $s_0^i(1-2c)$  or crosses the level $s_0$. In case the process reached the upper level a third stopping occurs at $\tau_3^i$.}\label{fig:StoppingTimes}
\end{figure}

More formally, this leads to the following procedure: let $i$ denote the current step of our iteration. We initialize at time $\tau_0^0=0$. Otherwise consider the initial time of our next iteration given by the the time where we finished the last repetition and denote this time by $\tau_0^i$ and the according level by $s_0^i=S_{\tau_0^i}$. 
Then, using again the property that $S$ is continuous, we define the following successive  stopping times: first, analogously to $t_1^i$ from Equation \eqref{StopTime_t1}, let
\begin{align}\label{eq:PartStopTime1}
 \tau_1^i &= \inf \big\{ t \in (\tau_0^i, T] \sd S_t \geq s_0^i(1+c) \text{ or } S_t \leq s_0^i (1-c)\big\}.
\end{align}
 In the same manner the second stopping occurs if either the upper level is reached, or the mid-level is crossed, or the bottom level is reached. The levels of course differ depending on whether $S_{\tau_1^i}=s_0^i(1+c)$ or $S_{\tau_1^i}=s_0^i(1-c)$. In this regard, we define (for the first case)
 \begin{align*}
     \sigma_1^i & = \inf \big\{ t \in (\tau_1^i, T] \sd  S_t \geq s_0^i(1+2c) \big\}\\
     \sigma_2^i & = \inf \big\{ t \in (\tau_1^i, T] \sd  S_t \leq s_0^i \big\}.
 \end{align*}
 For the second case, we set
 \begin{align*}
     \sigma_3^i & = \inf \big\{ t \in (\tau_1^i, T] \sd  S_t \leq s_0^i(1-2c) \big\}\\
     \sigma_4^i & = \inf \big\{ t \in (\tau_1^i, T] \sd  S_t \geq s_0^i\big\}.
 \end{align*}
 Altogether we obtain that
 \begin{align}\label{eq:PartStopTime2}
  \tau_2^i &=  
  \begin{cases} 
        \sigma_1^i \wedge \sigma_2^i   &  \text{ if } S_{\tau_1^i} = s_0^i(1+c),  \\  
        \sigma_3^i \wedge \sigma_4^i    & \text{otherwise}.
  \end{cases}\end{align}
Finally, we set \begin{align}\label{eq:PartStopTime3}
  \tau_3^i &= 
  \begin{cases}
  \inf \big\{ t \in (\tau_2^i, T] \sd S_t \leq s_0 \text{ or } S_t \geq s_0^i(1+4c) \big\},&\text{if } S_{\tau_2^i} =s_0^i(1+2c),\\
  \tau_2^i,&\text{otherwise.}
  \end{cases}
 \end{align}
 
 Denote by $\tau^{\text{max}}$ the last stopping time of $\tau_3^1,\tau_3^2, \dots$ which lies before $T$. 
 Then the statistical arbitrages traded on the partition of $S_{\tau^{\text{max}}}$ generated by the values $s_0(1 +2kc), \ k=0,1,2,\dots$ which defines the $\G$ on the path space of the diffusion.

 Trading will be executed at times $\tau_1^i$ to $\tau_3^i$ when the process reaches one of the predefined boundaries (or trading time is over). At time $\tau_2^i$ we check if a positive trend persists and trade on this trend. Recall the trading strategy $\bphi=(\phi_1,\phi_2^+,\phi_2^-)$ from Equations \eqref{Eq.1} to \eqref{Eq.3}. First, trading at the first two times  is executed as previously at times $t_0^i,\ t_1^i$, see Lemma \ref{Lem.II.3}: we hold on $[ \tau_0^i,\tau_1^i)$ the fraction  $\phi_1$ shares of $S$. After reaching $s_0^i(1+c)$ ($s_0^i(1-c)$, respectively) at time $\tau_1^i$ the trading strategy changes to holding $\phi_2^+$ ($\phi_2^-$) shares of $S$ until $\tau_2^i$.    The next trading can be split into the following three cases:

\begin{enumerate}[(i)]
 \item $\tau_2^i=\sigma_1^i$: in this case we reached the upper level $s_0^i(1+2c)$ and follow the (upward) trend by holding $\phi_3^{++}$ shares of $S$. This position will be  equalized at $\tau_3^i$ or if the final time is reached.
 \item $\tau_2^i$ equals $\sigma_2^i$ or $\sigma_4^i$: from the state $s_0^i(1+c)$ resp. $s_0^i(1-c)$ we arrived back at $s_0^i$ (or below resp. above). No trend was detected and the embedded binomial trading strategy ends by liquidating the position.
 \item $\tau_2^i$ equals $\sigma_4^i$: again, no (upward) trend was detected and the strategy ends by liquidation the position. 
 \end{enumerate}

\begin{figure}[t]
  \centering
  \begin{tikzpicture}[>=stealth, sloped]
    \matrix (tree) [%
      matrix of nodes,
      minimum size=0.5cm,
      column sep=1.2cm,
      row sep=0.4cm,
    ] 
    {
        &     &   & $S_3(\omega_1)=s^{+++}$\\  
        &     &         &\\
        &         & $S_2(\{\omega_1,\omega_5\})$      &\\
        & $S_1(\{\omega_1,\omega_2,\omega_5\}) $  &             &\\
        $S_0=s $  &         & $S_2(\{\omega_2,\omega_3\}) $   & $S_3(\omega_5)=s^{++-}$\\
        & $S_1(\{\omega_3,\omega_4\}) $ &             &\\
        &         & $S_2(\omega_4)=s^{--} $       &\\
    };
    \draw[-] (tree-5-1)   -- (tree-4-2);%
    \draw[-] (tree-5-1)   -- (tree-6-2);
    
    \draw[-] (tree-4-2)   -- (tree-3-3);%
    \draw[-] (tree-3-3)   -- (tree-1-4);
    \draw[-] (tree-3-3)   -- (tree-5-4);
 
    \draw[-] (tree-4-2)   -- (tree-5-3);
    \draw[-] (tree-6-2)   -- (tree-5-3);%
    \draw[-] (tree-6-2)   -- (tree-7-3);

  \end{tikzpicture}
  \caption{The embedded binomial model for the follow-the-trend strategy with positive drift. The filtration generated by the final states is generated by each $\{\omega_i\}$ for $i=1,4,5$ and $\{\omega_2,\omega_3\}$. We also denote the resulting outcomes by $s=s_0$, $s^+$, $s^-$, \dots and indicate this notation at some places.
  }\label{Fig:Lem53}
\end{figure}
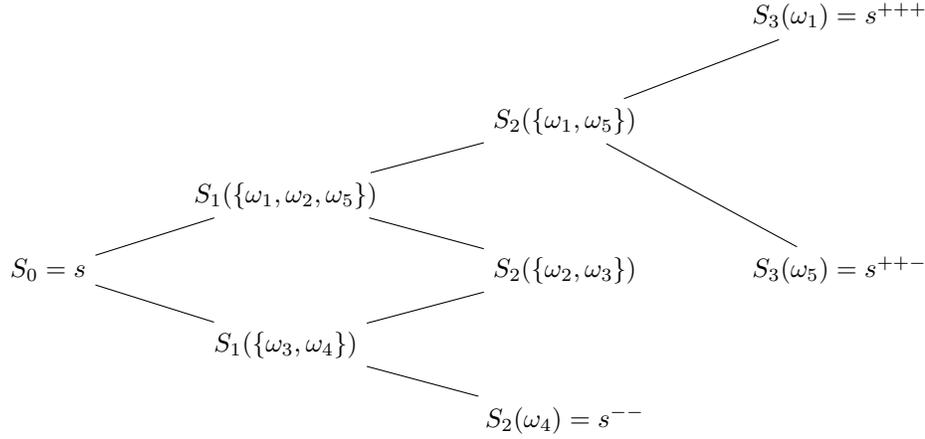
Since Lemma \ref{Lem.II.3} treats a related, but slightly different case we explicitly check in the following that the embedded binomial model indeed allows for {statistical arbitrage}. 

\subsubsection*{The embedded binomial follow-the-trend strategy}We consider $\tilde \Omega=\{\omega_1,\dots,\omega_5\}$ as depicted in Figure \ref{Fig:Lem53}. Let $S_0=s_0 \in \R_{\ge 0}$ and $S_1$ take the two values $s^+$ and $s^-$ such that
$$ S_1(\omega_1)=S_1(\omega_2)=S_1(\omega_5)=s^+, \qquad S_1(\omega_3)=S_1(\omega_4)=s^-. $$
At time $2$ we have the three possibilities $S_2(\omega_1)=S_2(\omega_5)=s^{++}$, $S_2(\omega_2)=S_2(\omega_3)=s^{+-}$ and $S_2(\omega_4)=s^{--}$. In the cases of $\omega_2,\dots,\omega_4$ the model stops. If, however, we saw two up-movements, the model continues and ends up at time $3$ in the states $S_3(\omega_1)=s^{+++}$ or $S_3(\omega_5)=s^{++-}$. 
We assume without loss of generality that $s^+>s_0,$ $s^-<s_0$, and $s^{++}>s^+$, $s^-<s^{+-}<s^+$, and $s^{--}<s^-$ as well as $s^{++-}< s^{++} < s^{+++}$, \ie we consider binomial models as presented in Figure \ref{Fig:Lem53}.

The dynamic trading strategies can be described by
$$ V_3(\phi)=\phi_1 \Delta S_1 + \phi_2 \Delta S_2 + \phi_3 \Delta S_3, $$
with $\phi_1$, $\phi_2^+$, $\phi_2^-$ and $\phi_3^{++}$ being the respective values in the states $\tilde \Omega$, $\{\omega_1,\omega_2,\omega_5\}$, $\{\omega_3,\omega_4\}$ and $\{\omega_1,\omega_5\}$ at times $1,2,$ and $3$, respectively. 
Moreover, we choose 
\begin{align}\label{eqn:tildeG}
   \tilde \G=\sigma(\{\omega_1\},\{\omega_2,\omega_3\},\{\omega_4\},\{\omega_5\}),
\end{align} i.e.~the $\sigma$-field generated by the final states of the embedded binomial model. The following lemma shows that there is always statistical arbitrage in the follow-the-trend strategy if there is statistical arbitrage in the recombining two-period sub-model consisting only of the first two periods.

Denote
  \begin{align}\label{def:gamma} 
    \bgamma =& \frac{1}{D}
       \left( \begin{matrix}
          q \Delta S_2(\omega_2) \Delta S_2(\omega_4) \\
          \Delta S_1(\omega_4) \Delta S_2(\omega_3) - \big( q \Delta S_1(\omega_2) + \Delta S_1(\omega_3)\big) \Delta S_2(\omega_4) \\
          -q \Delta S_2(\omega_2) \Delta S_1(\omega_4)
       \end{matrix}\right)
  \end{align}
with $D$ given in Lemma \ref{Lem.II.3}. The following results shows, that in the follow-the-trend model there is statistical arbitrage, if \eqref{NSArecbinom} holds.

\begin{proposition}\label{Lem.II.4}
 If $\bphi$ is the strategy from Lemma \ref{Lem.II.3}, then for any $\alpha \ge 0$, $\bpsi=(\psi_1,\psi_2^+,\psi_2^-,\psi_3^{++})$ 
with 
$$ \psi_3^{++} = \frac{1-\alpha}{\Delta S_3(\omega_1)-\Delta S_3(\omega_5)} $$
and 
\begin{align*}
      \left( \begin{matrix}
         \psi_1 \\
         \psi_2^+ \\
         \psi_2^-
      \end{matrix}\right)
      &=  \bphi- \Delta S_3(\omega_1) \psi_3^{++} \bgamma
\end{align*}
is a $\tilde \G$-arbitrage strategy, if \eqref{NSArecbinom} holds. 
\end{proposition}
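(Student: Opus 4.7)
The goal is to verify the defining conditions of a $\tilde{\G}$-arbitrage in the five-state model: $V_3(\bpsi)(\omega) \ge 0$ on each singleton atom $\{\omega_1\}, \{\omega_4\}, \{\omega_5\}$ of $\tilde{\G}$; the $P$-weighted sum of $V_3(\bpsi)$ on the atom $\{\omega_2,\omega_3\}$ is non-negative; and $E[V_3(\bpsi)] > 0$. The pivotal tool is the matrix $A$ from \eqref{Eq.II.3.7}. Lemma \ref{Lem.II.3} gives $A \bphi = \mathds{1}$, while I claim that $\bgamma$ defined by \eqref{def:gamma} satisfies $A \bgamma = e_1 := (1,0,0)^\top$. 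The hypothesis \eqref{NSArecbinom} enters only through $\det A \neq 0$ (via Proposition \ref{Lem.II.1}), ensuring $\bphi$ and $\bgamma$ are well-defined.

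The identity $A\bgamma = e_1$ is a row-by-row algebraic check. Multiplying the first row of $A\bgamma$ by $D$ recovers exactly the expression defining $D$, so $(A\bgamma)_1 = 1$. In the second row, the contributions from $\gamma_1$ and $\gamma_3$ are $q\Delta S_2(\omega_2)\Delta S_2(\omega_4)\Delta S_1(\omega_4)$ and $-q\Delta S_2(\omega_2)\Delta S_1(\omega_4)\Delta S_2(\omega_4)$ respectively, which cancel. In the third row, the sum of the three $\gamma$-terms (times $D$) telescopes to $0$. This is the most tedious step, and is the real computational obstacle of the proof.

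Granted this, linearity gives $A(\bphi - \Delta S_3(\omega_1) \psi_3^{++} \bgamma) = (1-\Delta S_3(\omega_1)\psi_3^{++},\,1,\,1)^\top$. Because the tree terminates at time $2$ on $\{\omega_2,\omega_3,\omega_4\}$, we have $\Delta S_3 = 0$ there, so the two-period part \emph{is} $V_3(\bpsi)$ on these states. Hence $V_3(\bpsi)(\omega_4) = 1$, and multiplying the third-row identity by $P(\omega_3)$ (and recalling $q = P(\omega_2)/P(\omega_3)$) yields $P(\omega_2) V_3(\bpsi)(\omega_2) + P(\omega_3) V_3(\bpsi)(\omega_3) = P(\omega_3) > 0$. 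At $\omega_1$, adding the third-period gain $\psi_3^{++}\Delta S_3(\omega_1)$ precisely cancels the deficit, giving $V_3(\bpsi)(\omega_1) = 1$. Finally, $\Delta S_1(\omega_5) = \Delta S_1(\omega_1)$ and $\Delta S_2(\omega_5) = \Delta S_2(\omega_1)$, so the two-period part at $\omega_5$ again equals $1 - \Delta S_3(\omega_1)\psi_3^{++}$; adding $\psi_3^{++}\Delta S_3(\omega_5)$ and using $\psi_3^{++}(\Delta S_3(\omega_1)-\Delta S_3(\omega_5)) = 1-\alpha$ gives $V_3(\bpsi)(\omega_5) = \alpha \ge 0$.

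Collecting, all four $\tilde{\G}$-conditional values are non-negative, and
\[
E[V_3(\bpsi)] \;=\; P(\omega_1) + P(\omega_4) + P(\omega_3) + \alpha P(\omega_5) \;>\; 0,
\]
so $\bpsi$ is a $\tilde{\G}$-arbitrage. The conceptual picture is that $\bgamma$ behaves like $A^{-1} e_1$: the correction $-\Delta S_3(\omega_1)\psi_3^{++}\bgamma$ absorbs the $\omega_1$-excess of the third-period position without disturbing the $\omega_4$- or $(\omega_2,\omega_3)$-contributions, while the automatic match at $\omega_5$ is secured by the choice of $\psi_3^{++}$.
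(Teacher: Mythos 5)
Your proof is correct and follows essentially the same route as the paper's: both reduce the four $\tilde\G$-conditions to the linear system with the extended matrix built from $A$ of \eqref{Eq.II.3.7}, use $\bphi=A^{-1}\Ind$ from Lemma \ref{Lem.II.3} together with $\bgamma=A^{-1}e_1$, and exploit $\Delta S_i(\omega_5)=\Delta S_i(\omega_1)$, $i=1,2$, so that the choice of $\psi_3^{++}$ turns the $\omega_5$-value into $\alpha\ge 0$. The only presentational difference is that you verify the given $\bpsi$ forwards (explicitly checking $A\bgamma=e_1$, which the paper leaves implicit in writing $A^{-1}(\Delta S_3(\omega_1)\psi_3^{++},0,0)^\top=\Delta S_3(\omega_1)\psi_3^{++}\bgamma$), whereas the paper constructs $\bpsi$ by solving $\tilde A\bpsi=\tilde\bx$ with $\bx=\Ind_3$.
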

Of course, the possible choice $\alpha=1$ leads to $\psi_3^{++}=0$, such that in this case the statistical arbitrage in the first two periods is exploited and the strategy coincides with that of Lemma \ref{Lem.II.3}. 
\begin{proof}
Following Definition \ref{Def.I.1} the strategy $\bpsi$ is a statistical $\tilde \G$-arbitrage strategy if the following holds
\begin{align} %
 \psi_1 \Delta S_1(\omega_1) + \psi_2^+ \Delta S_2(\omega_1) + \psi_3^{++} \Delta S_3(\omega_1) &\geq 0 \label{Eq.II.4.2.1}\\[1ex]
 \psi_1 \Delta S_1(\omega_4) + \psi_2^- \Delta S_2(\omega_4) &\geq 0 \label{Eq.II.4.2.3}\\[1ex]
 \begin{split}\psi_1 \Delta S_1(\omega_2)P(\omega_2) + \psi_2^+ \Delta S_2(\omega_2) P(\omega_2) & \\
 + \psi_1 \Delta S_1(\omega_3)P(\omega_3) + \psi_2^- \Delta S_2(\omega_3) P(\omega_3) &\geq 0, \end{split} \label{Eq.II.4.2.4} \\[1ex]
 \psi_1 \Delta S_1(\omega_5) + \psi_2^+ \Delta S_2(\omega_5) + \psi_3^{++} \Delta S_3(\omega_5) &\geq 0 \label{Eq.II.4.2.2}
\end{align}
and, in addition, at least one of the inequalities is strict. 

We extend the setting from Lemma \ref{Lem.II.3}. First, we let
  $$
    \tilde A = 
      \begin{pmatrix}
  \Delta S_1(\omega_1)        & \Delta S_2(\omega_1)    & 0 & \Delta S_3(\omega_1) \\
  \Delta S_1(\omega_4)        & 0       & \Delta S_2(\omega_4)& 0 \\
  q \Delta S_1(\omega_2) + \Delta S_1(\omega_3) & q \Delta S_2(\omega_2)  & \Delta S_2(\omega_3)& 0 \\
  \Delta S_1(\omega_5)        & \Delta S_2(\omega_5)    & 0 & \Delta S_3(\omega_5) 
  \end{pmatrix} .$$ 
Then Equations \eqref{Eq.II.4.2.1}--\eqref{Eq.II.4.2.2} are equivalent to $\tilde A \bpsi \ge 0$. 
Note that $S_i(\omega_1)=S_i(\omega_5)$ for $i=1,2$ such that $\tilde A  \bpsi = \tilde \bx$ with $\tilde \bx=(x_1,\dots,x_4)^\top$ reveals 
  $$ \psi_3^{++}= \frac{x_1-x_4}{\Delta S_3(\omega_1)-\Delta S_3(\omega_5)}.$$
As for Lemma \ref{Lem.II.3}, we will consider the case where $\tilde A$ is invertible. Note that the three times three submatrix (upper left) of $\tilde A$ equals the matrix $A$ from Equation \eqref{Eq.II.3.7}. Then, denoting $\bx=(x_1,x_2,x_3)^\top$,
  \begin{align*}
      \left( \begin{matrix}
         \psi_1 \\
         \psi_2^+ \\
         \psi_2^-
      \end{matrix}\right)
      &=  A^{-1} \bx -  A^{-1} \left( \begin{matrix}
    \Delta S_3(\omega_1) \psi_3^{++} \\ 0 \\ 0
    \end{matrix}\right) \\
          &=  A^{-1} \bx - \Delta S_3(\omega_1) \psi_3^{++} \bgamma
  \end{align*}
with vector $\bgamma$ from Equation \eqref{def:gamma}.
Up to now we where free to choose any $\tilde \bx \in \R_{>0}^4$. If we choose, as for Lemma \ref{Lem.II.3}, $\bx=\Ind_3$, then $\bphi=A^{-1}\Ind_3$ is the strategy computed in Lemma \ref{Lem.II.3} and the result follows.
\end{proof}

\subsection*{Simulation results}
We study the performance of the follow-the-trend strategy on basis of various simulations and compare it to the results of the embedded binomial strategies. As previously, we simulate a geometric Brownian motion according to Equation \eqref{Eq.II.3.8} with  $\mu = 0.1241$, $\sigma = 0.0837$, $S_0 = 2186$, $T=1$ (year), discretize by 1000 steps and embed the according models repeatedly in this time interval. In this case, Proposition \ref{Lem.II.4} grants the existence of statistical arbitrage which we will exploit in the following. 

Contrary to the intention of improving the average gain of the follow-the-trend strategy, the simulations show that this goal is not achieved. But, in general, the follow-the-trend strategy leads to a reduction of risk compared to the embedded-binomial trading strategy, visible through the reduced Value-at-Risk  in Tables \ref{Tab.II.5} to \ref{Tab.II.8}. The reduction of the average gain  and its mean  can be explained from  the observations in Section \ref{sec:3.4}: the follow-the-trend-strategy introduces additional scenarios with smaller gains (compare Figure \ref{Fig:Lem53}). This leads to a reduction of the average gain and, at the same time, to a reduction of risk.

\begin{table}[t]
\begin{center}
{\footnotesize
\begin{tabular}{cccccccc}\toprule 
  gain pa & mean & VaR${}_{0.95}$ & gain pt & losses  & (mean) & $\varnothing$ $N$  & (max) \\ \midrule 
  27.8  &   164  &   4,180  &  9.17  &  0.171  &  -554  &  3 &   21 \\
\bottomrule \\[2mm]
\end{tabular} }
\caption{
Simulations for the follow-the-trend strategy for 1 mio runs. In comparison to Table \ref{Tab.II.1} (where the notation is explained) we find slightly smaller gains together with a smaller risk. } \label{Tab.II.5}
\end{center}
\end{table}

The results from Table \ref{Tab.II.6} to \ref{Tab.II.8} show a similar dependence on the choice of the parameters and of the barrier of the follow-the-trend strategy compared to the embedded binomial strategy. In general, we record smaller gains together with smaller risk with one exception: the last line of Table \ref{Tab.II.8} shows that a small $\sigma$ allows the follow-the-trend strategy to exploit the existing (although small) positive trend in the data better. Of course, this comes with a higher risk, which is clearly visible. 

Summarizing, the follow-the-trend strategy  shows (in general) smaller gains together with a smaller risk. The follow-the-trend strategy is, however, able to exploit a positive trend when $\sigma$ is very small.  %

\begin{table}[t]
\begin{center}
{\footnotesize 
\begin{tabular}{rrrrrrrcr}\toprule 
  c \ \ \ \     & gain pa & median & VaR${}_{0.95}$ & gain pt & losses & (mean) & $\varnothing$ $N$ & \hspace{-3mm} (max) \hspace{-3mm}~\\\midrule 
$0.005 \, \nicefrac{\mu}{\sigma} $  &   404  &  3,300    &  51,300  &  71.1  &  0.098  &  -5,590  &  6 &   44  \\ 
$0.01 \, \nicefrac{\mu}{\sigma} $ &  32  &   162   &  4,130  &  10.7  &  0.169  &  -548  &  3  &   18  \\ 
$0.02 \, \nicefrac{\mu}{\sigma} $ &  6  &  8    &  272  &  3.9  &  0.238  &  -45  &  2  &    7  \\ 
$0.04 \, \nicefrac{\mu}{\sigma} $ &   3  &  1   &  23 &   2.6  &  0.122  &  -2  &  1  &    3  \\ \bottomrule \\[2mm]
\end{tabular}}
\caption{Simulations for the follow-the-trend strategy with varying barrier levels $c$. In the simulations for Table \ref{Tab.II.5} we used $c=0.01\, \nicefrac \mu \sigma$.}\label{Tab.II.6}
\end{center}
\end{table}

\begin{table}[t]
  \begin{center}
  {\footnotesize 
  \begin{tabular}{crrrrcrcc}\toprule 
  $\eta$  & gain pa & median  & VaR${}_{0.95}$ & gain pt  & losses & (mean) & $\varnothing$ $N$ & \hspace{-3mm} (max) \hspace{-3mm}~ \\\midrule 
0.33  &   282  &  9,340    &  203,000  &  71  &  0.16  &  -26,100  &  4  &   24  \\ 
0.50  &   122  &  3,500    &  76,200   &  31  &  0.16  &  -9,780   &  4  &   24  \\ 
0.75  &    99  &  1,390    &  30,400   &  26  &  0.16  &  -3,890   &  4  &   22  \\ 
1.00  &    78  &   734     &  16,200   &  20  &  0.15  &  -2,050   &  4  &   23  \\ 
1.25  &    54  &   452     &  9,950    &  15  &  0.15  &  -1,260   &  4  &   23  \\ 
2.00  &    34  &   162     &  3,570    &  10  &  0.14  &  -436     &  3  &   21  \\ 
3.00  &    24  &     66    &  1,390    &  7   &  0.13  &  -165     &  3  &   21  \\ 
 \bottomrule \\[2mm]
\end{tabular}}
    \caption{
    Simulations for the follow-the-trend strategy with varying values of the drift (and hence $\eta=\nicefrac{\mu}{\sigma}$) with fixed $\sigma = 0.1$.}\label{Tab.II.7}
  \end{center}
\end{table}

\begin{table}[t]
  \begin{center}
  {\footnotesize 
  \begin{tabular}{crrrrcrrr}\toprule 
  $\eta$    & gain pa & median &  VaR${}_{0.95}$ & gain pt  & losses & (mean) & $\varnothing$ $N$ & \hspace{-3mm} (max) \hspace{-3mm}~ \\\midrule 
 0.33  &  65,600  &  2,030,000    &  22,700,000  &  6,640  &  0.06  &  -2,770,000  &  10  &  100  \\ 
 0.50  &  2,010  &  40,700        &  586,000    &   284    &  0.09  &  -62,500     &  7  &   58  \\ 
 0.75  &   292      &  3,930      &  69,200     &   60     &  0.12  &  -7,940       &  5  &   34  \\ 
 1.00  &  44     &    732         &  16,400     &  11      &  0.15  &  -2,080      &  4  &   24  \\ 
 1.25  &  27     &   200          &  5,330      &  9       &  0.18  &  -729        &  3  &   17  \\ 
 2.00  &  10     &  15            &  469        &  5       &  0.20  &  -68         &  2  &    9  \\ 
   \bottomrule \\[2mm]
\end{tabular}}

    \caption{
    Simulations for the follow-the-trend strategy with varying values of the volatility $\sigma$ and fixed $\mu = 0.1$.}\label{Tab.II.8}
  \end{center}
\end{table}

\newpage

\subsection{Partition strategies on the final value}\label{Section.GenStatArb}

In this section we study statistical arbitrage with respect to the  information system $\G^{\text{fin}}$ defined  by
\begin{equation} \label{Gdich}
  \{S_T \geq s_0\}=\{\omega_1,\omega_2,\omega_3\}, \hspace{5ex}\text{ and }  \{S_T < s_0\}=\{\omega_4,\omega_5\}.
\end{equation}
This information system corresponds to the two scenarios that the value of the asset increased or decreased at time $T$. The statistical $\G^{\text{fin}}$-arbitrage corresponds to a strategy which yields an average profit in both of these scenarios.

As an example, we continue in the setting of the follow-the-trend model considered in the previous Section \ref{Sec.ExtBinMod}, although other settings are clearly possible. 
Recall that this means we are focusing on an upward trend. We add the assumption that $s^{++-} < s_0$ such that also the third period allows for interesting outcomes (below \emph{or} above $s_0$, compare Figure \ref{Fig:Lem53}). The new information system will lead to a different trading strategy as we detail in the following.

\begin{proposition}
In the follow-the-trend model with $s^{++-} < s_0$ there is  $\G^{\text{fin}}$-arbitrage if 
\begin{align} \label{41}%
 \ \Big( \psi_1 \Delta S_1(\omega_1) + \psi_2^+ \Delta S_2(\omega_1) + \psi_3^{++} \Delta S_3(\omega_1)\Big) & \notag\\[1ex]
+\Big( \psi_1 \Delta S_1(\omega_{3}) + \psi_2^- \Delta S_2(\omega_{3})\Big) \frac{P(\omega_{3})}{P(\omega_1)} & \notag\\[1ex]
 +\Big( \psi_1 \Delta S_1(\omega_2) + \psi_2^+ \Delta S_2(\omega_2)\Big) \frac{P(\omega_2)}{P(\omega_1)} & \ge 0, \\[2mm]
 \  \Big(\psi_1 \Delta S_1(\omega_{4}) + \psi_2^- \Delta S_2(\omega_{4})  \Big)& \notag \\[1ex]
 + \Big(\psi_1 \Delta S_1(\omega_5) + \psi_2^+ \Delta S_2(\omega_5) + \psi_3^{++} \Delta S_3(\omega_5)\Big) \frac{P(\omega_5)}{P(\omega_{4})} &\geq 0 \label{42}
\end{align}
and, in addition, at least one of the inequalities is strict. 
\end{proposition}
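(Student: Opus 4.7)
The plan is to unpack Definition \ref{Def.I.1} for $\G=\G^{\text{fin}}$ in this finite-state setting and show that it translates directly into the two inequalities (41) and (42). Since $\G^{\text{fin}}$ is generated by the two-atom partition $A_1 = \{\omega_1,\omega_2,\omega_3\}$ and $A_2 = \{\omega_4,\omega_5\}$, the conditional expectation $E[V_3(\bpsi)\mid\G^{\text{fin}}]$ is constant on each atom with value $E[V_3(\bpsi)\Ind_{A_k}]/P(A_k)$. Consequently, condition (i) in Definition \ref{Def.I.1} is equivalent to the pair of inequalities
\begin{equation*}
\sum_{\omega \in A_1} P(\omega) V_3(\bpsi)(\omega) \geq 0 \quad\text{and}\quad \sum_{\omega \in A_2} P(\omega) V_3(\bpsi)(\omega) \geq 0.
\end{equation*}

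Next, I would expand $V_3(\bpsi)(\omega) = \psi_1 \Delta S_1(\omega) + \psi_2(\omega)\Delta S_2(\omega) + \psi_3(\omega)\Delta S_3(\omega)$ state by state, using the predictability structure encoded in Figure \ref{Fig:Lem53}: $\psi_2 = \psi_2^+$ on $\{\omega_1,\omega_2,\omega_5\}$, $\psi_2 = \psi_2^-$ on $\{\omega_3,\omega_4\}$, $\psi_3 = \psi_3^{++}$ on $\{\omega_1,\omega_5\}$, and $\psi_3 = 0$ on $\{\omega_2,\omega_3,\omega_4\}$ since the follow-the-trend model stops at time~$2$ along those branches. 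Substituting these into the two sums above and dividing through by $P(\omega_1)$ and $P(\omega_4)$, respectively, recovers exactly (41) and (42).

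Finally, since the probability space is finite, $E[V_3(\bpsi)] = \sum_{\omega \in A_1}P(\omega)V_3(\bpsi)(\omega) + \sum_{\omega \in A_2}P(\omega)V_3(\bpsi)(\omega)$, so condition (ii), i.e.\ $E[V_3(\bpsi)] > 0$, follows from (41) and (42) combined with the strictness of at least one inequality. Admissibility and integrability are automatic on a finite sample space, so $\bpsi$ qualifies as a statistical $\G^{\text{fin}}$-arbitrage. The assumption $s^{++-} < s_0$ is what places $\omega_5$ into the lower atom $A_2$, which is precisely why the term $\psi_3^{++}\Delta S_3(\omega_5)$ appears inside (42) rather than inside (41); the rest of the argument is essentially just careful bookkeeping, and this is the only real obstacle.
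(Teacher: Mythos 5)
Your proposal is correct and matches the paper's reasoning: the paper simply states that the proof is immediate, and the content it has in mind is exactly your unpacking of $E[V_3(\bpsi)\mid\G^{\text{fin}}]$ on the two atoms $\{\omega_1,\omega_2,\omega_3\}$ and $\{\omega_4,\omega_5\}$, with the inequalities \eqref{41}--\eqref{42} being the atom-wise conditions after dividing by $P(\omega_1)$ and $P(\omega_4)$, and strictness of one of them yielding $E[V_3(\bpsi)]>0$. Your remarks on the predictability structure, the role of $s^{++-}<s_0$, and the automatic admissibility in the finite model are all accurate bookkeeping of the same argument.
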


The proof is immediate. Note that here there is a lot of freedom in choosing such strategies. Indeed, we will pursue choosing a strategy matching our previous strategies for better comparability.

\begin{example}\label{example:FinValStrat} We consider a special case of \eqref{41}, \eqref{42}: we additionally assume that the first line of Equation \eqref{41} and the first line of Equation \eqref{42} is non-negative. Then, the strategy $\bpsi$ is a $\G^{\text{fin}}$-arbitrage if 
\begin{align}\label{43} 
  \psi_1 \Delta S_1(\omega_1) + \psi_2^+ \Delta S_2(\omega_1) + \psi_3^{++} \Delta S_3(\omega_1) & \ge 0,\\[1ex]
 \psi_1 \Delta S_1(\omega_3) + \psi_2^- \Delta S_2(\omega_{3}) & \notag\\[1ex]
 + \Big(\psi_1 \Delta S_1(\omega_1) + \psi_2^+ \Delta S_2(\omega_2)\Big) \frac{P(\omega_2)}{P(\omega_3)} & \ge 0, \\[2mm]
 \psi_1 \Delta S_1(\omega_3) + \psi_2^- \Delta S_2(\omega_4)  & \ge 0 \\[1ex]
 \psi_1 \Delta S_1(\omega_1) + \psi_2^+ \Delta S_2(\omega_1) + \psi_3^{++} \Delta S_3(\omega_5) &\geq 0,\label{44}
\end{align}
and at least one inequality is strict. Note that we used $\Delta S_1(\omega_3)=\Delta S_1(\omega_4)$,  $\Delta S_1(\omega_1)=\Delta S_1(\omega_2)=\Delta S_1(\omega_5)$ and $\Delta S_2(\omega_1)=\Delta S_2(\omega_5) $ from Section \ref{Sec.ExtBinMod}. This choice is similar to the previously studied  partition strategies and we compute a strategy explicitly. In this regard, define the matrix $A$ by
  $$
     A = 
      \begin{pmatrix}
  \Delta S_1(\omega_1)        & \Delta S_2(\omega_1)    & 0 & \Delta S_3(\omega_1) \\ 
  \Delta S_1(\omega_3)  + r \Delta S_1(\omega_1)       & r \Delta S_2(\omega_2)        & \Delta S_2(\omega_{3})& 0 \\
   \Delta S_1(\omega_3) & 0  & \Delta S_2(\omega_4)& 0 \\
  \Delta S_1(\omega_1)        & \Delta S_2(\omega_1)    & 0 & \Delta S_3(\omega_5) 
  \end{pmatrix}$$
  with $r=\tfrac{P(\omega_2)}{P(\omega_3)}.$
  If $A$ is invertible, for any $\alpha \ge 0$, the strategy $\bpsi$ given by
  $$ \psi_3^{++} = \frac{1-\alpha}{\Delta S_3(\omega_1)-\Delta S_3(\omega_5)} $$
  and 
  \begin{align*}
      \left( \begin{matrix}
         \psi_1 \\
         \psi_2^+ \\
         \psi_2^-
      \end{matrix}\right)
      &= \bphi - \Delta S_3(\omega_1) \psi_3^{++} \bgamma
\end{align*}
is a $\G^{\text{fin}}$-arbitrage. 
Here, 
$\bphi = \tfrac 1 D  (\xi^1, \xi^2, \xi^3)$ with
 \small{\begin{align*}
  \xi^1 &= 
  \Big( r\Delta S_2(\omega_2) - \Delta S_2(\omega_1)  \Big) \Delta S_2(\omega_4) +  \Delta S_2(\omega_1) \Delta S_2(\omega_3) ,\\
  \xi^2 &= \Big( \Delta S_1(\omega_3) - \Delta S_1(\omega_1) \Big) \Delta S_2(\omega_3) 
         + \Big( \Delta S_1(\omega_1) - \Delta S_1(\omega_3) - r \Delta S_1(\omega_1)  \Big) \Delta S_2(\omega_4),\\
  \xi^3 &= r \Delta S_1(\omega_1) \Big( \Delta S_2(\omega_2)-\Delta S_2(\omega_1) \Big) - r \Delta S_2(\omega_2) \Delta S_1(\omega_3),
 \intertext{and}
    D   & = \Big(r \Delta S_1(\omega_1) \Delta S_2(\omega_2) - \big(\Delta S_1(\omega_3)    +r \Delta S_1(\omega_2) \big) \Delta S_2(\omega_1)  \Big) \Delta S_2(\omega_4) \\
        &+ \Delta S_1(\omega_3) \Delta S_2(\omega_1) \Delta S_2(\omega_{3}),  
 \end{align*}} 
 computed analogously to Lemma \ref{Lem.II.3}. In addition,
\begin{align*}
\bgamma = \frac{1}{D}
       \left( \begin{matrix}
          r \Delta S_2(\omega_2) \Delta S_2(\omega_4) \\
          \Delta S_1(\omega_3) \Delta S_2(\omega_3) - \big( r \Delta S_1(\omega_1) + \Delta S_1(\omega_3)\big) \Delta S_2(\omega_4) \\
          -r \Delta S_2(\omega_2) \Delta S_1(\omega_3)
       \end{matrix}\right), %
\end{align*}
and the computation of the strategy is finished. \hfill $\diamond$ 
\end{example}

\begin{remark}\label{Remark.Dich.Strat}
Under the same assumptions as in the previous example we aim to find a $\G^{\text{fin}}$-arbitrage strategy fulfilling equations (\ref{43}) - (\ref{44}).
In that case the strategy $(\Phi, \psi^{++})$ with $\Phi = (\xi^1, \xi^2, \xi^3)$ as in Lemma \ref{Lem.II.3} and 
\begin{equation*}
 -\frac{1}{\Delta S_3(\omega_1)} \leq \psi^{++} \leq -\frac{1}{\Delta S_3(\omega_5)}
\end{equation*}
is a $\G^{\text{fin}}$-arbitrage strategy. To see this remind that 
\begin{align*} 
  \xi^1 \Delta S_1(\omega_1) + \xi^2 \Delta S_2(\omega_1) & \ge 0,\\[1ex]
 \xi^1 \Delta S_1(\omega_2) + \xi^2 \Delta S_2(\omega_2) & \notag\\[1ex]
 + \Big(\xi^1 \Delta S_1(\omega_3) + \xi^3 \Delta S_2(\omega_3)\Big) \frac{P(\omega_3)}{P(\omega_2)} & \ge 0, \\[2mm]
 \xi^1 \Delta S_1(\omega_4) + \xi^3 \Delta S_2(\omega_4)  & \ge 0 \\[1ex]
 \xi^1 \Delta S_1(\omega_5) + \xi^2 \Delta S_2(\omega_5) &\geq 0,
\end{align*}
where $\xi^1 \Delta S_1(\omega_1) + \xi^2 \Delta S_2(\omega_1) = \xi^1 \Delta S_1(\omega_5) + \xi^2 \Delta S_2(\omega_5)$. We are looking for $\psi^{++}$ with 
\begin{align*}
 B + \psi^{++} \Delta S_3(\omega_1) \geq 0,\\
 B + \psi^{++} \Delta S_3(\omega_5) \geq 0,
\end{align*}
where $B := \xi^1 \Delta S_1(\omega_1) + \xi^2 \Delta S_2(\omega_1)$. This results in
\begin{align*}
 \frac{B}{\Delta S_3(\omega_1)} \geq - \psi^{++},\\
 \frac{B}{\Delta S_3(\omega_5)} \leq - \psi^{++}.
\end{align*}
Note that $B \geq 0$, as $\Phi$ is a statistical arbitrage strategy. Besides that we have $\Delta S_3 (\omega_1) > 0$ and $\Delta S_3 (\omega_5) < 0$ and we therefore obtain
\begin{equation*}
 -\frac{B}{\Delta S_3(\omega_1)} \leq \psi^{++} \leq -\frac{B}{\Delta S_3(\omega_5)}.
\end{equation*}
As $B$ was set equal to 1 in Lemma \ref{Lem.II.3} we gain in this setting the special condition
\begin{equation*}
 -\frac{1}{\Delta S_3(\omega_1)} \leq \psi^{++} \leq -\frac{1}{\Delta S_3(\omega_5)},
\end{equation*}
but of course strategies can be derived for any $B \geq 0$. 
\end{remark}

\subsection*{Simulation results}
Again, we study the performance of the strategy, this time the strategy derived in Example \ref{example:FinValStrat} with a partition (above/below) on the final value of the stock. We perform various simulations. As previously, we simulate a geometric Brownian motion according to Equation \eqref{Eq.II.3.8} with  $\mu = 0.1241$, $\sigma = 0.0837$, $S_0 = 2186$, $T=1$ (year), discretize by 1000 steps and embed the according models repeatedly in this time interval. The properties for existence of a v in this setting are confirmed numerically.

As pointed out before, the statistical arbitrages are with respect to different information fields. %
By our variant of $\G^{\text{fin}}$-arbitrage chosen in Example \ref{example:FinValStrat} we indeed find very similar results to the follow-the-trend strategy as one can see in Table \ref{Tab.II.9} to \ref{Tab.II.12}.

\begin{table}[t]
\begin{center}
{\footnotesize 
\begin{tabular}{ccccccccc}\toprule 
   gain pa & median & VaR${}_{0.95}$  & gain pt & losses & (mean) & $\varnothing$ $N$ &  \hspace{-3mm} (max) \hspace{-3mm}~ \\\midrule 
  28.6  &   167    &  4,290  &  8.76  &  0.158  &  -544  &  3  &   20  \\ 
\bottomrule \\[2mm]
\end{tabular}}
\caption{Statistical $\G^{\text{fin}}$-arbitrage trading strategy simulation results for 1 mio simulations with $c^i = 0.01 \, \eta \, S_{\sigma_0^i}$. In comparison to Table \ref{Tab.II.1} (the embedded binomial strategy) we find slightly smaller gains together with smaller losses, while the gains are larger than in Table \ref{Tab.II.5} (the follow-the-trend strategy).  }\label{Tab.II.9}
\end{center}
\end{table}

\begin{table}[t] 
\begin{center}
{\footnotesize 
\begin{tabular}{rrrrrrrcc}\toprule 
  c  \ \ \ \  & gain pa & median &  VaR${}_{0.95}$ & gain pt  & losses &(mean)  & $\varnothing$ $N$ &  \hspace{-3mm} (max) \hspace{-3mm}~ \\\midrule 
$0.005 \, \nicefrac{\mu}{\sigma} $  &   356  &  3,280   &  51,500  &  58 &  0.09  &  -5,510  &  6  &   49  \\ 
$0.01 \, \nicefrac{\mu}{\sigma} $   &    28  &   166    &  4,290   &  9  &  0.15  &  -543  &  3  &   19  \\ 
$0.02 \, \nicefrac{\mu}{\sigma} $   &   6    &       8  &     288  &  4  &  0.22  &  -44  &  2  &   \  8  \\ 
 $0.04 \, \nicefrac{\mu}{\sigma} $  &   3    &       1  &     22   &  3  &  0.12  &  -2  &  1  &   \  4  \\ 
\bottomrule \\[2mm]
\end{tabular}}
\caption{Simulation results for the 
statistical $\G^{\text{fin}}$-arbitrage trading strategy with varying boundaries of the embedded binomial model. Gain p.t. is gain per trade and $\bar N$ equals the maximal $N$ in the simulations.}\label{Tab.II.10}
\end{center}
\end{table}

\begin{table}[t] 
  \begin{center}{\footnotesize 
  \begin{tabular}{crcrcccccc}\toprule 
  $\eta$    & gain pa & median &  VaR${}_{0.95}$ & gain pt  & losses &(mean)  & $\varnothing$ $N$ & \hspace{-3mm} (max) \hspace{-3mm}~ \\\midrule 
0.33  &   192  &  9,600  &    207,000  &  45.2  &  0.15  &  \hspace{-2mm}-25,700  &  4  &   26  \\ 
0.50  &   112  &  3,560  &    77,700  &  26.7  &  0.15  &  -9,600    &  4  &   25  \\ 
0.75  &    97  &  1,430  &    31,200  &  23.7  &  0.14  &  -3,830    &  4  &   26  \\ 
1.00  &    73  &   \phantom{0,}751   &    16,600  &  18.3  &  0.14  &  -2,020  &  4  &   26  \\ 
1.25  &    55  &   \phantom{0,}458   &    10,100  &  13.9  &  0.14  &  -1,230  &  4  &   24  \\ 
2.00  &    34  &   \phantom{0,}163   &   3,600  &  9.15  &  0.13  &  \phantom{0}\,-428  &  3 &   25  \\ 
3.00  &    24  &   \phantom{0,0}67   &    1,410  &  6.82  &  0.12  &  \phantom{0}\,-162  &  3  &   24  \\ 
 \bottomrule \\[2mm]
\end{tabular}}

    \caption{Statistical $\G^{\text{fin}}$-arbitrage trading strategy for varying $\mu$ but with fixed  $\sigma = 0.01$. Gain p.t. is gain per trade and $\bar N$ equals the maximal $N$ in the simulations.}\label{Tab.II.11}
  \end{center}
\end{table}

\begin{table}[t]
\begin{center}
{\footnotesize 
\begin{tabular}{crrrccrccc}\toprule 
  $\eta$    & gain pa & median  & VaR${}_{0.95}$ &  gain pt & losses & (losses) & $\varnothing$ $N$ &  \hspace{-3mm} (max) \hspace{-3mm}~ \\\midrule 
0.75  &   203  &  3,890    &  69,800  &  38  &  0.11    &  -7,810  &  5  &   37  \\ 
1.00  &  71    &   752     &  16,600  &  18  &  0.14    &  -2,020  &  4  &   25  \\ 
1.25  &  28    &   205     &  5,500   &  \ 9   &  0.17  &  -715  &  3  &   18  \\ 
2.00  &  10    &  15       &  494     &  \ 5   &  0.19  &  -67  &  2  &   11  \\ 
3.00  &  4  &    3         &  51      &  \ 3   &  0.09  &  -5  &   1  &   \ 6  \\ 
\bottomrule \\[2mm]
\end{tabular}}

\caption{Statistical $\G^{\text{fin}}$-arbitrage trading strategy for varying $\sigma$ but fixed  $\mu = 0.1$.}\label{Tab.II.12}
\end{center}
\end{table}

\pagebreak

\subsection{Summary on the different strategies}
The previous results confirm statistical $\G$-arbitrage for all three introduced  strategies with respect to the corresponding choices of $\G$. 
Although we observe similar patterns through all strategies like higher gains for smaller boundaries or an decreasing average profit for increasing $\eta$ there are significant differences between the strategies:
\begin{enumerate}[(i)]
 \item  the \emph{average profit} achieved is best for the embedded binomial strategy. 
 \item 	The follow-the-trend strategy and the $\G^{\text{fin}}$-arbitrage strategy show similar behaviour: while showing smaller gains on average, these two strategies have smaller risk. 
\end{enumerate}

\section{Application to market data}

In this section we apply the previously studied approaches to real stock data. 
It is quite remarkable that the positive impression from the simulated data persists on market data.
We study  data from the Kellogg Company and from Deutsche Bank and study the performance of the $\G^{\text{fin}}$-arbitrage
from Chapter \ref{Section.GenStatArb}.

Before we can start with that we have to do some preparations. As we determined the strategies above assuming a positive drift we have to calculate the corresponding strategy for negative drift at first. This is because we will determine the drift in the following examples using real market data and in this case of course there will be both, sections with positive and negative drift as well.\\
We consider $\tilde \Omega=\{\omega_1,\dots,\omega_5\}$ as depicted in Figure \ref{Fig:FinValNegDrift}. Let $S_0=s_0 \in \R_{\ge 0}$ and $S_1$ take the two values $s^+$ and $s^-$ such that
$$ S_1(\omega_1)=S_1(\omega_2)=s^+, \qquad S_1(\omega_3)=S_1(\omega_4)=S_1(\omega_5)=s^-. $$
At time $2$ we have the three possibilities $S_2(\omega_1)=s^{++}$, $S_2(\omega_2)=S_2(\omega_3)=s^{+-}$ and $S_2(\omega_4)=S_2(\omega_5)=s^{--}$. In the cases of $\omega_1,\dots,\omega_3$ the model stops. If, however, we saw two down-movements, the model continues and ends up at time $3$ in the states $S_3(\omega_4)=s^{---}$ or $S_3(\omega_5)=s^{--+}$. 
We assume without loss of generality that $s^+>s_0,$ $s^-<s_0$, and $s^{++}>s^+$, $s^-<s^{+-}<s^+$, and $s^{--}<s^-$ as well as $s^{---}< s^{--} < s^{--+}$, \ie we consider binomial models as presented in Figure \ref{Fig:FinValNegDrift}.
 \begin{figure}[t]
  \centering
  \begin{tikzpicture}[>=stealth, sloped]
    \matrix (tree) [%
      matrix of nodes,
      minimum size=0.5cm,
      column sep=1.2cm,
      row sep=0.4cm,
    ] 
    {
        &                                     & $S_2(\omega_1)=s^{++}$    &\\
        & $S_1(\{\omega_1,\omega_2\}) $             &                     &\\
    $S_0=s $  &                                     & $S_2(\{\omega_2,\omega_3\})$  & $S_3(\omega_5)=s^{--+}$\\
        & $S_1(\{\omega_3,\omega_4,\omega_5\}) $    &                     &\\
        &                                     & $S_2(\{\omega_4,\omega_5\})$    &\\
        &                                     &                     &\\
        &                                     &                               & $S_3(\omega_4)=s^{---}$\\ 
    };
    \draw[-] (tree-3-1)   -- (tree-2-2);%
    \draw[-] (tree-3-1)   -- (tree-4-2);
    
    \draw[-] (tree-2-2)   -- (tree-1-3);%
    \draw[-] (tree-2-2)   -- (tree-3-3);%
 
    \draw[-] (tree-4-2)   -- (tree-3-3);
    \draw[-] (tree-4-2)   -- (tree-5-3);%
   
    \draw[-] (tree-5-3)   -- (tree-3-4);
    \draw[-] (tree-5-3)   -- (tree-7-4);

  \end{tikzpicture}
  \caption{The embedded binomial model for the follow-the-trend strategy with negative drift. The filtration generated by the final states is generated by each $\{\omega_i\}$ for $i=1,4,5$ and $\{\omega_2,\omega_3\}$. We also denote the resulting outcomes by $s=s_0$, $s^+$, $s^-$, \dots and indicate this notation at some places.
  }\label{Fig:FinValNegDrift}
\end{figure}
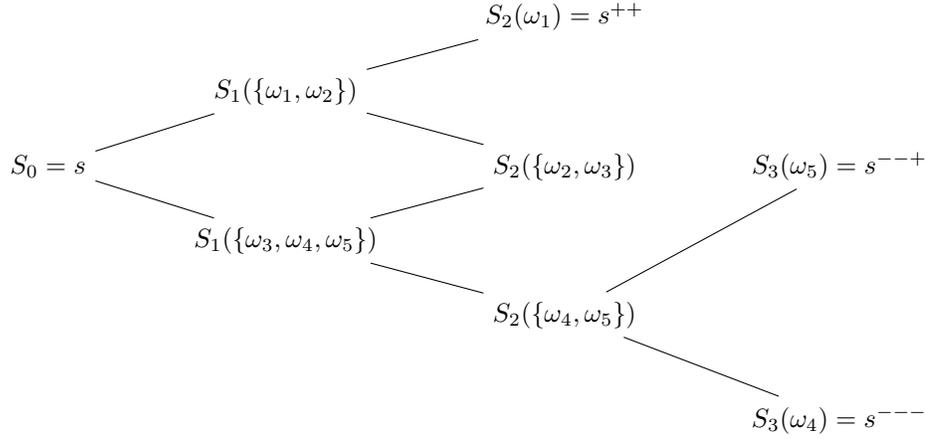

We have a look at statistical arbitrage with respect to the  information system $\G^{\text{fin}}$ defined  by
\begin{equation} \label{Gdich}
  \{S_T > s_0\}=\{\omega_1,\omega_5\}, \hspace{5ex}\text{ and }  \{S_T \leq s_0\}=\{\omega_2,\omega_3,\omega_4\}.
\end{equation}
Analogously to the case with positive drift we add the assumption that $s^{--+} > s_0$. This will lead to a different trading strategy as we detail in the following. 

\begin{proposition}
In the follow-the-trend model with $s^{--+} > s_0$ there is  $\G^{\text{fin}}$-arbitrage if 
\begin{align}\label{45}
 \ \Big( \psi_1 \Delta S_1(\omega_1) + \psi_2^+ \Delta S_2(\omega_1) \Big) P(\omega_1) & \notag\\[1ex]
+\Big( \psi_1 \Delta S_1(\omega_{5}) + \psi_2^- \Delta S_2(\omega_{5}) + \psi_3^{--} \Delta S_3(\omega_{5}) \Big) P(\omega_{5}) & \ge 0, \\[2mm]
 \  \Big(\psi_1 \Delta S_1(\omega_{2}) + \psi_2^+ \Delta S_2(\omega_{2})  \Big) P(\omega_2) & \notag \\[1ex]
 + \Big(\psi_1 \Delta S_1(\omega_{3}) + \psi_2^- \Delta S_2(\omega_{3})  \Big) P(\omega_3) & \notag \\[1ex]
 + \Big(\psi_1 \Delta S_1(\omega_4) + \psi_2^- \Delta S_2(\omega_4) + \psi_3^{--} \Delta S_3(\omega_4) \Big) P(\omega_4) &\geq 0 \label{46}
\end{align}
and, in addition, at least one of the inequalities is strict. 
\end{proposition}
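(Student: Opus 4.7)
The plan is to reduce the claim to a direct verification of Definition \ref{Def.I.1} for the two-atom information system $\G^{\text{fin}}$. Under the running assumption $s^{--+}>s_0$, together with the chain $s^{---}<s^{--}<s^-<s_0$ and $s_0 < s^+ < s^{++}$ (and $s^{+-}\le s_0$ for $\omega_2,\omega_3$), the partition $\{S_T>s_0\}=\{\omega_1,\omega_5\}$ and $\{S_T\le s_0\}=\{\omega_2,\omega_3,\omega_4\}$ described in \eqref{Gdich} is correctly identified, so $\G^{\text{fin}}$ has exactly the two atoms $A=\{\omega_1,\omega_5\}$ and $B=\{\omega_2,\omega_3,\omega_4\}$.

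First I would write $V_T(\bpsi)$ pathwise. Since $\F_0$ is trivial, $\phi_1\equiv\psi_1$; the second-period position takes the value $\psi_2^+$ on the up-branch $\{\omega_1,\omega_2\}$ and $\psi_2^-$ on the down-branch $\{\omega_3,\omega_4,\omega_5\}$; the third period is only nontrivial on $\{\omega_4,\omega_5\}$ where the model continues, with value $\psi_3^{--}$. In the three states $\omega_1,\omega_2,\omega_3$ the model stops at time $2$, hence $\Delta S_3=0$ and the $\psi_3^{--}$-term contributes nothing. This yields the explicit values of $V_T(\omega_j)$ appearing inside the brackets in \eqref{45} and \eqref{46}.

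Second, since $\G^{\text{fin}}$ has only the two atoms $A$ and $B$, the conditional expectation $E_P[V_T\mid\G^{\text{fin}}]$ is constant on each atom and equals $P(A)^{-1}E_P[V_T\Ind_A]$ on $A$ and $P(B)^{-1}E_P[V_T\Ind_B]$ on $B$. Hence condition (i) of Definition \ref{Def.I.1} reduces to $E_P[V_T\Ind_A]\ge 0$ and $E_P[V_T\Ind_B]\ge 0$; substituting the pathwise formulas for $V_T$ yields exactly \eqref{45} and \eqref{46}. For condition (ii) one uses $E_P[V_T]=E_P[V_T\Ind_A]+E_P[V_T\Ind_B]$, so if at least one of the two inequalities is strict, then $E_P[V_T]>0$. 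Admissibility is automatic on the finite probability space.

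There is essentially no hard step; the whole argument is a direct translation of the abstract Definition \ref{Def.I.1} to the explicit tree in Figure \ref{Fig:FinValNegDrift}. The only bookkeeping issue, and the mild obstacle, is to keep track of which $\Delta S_i$ vanishes at which node (so that the $\psi_3^{--}$-term appears only in the rows for $\omega_4$ and $\omega_5$) and to check that the sign assumption $s^{--+}>s_0$ is what places $\omega_5$ into the atom $A$ rather than into $B$. Once this is verified the proof consists of one line of substitution into \eqref{45}--\eqref{46}, exactly parallel to the proof of the corresponding positive-drift proposition.
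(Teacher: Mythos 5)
Your proposal is correct and is exactly the argument the paper intends (it states the analogous positive-drift proposition with ``the proof is immediate'' and omits the proof here): one writes $V_T(\bpsi)$ pathwise on the five states, notes that $\G^{\text{fin}}$ has the two atoms $\{\omega_1,\omega_5\}$ and $\{\omega_2,\omega_3,\omega_4\}$, and observes that conditions (i) and (ii) of Definition \ref{Def.I.1} reduce precisely to \eqref{45}--\eqref{46} with at least one inequality strict. Your bookkeeping of which branch carries $\psi_2^{\pm}$ and where the $\psi_3^{--}$-term appears, as well as the role of $s^{--+}>s_0$ in placing $\omega_5$ in the upper atom, matches the paper's setup.
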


\begin{example}\label{example:FinValStratNegDrift} We consider a special case of \eqref{45}, \eqref{46}: we additionally assume that the first line of Equation \eqref{45} and the last line of Equation \eqref{46} is non-negative. Then, the strategy $\bpsi = (\psi, \psi^+, \psi^-, \psi^{--})$ is a $\G^{\text{fin}}$-arbitrage if 
\begin{align*} 
  \psi_1 \Delta S_1(\omega_1) + \psi_2^+ \Delta S_2(\omega_1) & \ge 0,\\[1ex]
 \Big(\psi_1 \Delta S_1(\omega_{2}) + \psi_2^+ \Delta S_2(\omega_{2})\Big) \frac{P(\omega_2)}{P(\omega_3)}  & \notag\\[1ex]
 + \psi_1 \Delta S_1(\omega_{3}) + \psi_2^- \Delta S_2(\omega_{3}) & \ge 0, \\[2mm]
 \psi_1 \Delta S_1(\omega_4) + \psi_2^- \Delta S_2(\omega_4) + \psi_3^{--} \Delta S_3(\omega_4)  & \ge 0 \\[1ex]
 \psi_1 \Delta S_1(\omega_{5}) + \psi_2^- \Delta S_2(\omega_{5}) + \psi_3^{--} \Delta S_3(\omega_{5}) &\geq 0,
\end{align*}
and at least one inequality is strict. In this regard, define the matrix $A$ by
  $$
     A = 
      \begin{pmatrix}
  \Delta S_1(\omega_1)        & \Delta S_2(\omega_1)    & 0 & 0 \\ 
  \Delta S_1(\omega_3)  + r \Delta S_1(\omega_2)       & r \Delta S_2(\omega_2)        & \Delta S_2(\omega_{3})& 0 \\
   \Delta S_1(\omega_4) & 0  & \Delta S_2(\omega_4)& \Delta S_3(\omega_4) \\
  \Delta S_1(\omega_5)        & 0    & \Delta S_2(\omega_5) & \Delta S_3(\omega_5) 
  \end{pmatrix}$$
  with $r=\tfrac{P(\omega_2)}{P(\omega_3)}.$
  If $A$ is invertible, for any $\alpha \ge 0$, the strategy $\bpsi$ given by
  $$ \psi_3^{--} = \frac{1-\alpha}{\Delta S_3(\omega_4)-\Delta S_3(\omega_5)} $$
  and 
  \begin{align*}
      \left( \begin{matrix}
         \psi_1 \\
         \psi_2^+ \\
         \psi_2^-
      \end{matrix}\right)
      &= \bphi - \Delta S_3(\omega_4) \psi_3^{--} \bgamma
\end{align*}
is a $\G^{\text{fin}}$-arbitrage. 
Here $\bphi$ is the strategy from Lemma \ref{Lem.II.3} and
\begin{align*}
\bgamma = \frac{1}{D}
       \left( \begin{matrix}
          \Delta S_2(\omega_1) \Delta S_2(\omega_3) \\
          -\Delta S_1(\omega_1) \Delta S_2(\omega_3) \\
          - \Delta S_2(\omega_1) \big( r \Delta S_1(\omega_2) + \Delta S_1(\omega_3)\big) + r \Delta S_1(\omega_1) \Delta S_2(\omega_2)
       \end{matrix}\right),
\end{align*}
with
\begin{align*}
    D   & = \Big(r \Delta S_1(\omega_1) \Delta S_2(\omega_2) - \big(\Delta S_1(\omega_3)    +r \Delta S_1(\omega_2) \big) \Delta S_2(\omega_1)  \Big) \Delta S_2(\omega_4) \\
        &+ \Delta S_1(\omega_3) \Delta S_2(\omega_1) \Delta S_2(\omega_{3}).
\end{align*}
\end{example}
The approach now is to simulate the trading with a dynamic strategy, i.\,e. whenever the data leads to a positive drift we will use the strategy from Example \ref{example:FinValStrat} while for  a negative drift we will use the strategy described above.

\begin{example}[Kellogg Company]
In Figure \ref{Fig.II.10} we depict historical stock prices of the Kellogg Company from January 1, 2000 to December 31, 2017. 
\begin{figure}[t]
  \centering
  \includegraphics[height=7cm, width = 10.0cm]{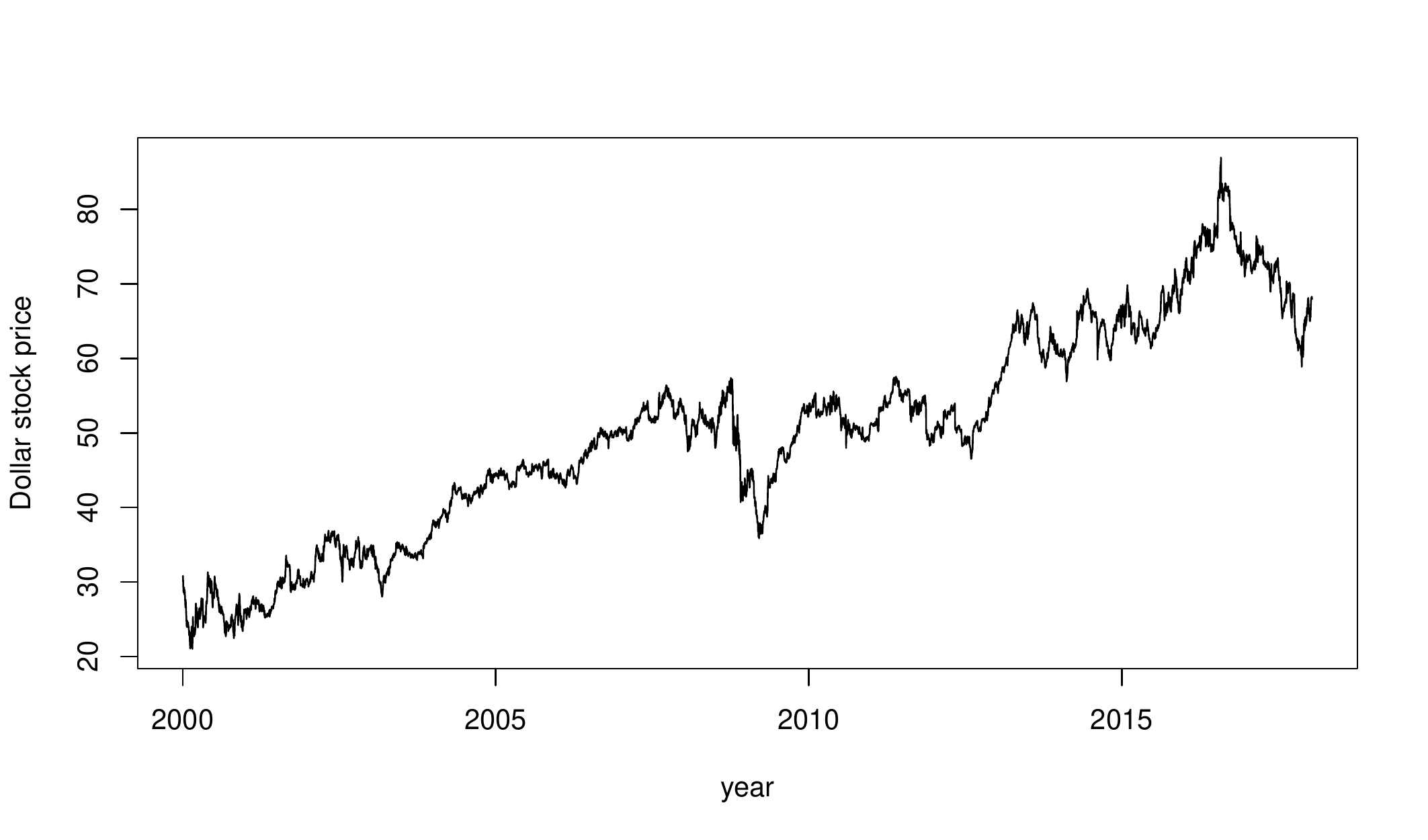}
  \caption{Daily closing prices of the shares of the Kellogg Company during January 1, 2000 and December 31, 2017. Prices are presented in US-Dollar.}\label{Fig.II.10}
\end{figure}
Trading strategies are used by implementing the strategies from Example \ref{example:FinValStrat} and \ref{example:FinValStratNegDrift} where the parameters of the geometric Brownian motion are estimated by the maximum-likelihood estimates from three years directly before the trading period (which is a sliding-window approach with a window length of 3 years).
Table \ref{Tab.II.13} shows the achieved gains for different boundary values. The gains are normalized to one traded asset to improve comparability. 
\begin{table}[t]
\begin{center}
\begin{tabular}{crr}\toprule
 boundary     & GPTA: Kellog &  Deutsche Bank\\\midrule
 $0.05 \, S_{\sigma_0^{i}}$ & 22.26   & 69.68   \\
 $0.10 \, S_{\sigma_0^{i}}$ & 147.32  & 10.09   \\
 $0.15 \, S_{\sigma_0^{i}}$ & 155.65  & 4.03    \\
 $0.20 \, S_{\sigma_0^{i}}$ & 0.05    & 10.31   \\
 $0.25 \, S_{\sigma_0^{i}}$ & 0.11    & 4.82    \\\bottomrule \\
\end{tabular}
\caption{
\emph{Gains per traded assets (GPTA)}
 for the $\G^{\text{fin}}$-arbitrage, applied to historical stock data of the Kellogg Company and Deutsche Bank AG from the year 2000 to 2017. Drift and volatility were estimated by maximum-likelihood methods with a rolling window of length 3 years.}\label{Tab.II.13}
\end{center}
\end{table}
The results confirm the findings from the previous section in the sense that we see gains for all chosen boundaries. If the boundary is chosen too small or too large the trading strategy does, however, not perform optimally. 
\end{example}

\begin{example}[Deutsche Bank]
As a second example, we apply our methodology to stock prices of Deutsche Bank from January 1, 2000 to December 31, 2017. In contrast to the previous example, we observe higher volatility and also large losses in the observation period. 
\begin{figure}[t]
  \centering
  \includegraphics[height=7cm, width = 10.0cm]{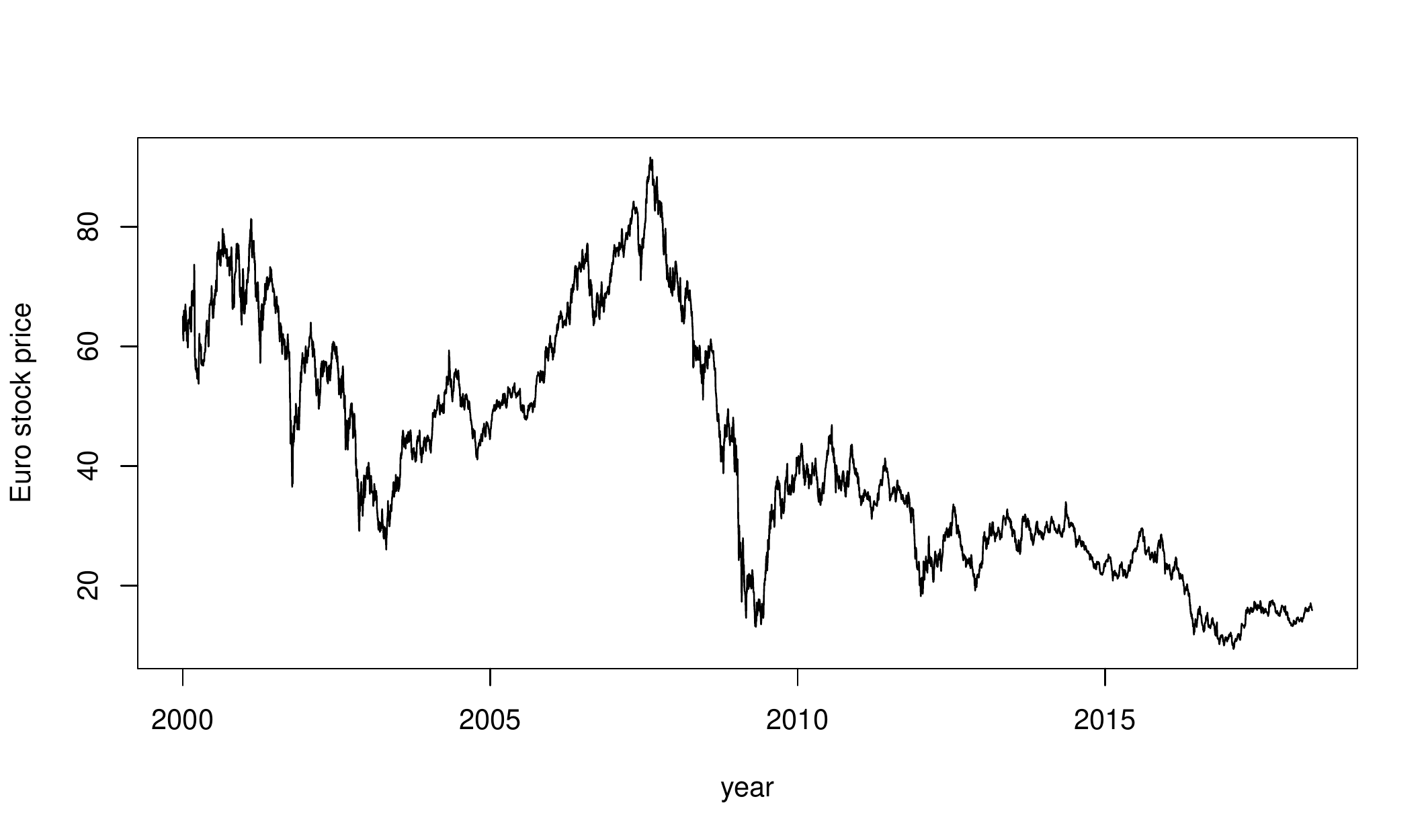}
  \caption{Daily closing prices of the shares of the Deutsche Bank AG during January 1, 2000 and December 31, 2017. Prices are presented in Euro.}\label{Fig.II.9}
\end{figure}
We proceed as for the Kellogg's example and the results are shown in Table \ref{Tab.II.13}.
Due to the present downward trend in the stock evolution the $\G^{\text{fin}}$-strategy is expected to perform as the embedded binomial strategy. We recognize positive gains through all boundaries. 
\end{example}

\section{Conclusion}

We introduce the concept of statistical $\G$-arbitrage and give a characterization of it. Moreover, we 
examine various profitable strategies both on simulated and on market data.
The choice of the information system $\G$ is either motivated naturally by the aim to generate profitable
strategies in average over certain pre-determined scenarios or, alternatively, it can be used as a technical
tool to generate profitable strategies.%

Our data experiments show that the analysed strategies show  a  good  performance 
both on simulated data and on market data.

\begin{appendix}

\section{Proofs}
\begin{proof}[Proof of Lemma \ref{CElemma1}]
 Note that  equations (\ref{CEeq1}) - (\ref{CEeq4}) reads $A\xi \geq 0$  with
 \begin{equation*}
 A =
  \begin{pmatrix}
   \Delta S_1(\omega_2)         & \Delta S_2(\omega_2)    & 0 \\
   \Delta S_1(\omega_6)         & 0       & \Delta S_2(\omega_6) \\
   \Delta S_1(\omega_1) \nu_1 + \Delta S_1(\omega_4)  & \Delta S_2(\omega_1) \nu_1  & \Delta S_2(\omega_4)\\
   \Delta S_1(\omega_3) \nu_2 + \Delta S_1(\omega_5)  & \Delta S_2(\omega_3) \nu_2  & \Delta S_2(\omega_5)
  \end{pmatrix}.
 \end{equation*}
 We do a change of basis for the mapping $A$ and substitute the vector in the first column. This leads to a matrix $\tilde{A}$,    
 \begin{equation*}
 \tilde{A} =
  \begin{pmatrix}
   0  & \Delta S_2(\omega_2)    & 0 \\
   0  & 0       & \Delta S_2(\omega_6) \\
   B_1  & \Delta S_2(\omega_1) \nu_1  & \Delta S_2(\omega_4)\\
   B_2  & \Delta S_2(\omega_3) \nu_2  & \Delta S_2(\omega_5)
  \end{pmatrix}
 \end{equation*}
 where
 \begin{align*}
  B_1 = \nu_1 \left(\Delta S_1(\omega_1) - \Delta S_2(\omega_1) \frac{\Delta S_1(\omega_2)}{\Delta S_2(\omega_2)} \right) + \Delta S_1(\omega_4)- \Delta S_2(\omega_4) \frac{\Delta S_1(\omega_6)}{\Delta S_2(\omega_6)}\\
  B_2 = \nu_2 \left(\Delta S_1(\omega_3) - \Delta S_2(\omega_3) \frac{\Delta S_1(\omega_2)}{\Delta S_2(\omega_2)} \right) + \Delta S_1(\omega_5)- \Delta S_2(\omega_5) \frac{\Delta S_1(\omega_6)}{\Delta S_2(\omega_6)}.
 \end{align*}
 We denote by $\Im(\tilde A)$ the image of a mapping $\tilde A$. There exists statistical arbitrage if 
 $$ \Im(\tilde{A}) \cap \R^4_{>0} \neq \emptyset. $$ 
 The linear subspace spanned by $\tilde{A}$ is given by
 \begin{equation}\label{CEeq5.1}
  \alpha
  \begin{pmatrix}
   0 \\ 0 \\ B_1 \\ B_2
  \end{pmatrix}
  + \beta
  \begin{pmatrix}
   \Delta S_2(\omega_2) \\ 0 \\ \Delta S_2(\omega_1) \nu_1 \\ \Delta S_2(\omega_3) \nu_2
  \end{pmatrix}
  + \gamma 
  \begin{pmatrix}
   0 \\ \Delta S_2(\omega_6) \\ \Delta S_2(\omega_4)\\ \Delta S_2(\omega_5)
  \end{pmatrix}
  ,
 \end{equation}
  with $\alpha, \beta, \gamma \in \R$. Assume this space meets $\R^4_{\geq 0}$. 
  Then it follows from the condition $\beta \Delta S_2(\omega_2) = \beta (s_2^{++}-s_1^+) \geq 0$ that $\beta \geq 0$. Similarily, $\gamma \leq 0$ because $\Delta S_2(\omega_6) = s_2^{--}-s_1^-< 0$. Summing up the third and fourth coordinate from (\ref{CEeq5.1}) we get
  \begin{align}\label{CEeq6}
   &\alpha \left( \nu_1 \Big(\Delta S_1(\omega_1) - \Delta S_2(\omega_1) \frac{\Delta S_1(\omega_2)}{\Delta S_2(\omega_2)} \Big) + \nu_2 \Big(\Delta S_1(\omega_3) - \Delta S_2(\omega_3) \frac{\Delta S_1(\omega_2)}{\Delta S_2(\omega_2)} \Big)\right. \nonumber \\ 
   &\hspace*{3ex}\left. + \frac{\Delta S_1(\omega_6)}{\Delta S_2(\omega_6)} \Big(-\Delta S_2(\omega_4) - \Delta S_2(\omega_5)\Big) + \Delta S_1(\omega_4) + \Delta S_1(\omega_5) \right) \\
   &\hspace*{3ex} + \gamma \left(\Delta S_2(\omega_4) + \Delta S_2(\omega_5) \right) \nonumber \\
   &\hspace*{3ex} +\beta \left(\Delta S_2(\omega_1) \nu_1 + \Delta S_2(\omega_3) \nu_2 \right).   \nonumber
  \end{align}
Choosing $\nu_1 = -\frac{\Delta S_2(\omega_3)}{\Delta S_2(\omega_1)} \nu_2$,  
$$ \beta \left(\Delta S_2(\omega_1) \nu_1 + \Delta S_2(\omega_3) \nu_2 \right)  = 0 $$
such that the last term in the above equation vanishes. 
  As we assumed that the space spanned by (\ref{CEeq5.1}) meets $\R^4_{\geq 0}$ it must also hold true that (\ref{CEeq6}) $\geq 0$. For 
  \begin{equation*}
   \nu_2 < \frac{\frac{\Delta S_1(\omega_6)}{\Delta S_2(\omega_6)} (\Delta S_2(\omega_4) + \Delta S_2(\omega_5)) - \Delta S_1(\omega_4) - \Delta S_1(\omega_5)}{\Delta S_1(\omega_3) - \Delta S_1(\omega_1)\frac{\Delta S_2(\omega_3)}{\Delta S_2(\omega_1)}} = \Gamma_2
  \end{equation*}
 the coefficient of $\alpha$ in (\ref{CEeq6}) is negative. Together with $\gamma \leq 0$ and $\Delta S_2 (\omega_4), \Delta S_2 (\omega_5) > 0$ by assumption this choice of $\nu_2$ results in $\alpha \leq 0$ in order to obtain $(\ref{CEeq6}) \geq 0 $. On the other hand, if we claim
 \begin{equation*}
  \nu_2 > \frac{-\Delta S_1(\omega_5) + \Delta S_2(\omega_5) \frac{\Delta S_1(\omega_6)}{\Delta S_2(\omega_6)}}{\Delta S_1(\omega_3) - \Delta S_2(\omega_3)\frac{\Delta S_1(\omega_2)}{\Delta S_2(\omega_2)}} = \Gamma_1  
 \end{equation*}
  it follows that $B_2 > 0$ and it results for the fourth coordinate of (\ref{CEeq5.1}) that
  \begin{equation*}
   \alpha B_2 + \beta \Delta S_2(\omega_3) \nu_2 + \gamma \Delta S_2(\omega_5) \leq 0.
  \end{equation*}
  Hence $\Im(\tilde{A}) \cap \R^4_{>0} = \emptyset$. 
  It remains to prove that 
  \begin{itemize}
   \item [(i)] $\Gamma_1 < \Gamma_2$ and
   \item [(ii)] there is no statistical arbitrage for $\nu_2 = \Gamma_2$.
  \end{itemize}
  The statements (i) and (ii) are  verified by analogous calculations  which concludes the proof.
\end{proof}

\begin{proof}[Proof of Proposition \ref{Lem.II.1}]
  ``$\Rightarrow$'' If $\det(A) \neq 0$ we choose for example $\xi := A^{-1}1$ and have found an arbitrage opportunity.\par
  ``$\Leftarrow$'' On the other hand, if $\det(A) = 0$ there still might be an arbitrage opportunity if the image of $A$ intersects with the positive subspace of $\R^3$, i.e. if $\Im(A) \cap \R^3_{>0} \neq \emptyset$. 
 To show that this is not the case we change the basis for the mapping $A$ and substitute the vector in the first column. This leads to a matrix $\tilde{A}$,    
 \begin{equation*}
 \tilde{A} =
  \begin{pmatrix}
   0  & \Delta S_2(\omega_1)    & 0 \\
   0  & 0       & \Delta S_2(\omega_4) \\
   B  & \Delta S_2(\omega_2) q  & \Delta S_2(\omega_3)
  \end{pmatrix},
 \end{equation*}
 where 
 \begin{equation*}
 B = q\big(\Delta S_1(\omega_1) - \frac{\Delta S_1(\omega_1)}{\Delta S_2(\omega_1)} \Delta S_2(\omega_2)\big) + \Delta S_1(\omega_3) - \frac{\Delta S_1(\omega_3)}{\Delta S_2(\omega_4)} \Delta S_2(\omega_3).
 \end{equation*}
Calculating $\det(A)$ we see that $\det(A) = 0$ is equivalent to
\begin{align}\label{Eq.II.3.9}
\begin{aligned}
 0 = &-\Delta S_2(\omega_1)\big(\Delta S_1(\omega_4) \Delta S_2(\omega_3) - \big(\Delta S_1(\omega_3) + q \Delta S_1(\omega_2)\big)\Delta S_2(\omega_4)\big) \\
  &  - q \Delta S_1(\omega_1)\Delta S_2(\omega_2)\Delta S_2(\omega_4).\end{aligned}
\end{align}
In the recombining binomial model this reduces to
\begin{equation*}
 0 = q \Delta S_1(\omega_1) \Big(1 - \frac{\Delta S_2(\omega_2)}{\Delta S_2(\omega_1)} \Big) + \Delta S_1(\omega_3) \Big(1 - \frac{\Delta S_2(\omega_3)}{\Delta S_2(\omega_4)} \Big)
\end{equation*}
which is equivalent to $B = 0$. In this case the linear subspace spanned by $\tilde{A}$ is given by
 \begin{equation}\label{CEeq5}
  \alpha
  \begin{pmatrix}
   \Delta S_2(\omega_1) \\ 0 \\ q \Delta S_2(\omega_2)
  \end{pmatrix}
  + \beta
  \begin{pmatrix}
   0 \\ \Delta S_2(\omega_4) \\ \Delta S_2(\omega_3) 
  \end{pmatrix}
  ,
 \end{equation}
 with $\alpha, \, \beta \in \R$. Because $\Delta S_2(\omega_1) > 0$ we need $\alpha \geq 0$ to have arbitrage opportunities. Similar we need to have $\beta \leq 0$ because of $\Delta S_2(\omega_4) < 0$ by assumption. But, as $\Delta S_2(\omega_2) < 0$ and $\Delta S_2(\omega_3)>0$, we obtain for the third coordinate that 
 \begin{equation*}
  \alpha q \Delta S_2(\omega_2) + \beta \Delta S_2(\omega_3) \leq 0
 \end{equation*}
  and hence $\Im(A) \cap \R^3_{>0} = \emptyset$, which concludes the proof. 
\end{proof}

\end{appendix}

\end{document}